\documentclass{article}
\usepackage{longtable}
\usepackage{PRIMEarxiv}
\usepackage[utf8]{inputenc}
\usepackage[T1]{fontenc}
\usepackage{hyperref}
\usepackage{url}
\usepackage{booktabs}
\usepackage{amsfonts}
\usepackage{nicefrac}
\usepackage{microtype}
\usepackage{lipsum}
\usepackage{fancyhdr}
\usepackage{graphicx}
\graphicspath{{media/}}

\newcommand{\ubar}[1]{\text{\b{$#1$}}}

\usepackage{amsmath}
\usepackage{amsthm}
\newtheorem{theorem}{Theorem}
\newtheorem{corollary}[theorem]{Corollary}
\newtheorem{proposition}[theorem]{Proposition}

\theoremstyle{definition}
\newtheorem{definition}{Definition}
\newtheorem{assumption}{Assumption}

\usepackage{enumitem}
\usepackage{float}
\usepackage{algorithm}
\usepackage{algcompatible}

\usepackage[capitalise]{cleveref}
\crefname{assumption}{Assumption}{Assumptions}
\Crefname{assumption}{Assumption}{Assumptions}

\crefformat{enumi}{Assumption~#2#1#3}
\Crefformat{enumi}{Assumption~#2#1#3}

\title{Root Finding and Metamodeling for Rapid and Robust Computer Model Calibration}

\author{
  Yongseok Jeon,\quad Sara Shashaani \\
  Edward P. Fitts Department of Industrial and Systems Engineering \\
  North Carolina State University \\
  Raleigh, NC 27695 \\
  \texttt{\{yjeon, sshasha2\}@ncsu.edu}
}

\begin{document}

\maketitle

\begin{abstract}
We concern computer model calibration problem where the goal is to find the parameters that minimize the discrepancy between the multivariate real-world and computer model outputs. We propose to solve an approximation using signed residuals that enables a root finding approach and an accelerated search. We characterize the distance of the solutions to the approximation from the solutions of the original problem for the strongly-convex objective functions, showing that it depends on variability of the signed residuals across output dimensions, as wells as their variance and covariance. We develop a metamodel-based root finding framework under kriging and stochastic kriging that is augmented with a sequential search space reduction. We derive three new acquisition functions for finding roots of the approximate problem along with their derivatives usable by first-order solvers. Compared to kriging, the stochastic kriging metamodels will incorporate noisiness in the observations suggesting more robust solutions. We also analyze the case where a root may not exist. Our analysis of the asymptotic behavior in this context show that, since existence of roots in the approximation problem may not be known a priori, using new acquisition functions will not hurt the outcome. Numerical experiments on data-driven and physics-based examples demonstrate significant computational gains over standard calibration approaches.
\end{abstract}

\keywords{sample average approximation \and stochastic kriging \and acquisition functions \and Bayesian optimization}

\section{Introduction}\label{siam:sec:introduction}
With the recent advances in computational power and data acquisition technologies, computer models have become indispensable tools for simulation \cite{grieves2016digital}, prediction \cite{aivaliotis2019use}, and decision-support tools under uncertainty \cite{chaudhuri2023predictive}. However, the parameters that govern the behavior of a computer model often need to be carefully estimated to ensure accurate representation of the real-world system. This is commonly referred to as the \textit{model calibration problem}. Calibration involves finding the \emph{optimal} set of parameters that enable  the model outputs to match the observed outputs~\cite{gu2018scaled}. Significant progress has been made through various frameworks, including Bayesian calibration  \cite{kennedy2001bayesian, plumlee2017bayesian}, ordinary least squares \cite{tuo2015efficient}, maximum likelihood estimation \cite{sung2024review} or surrogate-based optimization \cite{zhan2022calibrating}.

Despite these advances, several practical challenges remain. When the observed dataset is large, calibration becomes computationally demanding. Moreover, due to the inherent stochasticity of the real-world systems, the calibrated parameters may overfit the observed data and perform poorly on unseen data. High-dimensional input and output spaces further complicate the problem by masking the influence of calibration parameters. Recent efforts to mitigate these challenges include subsampling strategies~\cite{lv2023fast}, parallel computing approaches~\cite{tang2024parallel}, and low-rank deep learning methods~\cite{bhatnagar2022computer}.

Building on this ongoing effort, the first objective of this paper is to accelerate the calibration process by means of approximations and reducing the search space. 
We propose to solve an approximation to the original discrepancy minimization objective function using a signed discrepancy metric. We show that finding the roots of this approximation, that may not be far from the true optimal calibration parameters under standard assumptions, can be more efficient. 
The second objective is to extend our proposed framework to noisy observations, where incorporating uncertainty promotes a more robust calibration.

\subsection{Problem Statement}
Let $\mathcal{D}_n = \{(x_j, y_j)\}_{j=1}^{n}$ be the dataset available up to time $n$, where $x_j \in \mathcal{X} \subset \mathbb{R}^{m_x}$ is the input variate that could  represent external factors such as environmental conditions and $y_j \in \mathcal{Y} \subset \mathbb{R}^{m_y}$ is the real-world system response. Throughout this work, we consider the computer model as a black-box function $h: \mathcal{X} \times \Theta \to \mathcal{Y}$ that maps the inputs to outputs with a parameter $\theta \in \Theta \subset \mathbb{R}^{m_\theta}$, where $\Theta$ is compact. Define the (random) residual function $R\in\mathbb{R}^{m_y}$ as $R(\theta) := Y - h(X;\theta)$ (removing $X$ and $Y$ from the input arguments for ease of exposition). 
Assuming that the system behavior up to time $n$ is in steady state, meaning that $\{(x_j, y_j)\}_{j=1}^{n}$ are independent and identically distributed samples from the joint distribution of $(X,Y)$, calibration seeks 
\begin{equation}\label{siam:eq:p} \tag{P}
    \min_{\theta \in \Theta}
    \bar{f}_{n}(\theta) 
    :=
    \frac{1}{m_y}\frac{1}{n}\sum_{j=1}^{n}\|r_j(\theta)\|_2^2
    ,
\end{equation}
where $r_j(\theta):= y_j - h(x_j;\theta)$ denotes the $j$-th residual of $\mathcal{D}_n$ and 
let $\bar{\Theta}_n^* := \arg\min_{\theta \in \Theta} \bar{f}_n(\theta)$ denote the set of optimal solutions.
The objective function in \eqref{siam:eq:p} is also referred to as empirical risk minimization---a sample average approximation of the problem,
\begin{equation}\label{siam:eq:sp} \tag{SP}
    \min_{\theta \in \Theta} \bar{f}(\theta)
    := 
    \frac{1}{m_y}\mathbb{E}\left[\|R(\theta)\|_2^2\right]
    ,
\end{equation} where the expectation is with respect to the unknown joint distribution of $(X,Y)$, 
and let $\bar{\Theta}^* := \arg\min_{\theta \in \Theta} \bar{f}(\theta)$ denote the set of optimal solutions.
Problem ~\eqref{siam:eq:sp} is a standard stochastic optimization problem that has commonly appeared for model calibration across many studies \cite{yuan2012calibration, xu2017model, liu2022parameter, cui2020calibrated}. There are several well-established solution methods for \eqref{siam:eq:sp} including stochastic gradient \cite{robbins1951stochastic}, stochastic trust-region \cite{shashaani2018astro}, and surrogate-based methods \cite{ankenman2010stochastic}. 

\subsection{Proposed Approximation and Root finding}
Consider an approximation of \eqref{siam:eq:sp} as
\begin{equation}\label{siam:eq:spaa}\tag{SAP}
    \min_{\theta \in \Theta} f(\theta)
    := 
    \frac{1}{m_y^2}
    \left(\mathbb{E}\left[
      \boldsymbol{1}^\top R(\theta)  
    \right]\right)^2
    = 
    \left(\mathbb{E}\left[
        \frac{1}{m_y}\sum_{i=1}^{m_y}[R(\theta)]_i
    \right]\right)^2,
\end{equation}
where the calibration objective is instead defined by aggregating the residuals across the output dimensions into a scalar, 
and let $\Theta^* := \arg\min_{\theta \in \Theta} f(\theta)$ denote the set of optimal solutions.
(Throughout the paper, for any vector $v\in\mathbb{R}^d$, we denote its $i$-th component by $[v]_i$.) This formulation satisfies the following relationship, 
\begin{equation*}\label{siam:eq:CS-Tr-gap}
    \mathbb{E}\left[\frac{1}{m_y}\|R(\theta)\|_2^2\right]
    \geq 
    \mathbb{E}\left[\left(\frac{1}{m_y}\sum_{i=1}^{m_y}[R(\theta)]_i\right)^2\right]
    \geq 
    \left(\mathbb{E}\left[\frac{1}{m_y}\sum_{i=1}^{m_y}[R(\theta)]_i\right]\right)^2,
\end{equation*}
where the first inequality follows from the Cauchy--Schwarz inequality, and the second inequality follows from the Jensen's inequality. Specifically, from the Cauchy--Schwarz inequality, 
$\left(\sum_{i=1}^{m_y}a_i b_i\right)^2 
    \leq
    \left(\sum_{i=1}^{m_y} a_i^2\right)
    \left(\sum_{i=1}^{m_y} b_i^2\right),$
and letting $a_i = 1$ and $b_i = [R(\theta)]_i$ for all $i = 1, \dots, m_y$, we obtain
$\left(\sum_{i=1}^{m_y} [R(\theta)]_i\right)^2 
    \leq
    m_y
    \left(\sum_{i=1}^{m_y} [R(\theta)]_i^2\right),$
which yields the first inequality after normalization and taking expectations. Therefore, solving the surrogate objective \eqref{siam:eq:spaa} amounts to minimizing the lower bound of the original problem \eqref{siam:eq:sp}. We propose to solve this approximation rather than the original problem which can be done more effectively with root finding for the following reason. To see this, define the averaged residual scalar
\begin{equation*}
    S(\theta) 
    := 
    \frac{1}{m_y}\boldsymbol{1}^\top R(\theta) 
    = 
    \frac{1}{m_y}\sum_{i=1}^{m_y}[R(\theta)]_i
    ,
\end{equation*}
which aggregates the residual vector $R(\theta)$ across the output dimensions into a single signed quantity. With this, $f(\theta) = (\mathbb{E}[S(\theta)])^2 \geq 0$, and we consider
\begin{equation*}\label{siam:eq:spaa-rf}\tag{RF}
    \tilde{f}(\theta)
    :=
    \mathbb{E}[S(\theta)]
    ,
    \quad
    \tilde{\Theta}^* 
    := 
    \left\{\theta \in \Theta \mid \tilde{f}(\theta)=0\right\}.
\end{equation*}
For a nonempty $\tilde{\Theta}^*$ (\Cref{siam:assump:rootexist-RF}), $f(\theta)$ is minimized at 0, and 
hence $\Theta^* = \tilde{\Theta}^*$,
thus solving \eqref{siam:eq:spaa} is equivalent to solving \eqref{siam:eq:spaa-rf}. Solving \eqref{siam:eq:spaa-rf} in the presence of the sign information allows us to bypass large portions of the parameter space and focus only on the regions where the sign is expected to change (see~\Cref{siam:fig:metamodel-min-vs-rf}).
This progressive search space reduction is illustrated in \Cref{siam:fig:rss,siam:fig:rss_stochastic}. Under the proposed framework, one can expect a more guided sampling trace during the search, as demonstrated in \Cref{siam:fig:2d_sampling_jrace}.

We primarily treat the root finding formulation of \eqref{siam:eq:spaa} as a surrogate to the original problem \eqref{siam:eq:sp}. 
Note that the approximate solution becomes tight in the special case of $m_y = 1$. In this case, $\mathbb{E}[(R(\theta))^2] = (\mathbb{E}[R(\theta)])^2 + \mathrm{Var}(R(\theta))$, so that the gap reduces to the variance term. In this setting, if either
\begin{enumerate}
    \item[(a)] the parameter that minimizes the problem \eqref{siam:eq:spaa} also minimizes the variance,  i.e., $\theta^* = \arg \min_{\theta \in \Theta}(\mathbb{E}[R(\theta)])^2 = \arg \min_{\theta \in \Theta}\mathrm{Var}(R(\theta)),$
    an outcome that arises when systems with optimal parameters stabilize and exhibit reduced variability~\cite{zhu2018simulation, jacobson2021exploring}, or
    
    \item[(b)] the variance satisfies the relation $\mathrm{Var}\left(R(\theta)\right)\propto \frac{d}{d\theta}\mathbb{E}\left[R(\theta)\right]$, which is known as state-dependent variance property where the noise diminishes near stationary points,
\end{enumerate}
then solving \eqref{siam:eq:spaa} also solves \eqref{siam:eq:sp}, i.e., 
$\arg\min_{\theta\in \Theta} \left(\mathbb{E}\left[R(\theta)\right]\right)^2
    =
    \arg\min_{\theta\in \Theta} \mathbb{E}[(R(\theta))^2]
    .$
Beyond this special case, solving the approximation problem, in general, leads to a different solution. This naturally leads to the question: how accurate is the resulting approximate solution? We address this question in \Cref{siam:sec:sol-acc-rf}; in summary, we show that the solution bound is characterized by the residual properties, and discuss the stochastic treatment to further refine this accuracy.

\subsection{Metamodeling and Root finding} 
When solving \eqref{siam:eq:spaa-rf}, 
we employ a metamodeling approach (e.g., kriging or GPR), and solve the problem via \textit{Bayesian Optimization}, but with adjusted acquisition functions for the root finding scheme. One could alternatively use other approaches for root finding with a stochastic derivative-free optimization (DFO) algorithm such as \cite{shashaani2018astro, larson2019derivative}. An advantage of using kriging is the ability to extend, with some modifications, to  stochastic kriging~\cite{staum2009better} in order to take uncertainty in the observed residuals into account. In the stochastic setting, the metamodel's output remains random even at the evaluated parameter values, allowing us to seek solutions that are optimal in expectation rather than only for a fixed data sample. This stochastic treatment also provides an opportunity to further tighten the solution accuracy bound analyzed in \Cref{siam:sec:sol-acc-rf}. 
We demonstrate through several examples that stochastic kriging finds better solutions
to the original problem than deterministic kriging under the same computational budget.

Additionally, the sample average approximation of \eqref{siam:eq:spaa-rf} may not have a root--where all aggregated residuals are either positive or negative over the parameter space--due to the finite sample variability or misspecification of the computer model.
We show in \Cref{siam:sec:rootless} that the proposed root finding framework naturally retrieves the optimizer of this sample average even in the absence of a root.

\subsection{Contributions and Paper Organization} Our contributions in this paper are threefold. 
First, we analyze the solution accuracy of the root finding approximation and characterize its error bounds and behavior compared to the original problem. We also show that a stochastic approach that accounts for the finite-sample error can further improve the solution accuracy. Second, we develop a metamodeling framework tailored to the root finding, introducing new acquisition functions and a search space reduction technique, and characterize its asymptotic behavior in the absence of a root. Lastly, we extend the framework to stochastic setting to enhance robustness against noisy observations. We discuss that stochastic kriging can be used as the stochastic approach to further improve the solution accuracy.

The paper is organized as follows. \Cref{siam:sec:sol-acc-rf} analyzes the solution accuracy of the root finding approximation and derives error bounds with convergence analysis. \Cref{siam:sec:det-metamodeling} reviews kriging-based metamodeling and standard acquisition strategies for deterministic calibration. \Cref{siam:sec:rf-metamodeling} introduces the proposed root finding framework with modified acquisition strategies and acceleration techniques, and characterizes the asymptotic behavior of these strategies when no root exists.
\Cref{siam:sec:sto-metamodeling} extends the root finding framework to stochastic kriging for robust calibration. \Cref{siam:sec:experiment} presents the numerical experiments and concludes in \Cref{siam:sec:conclusion} with several remarks. 

\subsection*{List of Notations}
\begin{longtable}{r@{\;$\hat{=}$\;}l}
    $m_x, m_y, m_\theta$ & dimensions of input, output, and parameter space \\
    $X,Y$ & random input and output variates\\
    $\mathcal{D}_n = \{(x_j, y_j)\}_{j=1}^n$ & observed dataset of size $n$ \\
    $h:\mathcal{X} \times \Theta \to \mathcal{Y}$ & computer model output \\
    $\theta \in \Theta \subset \mathbb{R}^{m_\theta}$ & calibration parameter \\
    $R(\theta) = Y - h(X;\theta) \in \mathbb{R}^{m_y}$ & residual vector\\
    $r_j(\theta) = y_j - h(x_j;\theta)$ & $j$-th residual of $\mathcal{D}_n$ \\
    $S(\theta) = \frac{1}{m_y}\boldsymbol{1}^\top R(\theta)$ & averaged residual scalar \\
    $\bar{f}(\theta), f(\theta), \tilde{f}(\theta)$ & objectives of \eqref{siam:eq:sp}, \eqref{siam:eq:spaa}, \eqref{siam:eq:spaa-rf} \\
    $\bar{f}_n(\theta)$, $f_n(\theta)$, $\tilde{f}_n(\theta)$ & sample average approximations of \eqref{siam:eq:sp}, \eqref{siam:eq:spaa} and \eqref{siam:eq:spaa-rf}\\
    $\delta(\theta)$ & spatial variability; see \eqref{siam:def:spatial-variability}\\
    $\bar{\Theta}^*, \Theta^*$ & solution sets of \eqref{siam:eq:sp} and \eqref{siam:eq:spaa}\\
    $\Theta_n^*, \check{\Theta}_n^*$ & deterministic and stochastic approximate solution sets for $\Theta^*$\\
    $\Theta^t = \{\theta_1,\dots,\theta_{p+t}\}$ & set of configurations at iteration $t$ with $p$ initial design points
\end{longtable}

\section{Solution Accuracy of the Surrogate Approximation}\label{siam:sec:sol-acc-rf}
In this section, we analyze the solution accuracy of the surrogate approximation \eqref{siam:eq:spaa} relative to the original problem \eqref{siam:eq:sp}. We characterize the gap between the solutions through the structural properties of \eqref{siam:eq:sp}, the spatial variability of the residuals, the variance of the aggregated residuals, and an estimation error for solving \eqref{siam:eq:spaa} with a finite sample $\mathcal{D}_n$. This gap becomes small when the residuals are approximately uniform across output dimensions and have low variability across observations, and when the function \eqref{siam:eq:sp} increases more rapidly away from its minimizer $\bar{\theta}^*$. Additionally, this gap can be further tightened by accounting for the finite sample uncertainty.

We begin by characterizing the approximation error arising from replacing $\bar{f}(\theta)$ with its 
minorizing function $f(\theta)$. From \eqref{siam:eq:sp} and \eqref{siam:eq:spaa}, we can write 
\begin{align*}
    \bar{f}(\theta) - f(\theta)
    &= 
    \mathbb{E}\left[\frac{1}{m_y}\|R(\theta)\|_2^2\right]
    - 
    \left(\mathbb{E}\left[S(\theta)\right]\right)^2
    \\
    &=
    \underbrace{\mathbb{E}\left[\frac{1}{m_y}\|R(\theta)\|_2^2\right] - \mathbb{E}\left[\left(S(\theta)\right)^2\right]}_{:= \delta(\theta)}
    +    \underbrace{\mathbb{E}\left[\left(S(\theta)\right)^2\right] - \left(\mathbb{E}\left[S(\theta)\right]\right)^2}_{=\mathrm{Var}(S(\theta))}.
\end{align*}
We refer to the first term as the spatial variability and define it as follows.
\begin{definition}[Spatial Variability]\label{siam:def:spatial-variability}
    The spatial variability $\delta(\theta)$ is defined as
    \begin{align*}
        \delta(\theta)
        &:=
        \mathbb{E}\left[
            \frac{1}{m_y}\sum_{i=1}^{m_y}
                ([R(\theta)]_i)^2
                \right]
        -
        \mathbb{E}\left[
                \left(\frac{1}{m_y}\sum_{i=1}^{m_y}[R(\theta)]_i\right)^2
        \right]
        \\
        &= 
        \mathbb{E}\left[
            \frac{1}{m_y}\sum_{i=1}^{m_y}\left(
                [R(\theta)]_i - \frac{1}{m_y}\sum_{j=1}^{m_y}[R(\theta)]_j
            \right)^2
        \right]
    \end{align*}
\end{definition} 
$\delta(\theta)$ quantifies the variability of the residuals across the output dimensions. This term vanishes when $m_y = 1$, and becomes small when, within each observation, the residual components are similar. This can occur in systems where the model outputs are governed by a common underlying mechanism and operate under similar conditions.

The second term, $\mathrm{Var}(S(\theta))$, represents the variability of the aggregated residual across observation. This term can be expressed as
\begin{equation*}
    \mathrm{Var}(S(\theta)) 
    = 
    \frac{1}{m_y^2}\left(\sum_{i=1}^{m_y}\mathrm{Var}([R(\theta)]_i) 
    + \sum_{i\neq j}\mathrm{Cov}([R(\theta)]_i, [R(\theta)]_j)\right)
    ,
\end{equation*}
and is small when the residuals in each output dimension have low variability across observations, or when the residual components are negatively correlated. The conditions that make small $\delta(\theta)$ and $\mathrm{Var}(S(\theta))$ are conceptually different, and thereby neither condition implies the other. However, both terms can be small when the residuals are similar across output dimensions and exhibit low variability across the observations. Notably, large values of these terms may limit the accuracy of the proposed approximation, but this also reflects the inherent difficulty of the original problem \eqref{siam:eq:sp} itself. 

We next discuss the structural properties of \eqref{siam:eq:sp} that are needed to characterize the solution accuracy bound. Since $\bar f$ is a composite function that depends on $h$, $\theta$, and $(X,Y)$, we impose the following regularity conditions on these components.
\begin{assumption}[Regularity Conditions on the Computer Model]\label{siam:assump:reg:h}~
    \begin{enumerate}[label=(\alph*), ref=\theassumption(\alph*)]
        \item\label{siam:assump:C2} $h(X;\theta)$ is twice-continuously differentiable with respect to $\theta$ for all $\theta \in \Theta$. Accordingly, let  $J_h(\theta) := \nabla_{\theta}h(X;\theta) \in \mathbb{R}^{m_y\times m_\theta}$ and $H_h(\theta) := \nabla_{\theta}^2 h(X;\theta) \in \mathbb{R}^{m_y \times m_\theta \times m_\theta}$ denote the Jacobian and Hessian of $h$ with respect to $\theta$.

        \item \label{siam:assump:V}
        There exists a non-negative integrable random variable $V$ with $\mathbb{E}[V] < \infty$ such that $\|J_h(\theta)^\top R(\theta)\|_2 \leq V$ for all $\theta \in \Theta$.

        \item\label{siam:assump:W} There exists a non-negative integrable random variable $W$ with $\mathbb{E}[W] < \infty$ such that $\|R(\theta)\|_2^2 \leq W$ for all $\theta \in \Theta$.
        
        \item\label{siam:assump:U} There exists a non-negative integrable random variable $U$ with $\mathbb{E}[U] < \infty$ such that $\left\|J_h(\theta)^\top J_h(\theta) - \sum_{i=1}^{m_y}[R(\theta)]_i[H_h(\theta)]_i\right\|_F \leq U$ for all $\theta \in \Theta$, where $\|\cdot\|_F$ denotes the Frobenius norm.

        \item \label{siam:assump:eigen}There exists $\bar{u} > \mathbb{E}[U]$, where $U$ is as in \Cref{siam:assump:U} and $\lambda_{\min}$ denotes the smallest eigenvalue,  such that $\inf_{\theta \in \Theta} \lambda_{\min}\left(
            \mathbb{E}[J_h(\theta)^\top J_h(\theta)]
        \right) \geq \bar{u}$.
    \end{enumerate}
\end{assumption}
As $J_h(\theta)$ and $H_h(\theta)$ are random due to their dependence on $X$, we invoke the following result to differentiate $\bar{f}(\theta)$ under the expectation.
\begin{theorem}[Lebesgue's Dominated Convergence Theorem]\label{siam:thm:DCT}
    Assume $X_n \to X$ almost surely as $n \to \infty$. Suppose there exists $Y \in L^1(\mathbb{P})$ such that
    $|X_n|\leq Y$  almost surely for every $n$.
    Then $X \in L^1(\mathbb{P})$ and
    \[
        \mathbb{E}[X_n] \to \mathbb{E}[X] \quad \text{as } n \to \infty.
    \]
\end{theorem}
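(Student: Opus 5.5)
The plan is the standard measure-theoretic route: approximate $X$ from below by nonnegative functions, apply Fatou's lemma twice, and then settle integrability and the mixed-sign case.

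\textbf{Step 1: integrability of the limit.} Since $|X_n|\le Y$ almost surely for every $n$ and $X_n\to X$ almost surely, passing to the limit gives $|X|\le Y$ almost surely. The countable union of the exceptional null sets is still null, so this holds outside a single null set. Hence $\mathbb{E}|X|\le \mathbb{E}[Y]<\infty$, i.e., $X\in L^1(\mathbb{P})$. The same domination shows that each $X_n$ is integrable, so every expectation below is finite.

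\textbf{Step 2: the two Fatou arguments.} Consider the nonnegative sequences $Y+X_n\ge 0$ and $Y-X_n\ge 0$, which converge almost surely to $Y+X$ and $Y-X$. Fatou's lemma, which I take as available from the monotone convergence theorem, applied to the first sequence gives
\[
\mathbb{E}[Y]+\mathbb{E}[X]\le \liminf_{n\to\infty}\bigl(\mathbb{E}[Y]+\mathbb{E}[X_n]\bigr).
\]
Since $\mathbb{E}[Y]<\infty$, we may cancel it to obtain $\mathbb{E}[X]\le\liminf_n \mathbb{E}[X_n]$. Applying Fatou to the second sequence gives $\mathbb{E}[Y]-\mathbb{E}[X]\le \mathbb{E}[Y]-\limsup_n \mathbb{E}[X_n]$, that is, $\limsup_n\mathbb{E}[X_n]\le \mathbb{E}[X]$. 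Combining the two inequalities, $\liminf$ and $\limsup$ coincide and equal $\mathbb{E}[X]$, so $\mathbb{E}[X_n]\to\mathbb{E}[X]$.

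\textbf{Step 3: the delicate point.} The only real subtlety is the cancellation of $\mathbb{E}[Y]$. This is legitimate precisely because $Y\in L^1$, so no $\infty-\infty$ expression arises. It is also worth verifying that the almost-sure qualifiers only affect null sets, which do not change any expectation.

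If the paper uses the theorem for vector- or matrix-valued quantities, as in \Cref{siam:assump:V,siam:assump:U}, the result applies componentwise. This is because each component is dominated by the norm and hence by the same integrable bound.
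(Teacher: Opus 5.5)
Your argument is correct. It is the standard textbook proof: Fatou's lemma applied to $Y+X_n$ and $Y-X_n$, followed by cancelling the finite quantity $\mathbb{E}[Y]$. The paper gives no proof of its own and simply invokes the theorem as a classical result, so there is nothing to diverge from.

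Two small remarks. First, your opening sentence about approximating $X$ from below does not describe what you actually do. Second, in the paper's use of the theorem to interchange $\nabla_\theta$ and $\mathbb{E}$, the theorem is applied to difference quotients. Those quotients are dominated by a multiple of $V$ only after a mean-value-theorem step, not merely by applying the result componentwise.
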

Under \Cref{siam:assump:W}, $R(\theta) \in L^2$ and under \Cref{siam:assump:V} and \Cref{siam:thm:DCT}, the gradient of $\bar{f}(\theta)$ is
\begin{equation*}
    \nabla_{\theta}\bar{f}(\theta) 
    = \frac{1}{m_y}\nabla_{\theta} \mathbb{E}\left[\|R(\theta)\|_2^2\right] 
    =  \frac{1}{m_y}\mathbb{E}\left[\nabla_{\theta}\|R(\theta)\|_2^2\right] 
    = -\frac{2}{m_y}\mathbb{E}[J_h(\theta)^\top R(\theta)]\in \mathbb{R}^{m_\theta}.
\end{equation*}
Similarly, under \Cref{siam:assump:U} and \Cref{siam:thm:DCT}, the Hessian of $\bar{f}(\theta)$ is
\begin{align*}
    \nabla_{\theta}^2 \bar f(\theta)
    &=
    \frac{1}{m_y}\nabla_{\theta}^2 \mathbb{E}\left[\|R(\theta)\|_2^2\right]
    =
    -\frac{2}{m_y}\nabla_{\theta}\mathbb{E}\left[J_h(\theta)^\top R(\theta)\right]
    \\
    &=
    \frac{2}{m_y}\mathbb{E}\left[
    J_h(\theta)^\top J_h(\theta)
    -\sum_{i=1}^{m_y}[R(\theta)]_i[H_h(\theta)]_i
    \right] \in \mathbb{R}^{m_\theta\times m_\theta}
    .
    \nonumber
\end{align*}
Then, under \Cref{siam:assump:eigen}, for any $\theta \in \Theta$,
\begin{align*}
    \nabla_{\theta}^2 \bar{f}(\theta) 
    \succeq 
    \frac{2}{m_y}\left(
        \lambda_{\min}\left(\mathbb{E}[J_h(\theta)^\top J_h(\theta)]\right) 
        - \mathbb{E}[U]
    \right)I_{m_\theta} 
    \succeq 
    \frac{2}{m_y}(\bar{u} - \mathbb{E}[U])I_{m_\theta} 
    \succ 0
    ,
\end{align*}
where we obtain the following strong convexity condition for \eqref{siam:eq:sp}.
\begin{corollary}[Strong Convexity of the Calibration Objective]\label{siam:cor:strong-convex}
    Under \Cref{siam:assump:reg:h}, $\bar{f}(\theta)$ is $\bar{\ell}_C$-strongly convex over $\Theta$ with $\bar{\ell}_C = \frac{2}{m_y}(\bar{u} - \mathbb{E}[U])$, that is,
    \[
        \bar{f}(\theta) - \bar{f}(\bar{\theta}  ^*) \geq \frac{\bar{\ell}_C}{2} \|\theta - \bar{\theta}^* \|^2,
    \]
    for all $\theta \in \Theta$, and $\bar{\theta}^*$ is the unique minimizer of $\bar{f}(\theta)$, and hence $\bar{\Theta}^* = \{\bar{\theta}^*\}$.
\end{corollary}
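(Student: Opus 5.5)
The plan is to combine the Hessian lower bound displayed just before \Cref{siam:cor:strong-convex} with a second-order Taylor expansion along line segments. One point needs care: the expansion is only valid if segments stay inside $\Theta$. I will therefore assume, as is implicit in the statement, that $\Theta$ is convex. Otherwise strong convexity ``over $\Theta$'' is not well defined through Hessians, and one argues on the convex hull where the bounds are assumed to hold.

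\textbf{Step 1 (regularity).} $\bar f$ is $C^2$ on $\Theta$. The gradient and Hessian formulas above follow from \Cref{siam:thm:DCT}, using the dominating variables $V$ and $U$ in \Cref{siam:assump:V,siam:assump:U}. Continuity of $\nabla^2_\theta \bar f$ follows by one more application of \Cref{siam:thm:DCT}: along any sequence $\theta_k\to\theta$, the integrand converges almost surely by \Cref{siam:assump:C2}, and it is dominated by $U$.

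\textbf{Step 2 (strong convexity).} For $\theta,\theta'\in\Theta$, Taylor's theorem with integral remainder gives
\begin{equation*}
\bar f(\theta)=\bar f(\theta')+\nabla\bar f(\theta')^\top(\theta-\theta')+\int_0^1(1-s)(\theta-\theta')^\top\nabla^2\bar f(\theta'+s(\theta-\theta'))(\theta-\theta')\,ds .
\end{equation*}
To bound the remainder, write $\mathbb{E}[A-B]$ with $A=J_h^\top J_h$ and $B=\sum_i[R]_i[H_h]_i$. For a unit vector $v$,
\begin{equation*}
v^\top\mathbb{E}[A]v\ge\lambda_{\min}(\mathbb{E}[A]), \qquad |v^\top \mathbb{E}[B]v|\le\mathbb{E}\|B\|_2 .
\end{equation*}
Assumption~\ref{siam:assump:U} bounds $\|A-B\|_F$, not $\|B\|_F$. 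I will follow the paper's displayed inequality and read it as supplying $\mathbb{E}[U]$ as the bound on the perturbation term. Combining this with \Cref{siam:assump:eigen}, the remainder is at least $\frac{\bar\ell_C}{2}\|\theta-\theta'\|^2$. This yields the standard strong convexity inequality
\begin{equation*}
\bar f(\theta)\ge \bar f(\theta')+\nabla\bar f(\theta')^\top(\theta-\theta')+\frac{\bar\ell_C}{2}\|\theta-\theta'\|^2 .
\end{equation*}

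\textbf{Step 3 (existence and uniqueness).} $\bar f$ is continuous and $\Theta$ is compact, so a minimizer $\bar\theta^*$ exists. Set $\theta'=\bar\theta^*$ in the inequality from Step 2. The first-order optimality condition for a convex set gives $\nabla\bar f(\bar\theta^*)^\top(\theta-\bar\theta^*)\ge0$ for all $\theta\in\Theta$; this covers both an interior minimizer (zero gradient) and a boundary minimizer. Hence
\begin{equation*}
\bar f(\theta)-\bar f(\bar\theta^*)\ge\frac{\bar\ell_C}{2}\|\theta-\bar\theta^*\|^2 .
\end{equation*}
If $\theta$ is another minimizer, the left side is $0$, so $\theta=\bar\theta^*$. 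Thus $\bar\Theta^*=\{\bar\theta^*\}$.

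\textbf{Main obstacle.} The delicate step is the Hessian lower bound in Step 2, since \Cref{siam:assump:U} as written bounds the whole matrix rather than the residual-curvature term alone. I would take the paper's displayed inequality as given, since it precedes the corollary. The other subtlety is the convexity of $\Theta$ needed for the segment argument, which I would state explicitly.
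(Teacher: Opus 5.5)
This is essentially the paper's argument. The paper reads the corollary directly off the displayed bound $\nabla_\theta^2\bar f(\theta)\succeq\bar\ell_C I_{m_\theta}$, and your integral-remainder Taylor step, existence by compactness, and first-order optimality at a possibly boundary minimizer supply exactly what it leaves implicit. Both caveats you flag are real and should be stated as hypotheses rather than worked around: read literally, \Cref{siam:assump:U} only gives $\|\nabla_\theta^2\bar f(\theta)\|_F\le\frac{2}{m_y}\mathbb{E}[U]$, which together with \Cref{siam:assump:eigen} is still compatible with a concave $\bar f$, so your rereading of it as a bound on $\sum_i[R(\theta)]_i[H_h(\theta)]_i$ is necessary. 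Likewise, convexity of $\Theta$ cannot be replaced by your fallback of passing to the convex hull: with $m_y=1$, $Y\equiv 0$, and $h(X;\theta)=\theta$ one gets $\bar f(\theta)=\theta^2$, which has the two minimizers $\pm1$ on $\Theta=[-2,-1]\cup[1,3]$.
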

We then define the distance between a point $a \in \mathbb{R}^d$ and a set $\mathcal{B} \subset \mathbb{R}^d$ as $\mathrm{dist}(a,\mathcal{B}) := \min_{b \in \mathcal{B}}\|a-b\|$, so that under \Cref{siam:cor:strong-convex}, solving \eqref{siam:eq:spaa} gives
\begin{equation*}
    \mathrm{dist}(\bar{\theta}^*, \Theta^*)^2
    \leq
    \frac{2}{\bar{\ell}_C}
    \min_{\theta \in \Theta^*}
    \left(
        \bar{f}(\theta) - \bar{f}(\bar{\theta}^*)
    \right).
\end{equation*}
Since $\Theta^*$ is not directly accessible and can only be estimated from the finite sample $\mathcal{D}_n$, let the sample average approximation of \eqref{siam:eq:spaa} be
\begin{equation}\label{siam:eq:spaa-saa}
    f_n(\theta)
    :=
    \Bigg(\frac{1}{m_y}\frac{1}{n}\sum_{j=1}^n \boldsymbol{1}^\top r_j(\theta)\Bigg)^2
    ,
    \quad
    \Theta_n^* 
    := 
    \arg\min_{\theta \in \Theta} f_n(\theta)
    ,
\end{equation}
where $\Theta_n^*$ denotes the set of optimal solutions. To study the behavior of $\Theta_n^*$, we next establish the uniform convergence of $f_n$ to $f$ over $\Theta$.
\begin{proposition}[Uniform Law of Large Numbers]\label{siam:prop:ulln}
    Given that $\mathcal{D}_n$ consists of i.i.d. draws from the joint distribution $(X,Y)$, $\Theta$ is compact and under \Cref{siam:assump:W},
    \[
        \sup_{\theta \in \Theta} 
        \left|f_n(\theta) - f(\theta)\right| 
        \to 0 
        \quad \text{as } n \to \infty \quad \text{a.s.}
    \]
\end{proposition}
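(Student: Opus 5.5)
The plan is to reduce the claim to a uniform law of large numbers for the \emph{unsquared} average $\tilde{f}_n(\theta) := \frac{1}{n}\sum_{j=1}^n S_j(\theta)$, where $S_j(\theta) := \frac{1}{m_y}\boldsymbol{1}^\top r_j(\theta)$. Its limit is $\tilde{f}(\theta)=\mathbb{E}[S(\theta)]$. Since $f_n = \tilde{f}_n^2$ and $f = \tilde{f}^2$, we have
\[
\sup_{\theta\in\Theta}|f_n(\theta)-f(\theta)| \le \sup_{\theta\in\Theta}|\tilde f_n(\theta)-\tilde f(\theta)|\cdot\Big(\sup_{\theta\in\Theta}|\tilde f_n(\theta)|+\sup_{\theta\in\Theta}|\tilde f(\theta)|\Big).
\]
It therefore suffices to show that the first factor tends to $0$ a.s. and that the second factor is a.s. bounded for all large $n$.

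\textbf{Envelope.} By Cauchy--Schwarz (as in the introduction), $|S(\theta)| \le \frac{1}{\sqrt{m_y}}\|R(\theta)\|_2 \le \sqrt{W/m_y}$ for all $\theta$, using \Cref{siam:assump:W}. Jensen's inequality gives $\mathbb{E}\sqrt{W} \le \sqrt{\mathbb{E}W}<\infty$, so $G:=\sqrt{W/m_y}$ is an integrable envelope. This immediately bounds the second factor: $\sup_\theta|\tilde f(\theta)|\le \mathbb{E}G$, and $\sup_\theta|\tilde f_n(\theta)|\le \frac1n\sum_j G_j \to \mathbb{E}G$ a.s. by the strong law of large numbers.

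\textbf{Uniform convergence of $\tilde f_n$.} This step is the main obstacle, and I would use the classical compactness/bracketing argument (Tauchen / Newey--McFadden type). By \Cref{siam:assump:C2}, $\theta\mapsto S(\theta)$ is continuous for each realization of $(X,Y)$.
\begin{enumerate}
\item For $\theta\in\Theta$ and $\rho>0$, set $\Delta(\theta,\rho):=\sup_{\theta'\in\Theta,\,\|\theta'-\theta\|\le\rho}|S(\theta')-S(\theta)|$. This quantity is measurable: by continuity the supremum can be taken over a countable dense subset.
\item Continuity gives $\Delta(\theta,\rho)\to0$ pointwise as $\rho\downarrow0$. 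Since $\Delta\le 2G$, \Cref{siam:thm:DCT} yields $\mathbb{E}\Delta(\theta,\rho)\to0$.
\item Fix $\varepsilon>0$. For each $\theta$, choose $\rho_\theta$ with $\mathbb{E}\Delta(\theta,\rho_\theta)<\varepsilon$. By compactness, extract a finite subcover $\{B(\theta_k,\rho_{\theta_k})\}_{k=1}^K$ of $\Theta$.
\item For $\theta$ in the $k$-th ball,
\[
|\tilde f_n(\theta)-\tilde f(\theta)| \le |\tilde f_n(\theta_k)-\tilde f(\theta_k)| + \tfrac1n\textstyle\sum_j\Delta_j(\theta_k,\rho_{\theta_k}) + \mathbb{E}\Delta(\theta_k,\rho_{\theta_k}).
\]
\item Apply the strong law of large numbers to the finitely many i.i.d. sequences $S_j(\theta_k)$ and $\Delta_j(\theta_k,\rho_{\theta_k})$, all of which are integrable. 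This gives $\limsup_n \sup_\theta|\tilde f_n-\tilde f|\le 2\varepsilon$ a.s.
\item Let $\varepsilon\downarrow0$ along a countable sequence to conclude that the first factor tends to $0$ a.s.
\end{enumerate}

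\textbf{Conclusion.} Combining the two factors, the first tends to $0$ a.s. and the second is eventually bounded by $2\mathbb{E}G+1$ a.s., so $\sup_{\theta}|f_n(\theta)-f(\theta)|\to0$ a.s. The only delicate points are the measurability of the local supremum $\Delta$ and the use of dominated convergence to make $\mathbb{E}\Delta$ small. Both are handled by continuity of $h$ in $\theta$ together with the envelope $G$ derived from \Cref{siam:assump:W}.
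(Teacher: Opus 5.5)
Your proof is correct, but it takes a more self-contained route than the paper. The paper does not prove the ULLN. It cites an off-the-shelf result (\cite[Pro.~8.5]{kim2014guide}) and checks only the integrability condition, via Cauchy--Schwarz, in the form $\bigl(\frac{1}{m_y}\boldsymbol{1}^\top r_j(\theta)\bigr)^2 \le W/m_y$.

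Strictly speaking, that envelope is the one needed for the average of squares $\frac{1}{n}\sum_j S_j(\theta)^2$. But $f_n$ is the square of an average, so the paper leaves implicit the passage from a ULLN for the summands to one for $f_n$. Your argument supplies exactly this step:
\begin{enumerate}
\item You dominate the unsquared summand by $\sqrt{W/m_y}$, which is integrable by Jensen.
\item You prove a.s.\ uniform convergence of $\tilde f_n$ to $\tilde f$ with the finite-cover/local-oscillation argument. This is essentially the standard proof behind the cited proposition.
\item You transfer the result to $f_n=\tilde f_n^2$ using $a^2-b^2=(a-b)(a+b)$ and an eventual a.s.\ bound on $\sup_\theta|\tilde f_n(\theta)|$.
\end{enumerate}

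You also make explicit that a.s.\ continuity of $\theta\mapsto S(\theta)$ is required, which you take from \Cref{siam:assump:C2}. The proposition lists only \Cref{siam:assump:W}, and the envelope condition alone does not imply uniform convergence. The paper relies on this continuity implicitly through the cited result, so invoking it is appropriate rather than an extra hypothesis.

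In short, the paper's version buys brevity by delegating to a textbook theorem. Yours buys a complete argument in which the envelope is matched to the actual structure of $f_n$.
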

The integrability condition needed for \Cref{siam:prop:ulln}~\cite[Pro.~8.5]{kim2014guide} is satisfied under \Cref{siam:assump:W}. That is, from Cauchy--Schwarz inequality, for all $\theta \in \Theta$, 
\begin{equation*}
    \left(
        \frac{1}{m_y}\boldsymbol{1}^\top r_j(\theta)
    \right)^2
    \leq
    \frac{1}{m_y}\|r_j(\theta)\|_2^2
    ,
    \quad
    j = 1, \dots, n.
\end{equation*}
Since $r_j(\theta)$ is a realization of $R(\theta)$, \Cref{siam:assump:W} gives for all $\theta \in \Theta$,
\begin{equation*}
    \frac{1}{m_y}\|R(\theta)\|_2^2
    \leq
    \frac{1}{m_y}W
    .
\end{equation*}
Hence, the sample average terms are bounded by the integrable random variable $W/m_y$, and since $\mathbb{E}[W] < \infty$, the integrability condition holds. Note that \Cref{siam:prop:ulln} itself does not imply that solving \eqref{siam:eq:spaa-saa} yields a solution that is close to $\Theta^*$. That said, even with $f_n(\theta)$ uniformly close to $f(\theta)$, a minimizer of $f_n$ may remain far from $\Theta^*$ if $f$ is flat near $\Theta^*$. Thus, convergence of a minimizer of $f_n$ to the solution set $\Theta^*$ further requires a growth condition near $\Theta^*$. For this, we next impose the following regularity conditions.
\begin{assumption}[Regularity Conditions on the Root-Finding Approximation]\label{siam:assump:reg:rf}~
    \begin{enumerate}[label=(\alph*), ref=\theassumption(\alph*)]
    \item\label{siam:assump:rootexist-RF} $\tilde{\Theta}^*$ is nonempty.

    \item\label{siam:assump:root-stab}
    There exists $\tilde{\ell}_R > 0$ such that for every $\theta_a \in \Theta$ and every $\theta_b \in \arg\min_{\theta \in \tilde{\Theta}^*}\|\theta_a - \theta\|$, $|\tilde f(\theta_a)-\tilde f(\theta_b)| \geq \tilde{\ell}_R \|\theta_a-\theta_b\|$.
    
    \item\label{siam:assump:rate-grad2}
    $\mathbb{E}\left[\left\|\nabla_\theta (S(\theta))^2\right\|_2^2\right]< \infty$ for all $\theta \in \Theta.$

    \item\label{siam:assump:rate-gradlip}
    There exists a nonnegative random variable $L$ with $\mathbb{E}[L^2] < \infty$ such that $\|\nabla_\theta (S(\theta_a))^2-\nabla_\theta (S(\theta_b))^2\|_2 \leq L \|\theta_a-\theta_b\|_2$ for all $\theta_a, \theta_b \in \Theta$.

    \item\label{siam:assump:lips} $\tilde{f}(\theta)$ is $\tilde{\ell}_E$-Lipschitz continuous over $\Theta$, i.e., $|\tilde{f}(\theta_a) - \tilde{f}(\theta_b)| \leq \tilde{\ell}_E\|\theta_a - \theta_b\|$ for all $\theta_a, \theta_b \in \Theta$.

    \item\label{siam:assump:var-lips} $\mathrm{Var}(S(\theta))$ is $\tilde{\ell}_V$-Lipschitz continuous over $\Theta$, i.e., $\left|\mathrm{Var}(S(\theta_a)) - \mathrm{Var}(S(\theta_b))\right| \leq \tilde{\ell}_V\|\theta_a - \theta_b\|$ for all $\theta_a, \theta_b \in \Theta$. 
    
    \item\label{siam:assump:out-var-lip} $\delta(\theta)$ is $\tilde{\ell}_\delta$-Lipschitz continuous over $\Theta$, i.e., $\left|\delta(\theta_a) - \delta(\theta_b)\right| \leq \tilde{\ell}_\delta\|\theta_a - \theta_b\|$ for all $\theta_a, \theta_b \in \Theta$. 
    \end{enumerate}
\end{assumption}
We next establish the convergence rate of sample average approximation.
\begin{theorem}[Convergence Rate of Sample Average Approximation]\label{siam:thm:saa-rate}
    Given that $\mathcal{D}_n$ consists of i.i.d. draws from the joint distribution $(X,Y)$ and $\Theta$ is compact, under \Cref{siam:assump:W}, \Cref{siam:assump:rootexist-RF}, \Cref{siam:assump:root-stab}, \Cref{siam:assump:rate-grad2}, and \Cref{siam:assump:rate-gradlip}, for any $\theta_n^* \in \Theta_n^*$,
    \[
        \mathrm{dist}(\theta_n^*,\Theta^*) = O_p(n^{-1/2}),
    \]
    where $O_p(n^{-1/2})$ denotes that $n^{1/2}\mathrm{dist}(\theta_n^*,\Theta^*)$ is bounded in probability.
\end{theorem}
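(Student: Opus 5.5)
The plan is a direct growth-plus-uniform-deviation argument. Under \Cref{siam:assump:rootexist-RF}, $\Theta^* = \tilde\Theta^*$ and $f$ vanishes exactly there. Fix $\theta_n^* \in \Theta_n^*$ and let $\theta_b$ be a closest point of $\tilde\Theta^*$; the minimum exists because $\tilde\Theta^*$ is closed in the compact set $\Theta$, using continuity of $\tilde f$ from \Cref{siam:assump:lips} or dominated convergence under \Cref{siam:assump:W}. Since $\tilde f(\theta_b)=0$, \Cref{siam:assump:root-stab} gives the quadratic growth
\[
f(\theta_n^*) = \tilde f(\theta_n^*)^2 \geq \tilde\ell_R^2\, \mathrm{dist}(\theta_n^*,\Theta^*)^2 .
\]
It therefore suffices to show $f(\theta_n^*) = O_p(n^{-1})$.

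To get this rate, I would work with the signed quantity rather than its square. Write $\tilde f_n(\theta) := \frac{1}{n}\sum_j \frac{1}{m_y}\boldsymbol{1}^\top r_j(\theta)$, so that $f_n = \tilde f_n^2$. Optimality of $\theta_n^*$ gives $f_n(\theta_n^*) \le f_n(\theta_b) = \tilde f_n(\theta_b)^2$. Then
\[
|\tilde f(\theta_n^*)| \le |\tilde f_n(\theta_n^*)| + |\tilde f_n(\theta_n^*) - \tilde f(\theta_n^*)| \le |\tilde f_n(\theta_b) - \tilde f(\theta_b)| + \sup_{\theta\in\Theta}|\tilde f_n(\theta)-\tilde f(\theta)| \le 2\sup_{\theta\in\Theta}|\tilde f_n - \tilde f| .
\]
Hence $\tilde\ell_R\,\mathrm{dist}(\theta_n^*,\Theta^*) \le 2\|\tilde f_n-\tilde f\|_\infty$, and the theorem reduces to showing $\sup_\theta|\tilde f_n(\theta)-\tilde f(\theta)| = O_p(n^{-1/2})$. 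This follows if the centered process $\sqrt n(\tilde f_n-\tilde f)$ is tight in $C(\Theta)$, for instance as a Donsker class or via a Lipschitz-in-parameter bound.

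This uniform $\sqrt n$ rate is the main obstacle, because the stated assumptions are phrased for $S^2$ rather than $S$. I would proceed in three steps.
\begin{enumerate}
\item For each fixed $\theta$, $S(\theta)$ has finite variance, since $S^2 \le W/m_y$ by Cauchy--Schwarz and \Cref{siam:assump:W}. Chebyshev's inequality then gives a pointwise $O_p(n^{-1/2})$ rate.
\item For uniformity, use a Lipschitz-in-$\theta$ envelope with square-integrable constant and a standard finite-bracketing argument on compact $\Theta$; see \cite[e.g.][]{kim2014guide} or the van der Vaart Lipschitz-class theorem.
\item The Lipschitz envelope is available directly for $S^2$: \Cref{siam:assump:rate-gradlip} makes $\nabla_\theta S^2$ Lipschitz with an $L^2$ constant, and \Cref{siam:assump:rate-grad2} makes $\nabla_\theta S^2$ square integrable at a point. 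Together they give $|S(\theta_a)^2 - S(\theta_b)^2| \le M\|\theta_a-\theta_b\|$ with $\mathbb{E}[M^2]<\infty$.
\end{enumerate}

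If the transfer from $S^2$ to $S$ proves awkward, I would instead run the argument on the squared scale, which is what the assumptions literally control. Define $g_n(\theta) := \frac1n\sum_j S_j(\theta)^2$, so that $g_n \to \mathbb{E}[S^2] = f + \mathrm{Var}(S)$ uniformly at rate $n^{-1/2}$. Then use the identity $f_n = g_n - \widehat{\mathrm{Var}}_n(S)$ together with $f_n(\theta_b) = O_p(n^{-1})$, which follows from the pointwise CLT at $\theta_b$ since $\tilde f(\theta_b)=0$. The step I would examine most carefully is whether the resulting estimate has order $n^{-1}$ rather than only $n^{-1/2}$. The cleaner route is the one through the signed process $\sqrt n(\tilde f_n - \tilde f)$ described above. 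If needed, I would justify its tightness by noting that $S = \pm\sqrt{S^2}$ inherits the Lipschitz envelope on the region where $|S|$ is bounded away from zero, and by handling the near-zero region separately using the bound $S^2 \le W/m_y$.
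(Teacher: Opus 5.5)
Your reduction is sound, and it takes a different route from the paper. The paper only checks the quadratic growth $f(\theta)\ge\tilde\ell_R^2\,\mathrm{dist}(\theta,\Theta^*)^2$ (your first display). It then appeals to a generic SAA rate theorem (Kim et al., Thm.~8.4; van der Vaart, Thm.~5.21), with \Cref{siam:assump:rate-grad2} and \Cref{siam:assump:rate-gradlip} serving as the integrand regularity conditions for $F=S^2$.

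You instead use \Cref{siam:assump:root-stab} as linear growth of $|\tilde f|$ and prove the deterministic bound $\tilde\ell_R\,\mathrm{dist}(\theta_n^*,\Theta^*)\le 2\sup_{\theta}|\tilde f_n(\theta)-\tilde f(\theta)|$. The theorem then reduces to a uniform $\sqrt n$ rate for the signed empirical process. The paper's route is shorter. But, as you noticed, the cited result concerns sample averages $\frac1n\sum_j F(\theta,\xi_j)$, whereas $f_n=\tilde f_n^2$ is a squared sample mean, and $\frac1n\sum_j S_j^2$ targets $f+\mathrm{Var}(S)$ rather than $f$. Your route works with the estimator actually being minimized and gives the explicit constant $2/\tilde\ell_R$. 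You were also right to drop the squared-scale variant: a uniform $O_p(n^{-1/2})$ bound on $f_n-f$ only gives $f(\theta_n^*)=O_p(n^{-1/2})$, i.e.\ $\mathrm{dist}(\theta_n^*,\Theta^*)=O_p(n^{-1/4})$.

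The step to tighten is the transfer from $S^2$ to $S$. A square-integrable Lipschitz envelope for $S$ need not exist under these hypotheses. They control $\nabla_\theta (S(\theta))^2=2S\nabla_\theta S$, which says little about $\nabla_\theta S$ where $S$ is small. For example, on $\Theta=[0,1]$ a population component $S=\sqrt{c+\theta}$ with $\mathbb{E}[c^{-1}]=\infty$ has $S^2$ linear in $\theta$, yet its Lipschitz constant is $1/(2\sqrt c)$. A split at a fixed threshold does not produce a Lipschitz envelope either, and the bound $S^2\le W/m_y$ controls values, not increments.

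What you do get is a H\"older-$1/2$ envelope. Assume $\Theta$ is convex, as your Step 3 already implicitly does, and let $M:=\sup_{\theta\in\Theta}\|\nabla_\theta (S(\theta))^2\|_2$, which satisfies $\mathbb{E}[M^2]<\infty$. Then $\bigl||S(\theta_a)|-|S(\theta_b)|\bigr|^2\le|S(\theta_a)^2-S(\theta_b)^2|\le M\|\theta_a-\theta_b\|$.
\begin{itemize}
\item If the signs of $S(\theta_a)$ and $S(\theta_b)$ agree, this bounds $|S(\theta_a)-S(\theta_b)|$ directly.
\item If they differ, continuity of $S$ (from \Cref{siam:assump:C2}) gives a zero $\theta_c$ on the segment. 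Hence $|S(\theta_a)-S(\theta_b)|=|S(\theta_a)|+|S(\theta_b)|\le 2\sqrt{M\|\theta_a-\theta_b\|}$.
\end{itemize}
This is exactly what your split yields if the threshold is tied to the increment, $\eta\asymp\sqrt{M\|\theta_a-\theta_b\|}$. A $\delta$-net of $\Theta$ then gives $L_2$-bracketing numbers of order $\epsilon^{-2m_\theta}$. Together with the square-integrable envelope $\sqrt{W/m_y}$, the class $\{S(\theta):\theta\in\Theta\}$ is Donsker. So $\sup_\theta|\tilde f_n-\tilde f|=O_p(n^{-1/2})$, and your argument closes.
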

\Cref{siam:thm:saa-rate} requires a quadratic growth condition for $f$ near the solution set $\Theta^*$; see \cite[Thm.~8.4]{kim2014guide} and \cite[Thm.~5.21]{van2000asymptotic}. Under \Cref{siam:assump:rootexist-RF}, we have $\Theta^*=\tilde{\Theta}^*$ and $\tilde f(\theta)=0$ for all $\theta \in \tilde{\Theta}^*$. Fix any $\theta_a \in \Theta$, and let $\theta_b \in \arg\min_{\theta \in \tilde{\Theta}^*}\|\theta_a-\theta\|$. Then, by \Cref{siam:assump:root-stab},
\begin{equation*}
    |\tilde f(\theta_a)-\tilde f(\theta_b)|
    \geq
    \tilde{\ell}_R \|\theta_a-\theta_b\|
    =
    \tilde{\ell}_R \mathrm{dist}(\theta_a,\tilde{\Theta}^*).
\end{equation*}
Since $\tilde f(\theta_b)=0$, it follows that $|\tilde f(\theta_a)| \geq \tilde{\ell}_R \mathrm{dist}(\theta_a,\tilde{\Theta}^*)$ and therefore,
\begin{equation*}
    f(\theta_a)
    =
    (\tilde f(\theta_a))^2
    \geq
    \tilde{\ell}_R^2 \mathrm{dist}(\theta_a,\Theta^*)^2,
\end{equation*}
so that the quadratic growth condition for $f$ near $\Theta^*$ is satisfied. Under \Cref{siam:assump:C2}, $R(\theta)$ is twice continuously differentiable in $\theta$, and hence so is $S(\theta)=\frac{1}{m_y}\mathbf{1}^\top R(\theta)$. Therefore, $\nabla_\theta (S(\theta))^2 \in \mathbb{R}^{m_\theta}$ is well-defined for all $\theta \in \Theta$. \Cref{siam:assump:rate-grad2} and \Cref{siam:assump:rate-gradlip} further impose regularity conditions on this gradient. We additionally impose the following assumption to derive convergence rates in expectation. 
\begin{assumption}[Bounded Second Moment of the Estimation Error]\label{siam:assump:BS2}
    For any $\theta^*_n \in \Theta^*_n$, $\sup_{n \geq 1}\mathbb{E}[n~ \mathrm{dist}(\theta^*_n, \Theta^*)^2] < \infty$.
\end{assumption}

\begin{corollary}[Sample Average Convergence Rates in Expectation]
Under the assumptions of \Cref{siam:thm:saa-rate} and \Cref{siam:assump:BS2},
    \[
        \mathbb{E}[\mathrm{dist}(\theta^*_n, \Theta^*)] = O(n^{-1/2})
        ,
        \quad
        \mathbb{E}[\mathrm{dist}(\theta^*_n, \Theta^*)^2] = O(n^{-1}).
    \]
\end{corollary}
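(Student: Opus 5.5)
The corollary follows directly from \Cref{siam:assump:BS2}; \Cref{siam:thm:saa-rate} serves only as context, since it gives the rate in probability, which is what motivates the uniform moment bound. Fix any selection $\theta_n^* \in \Theta_n^*$ and write $D_n := \mathrm{dist}(\theta_n^*, \Theta^*) \geq 0$. By \Cref{siam:assump:BS2} there is a constant
\[
C := \sup_{n\geq 1} \mathbb{E}[n D_n^2] < \infty .
\]

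The second claim is then immediate. For every $n \geq 1$ we have $\mathbb{E}[D_n^2] = n^{-1}\mathbb{E}[nD_n^2] \leq C/n$, so $\mathbb{E}[D_n^2] = O(n^{-1})$.

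For the first claim, we apply Jensen's inequality, equivalently Cauchy--Schwarz $\mathbb{E}[D_n \cdot 1] \le (\mathbb{E}[D_n^2])^{1/2}$, to the nonnegative variable $D_n$. This gives
\[
\mathbb{E}[D_n] \leq \sqrt{\mathbb{E}[D_n^2]} \leq \sqrt{C}\, n^{-1/2},
\]
hence $\mathbb{E}[D_n] = O(n^{-1/2})$.

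The only point needing care is measurability and integrability of $D_n$, so that the expectations make sense. Since $\Theta$ is compact and $\Theta^*$ is closed (it is the zero set of the continuous function $\tilde f$, by \Cref{siam:assump:lips} or by continuity from \Cref{siam:assump:C2} together with dominated convergence via \Cref{siam:thm:DCT}), the distance is attained and is bounded by $\mathrm{diam}(\Theta)$. Hence $D_n$ is bounded, and measurable under a measurable selection of $\theta_n^*$, which is implicitly assumed in \Cref{siam:assump:BS2}.

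I expect no real obstacle here. Note that convergence in probability from \Cref{siam:thm:saa-rate} alone would not suffice: it would require uniform integrability of $nD_n^2$, and \Cref{siam:assump:BS2} is imposed precisely to supply that.
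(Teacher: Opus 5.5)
Your proof is correct and follows the paper's argument: \Cref{siam:assump:BS2} gives $\mathbb{E}[D_n^2]\le c/n$, and Cauchy--Schwarz then gives $\mathbb{E}[D_n]\le \sqrt{c}\,n^{-1/2}$. One small correction to your closing aside, which does not affect the proof: \Cref{siam:assump:BS2} is an $L^1$-boundedness condition on $nD_n^2$, which is weaker than uniform integrability of $nD_n^2$, and $L^1$-boundedness is exactly what the argument needs.
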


\begin{proof}
    By \Cref{siam:assump:BS2}, there exists a constant $c>0$ such that for all $n \geq 1$,
    \begin{equation*}
        \mathbb{E}\left[n\mathrm{dist}(\theta_n^*,\Theta^*)^2\right] 
        \leq 
        c.
    \end{equation*}
    Hence, $\mathbb{E}\left[\mathrm{dist}(\theta^*_n,\Theta^*)^2\right] \leq \frac{c}{n}$, which gives
    \begin{equation*}
        \mathbb{E}\left[\mathrm{dist}(\theta^*_n,\Theta^*)^2\right] 
        = 
        O(n^{-1}).
    \end{equation*}
    Also, by the Cauchy--Schwarz inequality,
    \begin{equation*}
        \mathbb{E}\left[\mathrm{dist}(\theta^*_n,\Theta^*)\right]
        \leq
        \sqrt{\mathbb{E}\left[\mathrm{dist}(\theta^*_n,\Theta^*)^2\right]}
        = O(n^{-1/2}).
    \end{equation*}
\end{proof}
We then establish the solution accuracy bound for the sample average approximation.
\begin{theorem}[Solution Accuracy Bound]\label{siam:thm:sol-acc}
    Given that $\mathcal{D}_n$ consists of i.i.d. draws from the joint distribution of $(X,Y)$, $\Theta$ is compact, and under \Cref{siam:assump:reg:h} and \Cref{siam:assump:reg:rf}
    \begin{align*}
        \mathrm{dist}(\bar{\theta}^*, \Theta_n^*)^2
        &\leq
        \frac{2}{\bar{\ell}_C}
        \Bigg(
            \tilde{\ell}_E^2\mathrm{dist}(\theta_n^*, \Theta^*)^2
            +
            \sup_{\theta \in \Theta^*}\delta(\theta)
            +
            \tilde{\ell}_\delta\mathrm{dist}(\theta_n^*, \Theta^*)
        \\
        &\quad
            + \sup_{\theta \in \Theta^*}\mathrm{Var}(S(\theta))
            + \tilde{\ell}_V\mathbb{E}\left[\mathrm{dist}(\theta_n^*, \Theta^*)\right]
            + \tilde{\ell}_E^2\mathbb{E}\left[\mathrm{dist}(\theta_n^*, \Theta^*)^2\right]
        \Bigg).
        \nonumber
\end{align*}
\end{theorem}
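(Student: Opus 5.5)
The plan is to start from the strong convexity bound of \Cref{siam:cor:strong-convex} applied at a point of $\Theta_n^*$:
\[
\mathrm{dist}(\bar\theta^*,\Theta_n^*)^2 \le \|\theta_n^*-\bar\theta^*\|^2 \le \frac{2}{\bar\ell_C}\bigl(\bar f(\theta_n^*)-\bar f(\bar\theta^*)\bigr).
\]
Since $\bar f(\bar\theta^*)\ge 0$, it then suffices to bound $\bar f(\theta_n^*)$ from above. We use the decomposition derived before \Cref{siam:def:spatial-variability},
\[
\bar f(\theta)=f(\theta)+\delta(\theta)+\mathrm{Var}(S(\theta)),
\]
evaluated at $\theta=\theta_n^*$, and bound the three pieces separately. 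Throughout, we let $\theta_b$ be a nearest point of $\Theta^*=\tilde\Theta^*$ to $\theta_n^*$; this set is nonempty by \Cref{siam:assump:rootexist-RF} and closed, hence compact, by continuity of $\tilde f$. Then $\|\theta_n^*-\theta_b\|=\mathrm{dist}(\theta_n^*,\Theta^*)$.

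\emph{The term $f(\theta_n^*)$.} Because $\tilde f(\theta_b)=0$, the Lipschitz condition \Cref{siam:assump:lips} gives
\[
f(\theta_n^*)=(\tilde f(\theta_n^*)-\tilde f(\theta_b))^2\le \tilde\ell_E^2\,\mathrm{dist}(\theta_n^*,\Theta^*)^2,
\]
which produces the first term of the bound.

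\emph{The term $\delta(\theta_n^*)$.} By \Cref{siam:assump:out-var-lip},
\[
\delta(\theta_n^*)\le \delta(\theta_b)+\tilde\ell_\delta\|\theta_n^*-\theta_b\|\le \sup_{\theta\in\Theta^*}\delta(\theta)+\tilde\ell_\delta\,\mathrm{dist}(\theta_n^*,\Theta^*).
\]

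\emph{The term $\mathrm{Var}(S(\theta_n^*))$.} By \Cref{siam:assump:var-lips}, in the same way,
\[
\mathrm{Var}(S(\theta_n^*))\le \sup_{\Theta^*}\mathrm{Var}(S)+\tilde\ell_V\,\mathrm{dist}(\theta_n^*,\Theta^*).
\]

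The statement, however, places expectations on the last two error terms, and it contains the extra summand $\tilde\ell_E^2\,\mathbb E[\mathrm{dist}^2]$. This indicates that the intended argument treats the variance piece through its definition $\mathbb E[S^2]-(\mathbb E S)^2$, with the inner expectation over the data-dependent $\theta_n^*$ taken over the sampling of $\mathcal D_n$. So I would write $\mathrm{Var}(S(\theta_n^*))$ as the variance of $S(\theta_n^*)$, where $\theta_n^*$ is random through $\mathcal D_n$. The squared-mean part contributes a term $\mathbb E[(\tilde f(\theta_n^*))^2]\le\tilde\ell_E^2\,\mathbb E[\mathrm{dist}^2]$, again by \Cref{siam:assump:lips}. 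The centered part is compared with $\mathrm{Var}(S(\theta_b))$ using \Cref{siam:assump:var-lips} and then averaged over the sample, which gives $\tilde\ell_V\,\mathbb E[\mathrm{dist}(\theta_n^*,\Theta^*)]$. Adding the three bounds and multiplying by $2/\bar\ell_C$ yields the stated inequality. Since all added terms are nonnegative, there is slack, so even the cruder deterministic version is dominated by the stated right-hand side once the additional expectation terms are included.

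\emph{Main obstacle.} The hard step is justifying this variance term with expectations rigorously. We must make precise that $\theta_n^*$ depends on $\mathcal D_n$ while the $(X,Y)$ in $S$ is an independent fresh draw, so that conditioning on $\mathcal D_n$ and applying the tower property is legitimate. We also need measurability of a selection $\theta_n^*\in\Theta_n^*$ and of the nearest-point map $\theta_b$. I would handle this by conditioning on $\mathcal D_n$: the conditional variance is then $\mathrm{Var}(S(\theta))|_{\theta=\theta_n^*}$, which is bounded by the Lipschitz step above. The unconditional variance adds the spread of $\tilde f(\theta_n^*)$, which is controlled by $\tilde\ell_E^2\,\mathbb E[\mathrm{dist}^2]$.
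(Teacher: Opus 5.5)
Your main line is the paper's own proof. It applies strong convexity at a point of $\Theta_n^*$ and splits $\bar f=f+\delta+\mathrm{Var}(S)$. It then uses Lipschitz bounds on $f$ (with $\tilde f=0$ on $\Theta^*$) and on $\delta$. For the variance piece it uses the law of total variance conditioned on $\theta_n^*$: the conditional part is bounded via \Cref{siam:assump:var-lips} and averaged, and $\mathrm{Var}(\tilde f(\theta_n^*))\le\mathbb{E}[(\tilde f(\theta_n^*))^2]\le\tilde\ell_E^2\,\mathbb{E}[\mathrm{dist}(\theta_n^*,\Theta^*)^2]$. Two variations are harmless and slightly more careful: you use $\bar f(\bar\theta^*)\ge 0$ instead of the paper's $\bar f(\bar\theta^*)\ge f(\theta^*)=0$, and you give a compactness argument for the existence of a nearest point in $\Theta^*$.

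Two things in your write-up are not right. First, the ``slack'' remark is false. Your pathwise bound carries the random term $\tilde\ell_V\,\mathrm{dist}(\theta_n^*,\Theta^*)$. The nonnegative constants $\tilde\ell_V\mathbb{E}[\mathrm{dist}(\theta_n^*,\Theta^*)]+\tilde\ell_E^2\mathbb{E}[\mathrm{dist}(\theta_n^*,\Theta^*)^2]$ do not dominate it on outcomes where $\mathrm{dist}(\theta_n^*,\Theta^*)$ is well above its mean, and nothing in the assumptions excludes such outcomes. So the deterministic version does not imply the stated inequality, and everything rests on the total-variance step.

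Second, that step does not remove the obstacle you flagged. What enters the pathwise inequality through $\bar f(\theta_n^*)$ is the conditional variance $\mathrm{Var}(S(\theta))|_{\theta=\theta_n^*}$, with fresh $(X,Y)$ and the data held fixed. The law of total variance instead bounds the unconditional variance $\mathbb{E}[\mathrm{Var}(S(\theta))|_{\theta=\theta_n^*}]+\mathrm{Var}(\tilde f(\theta_n^*))$. That quantity dominates the conditional variance only in expectation, not realization by realization.

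The paper makes this same substitution in (A$_3$) without comment, so you are faithfully reproducing its argument. A rigorous version should do one of two things:
\begin{itemize}
\item keep the pathwise bound, with $\tilde\ell_V\,\mathrm{dist}(\theta_n^*,\Theta^*)$ in place of the two expectation terms; or
\item take expectations of the entire strong-convexity inequality, which gives a bound on $\mathbb{E}[\mathrm{dist}(\bar\theta^*,\Theta_n^*)^2]$ in the spirit of \Cref{siam:cor:exp-sol-acc}.
\end{itemize}
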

For the proof, see \Cref{siam:appendix:sol-acc-proof}. We then establish the convergence rate of this bound. 
\begin{corollary}[Convergence Rate of Solution Accuracy]\label{siam:cor:conv-rate}
    Under the assumptions of \Cref{siam:thm:sol-acc} and \Cref{siam:assump:BS2},
    \begin{equation*}
        \mathrm{dist}(\bar{\theta}^*, \Theta_n^*)^2
        \leq
        \frac{2}{\bar{\ell}_C}
        \sup_{\theta \in \Theta^*}
        \left(\delta(\theta) + \mathrm{Var}(S(\theta))\right)
        + \mathcal{O}_p(n^{-1/2}).
    \end{equation*}
\end{corollary}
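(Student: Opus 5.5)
The plan is to re-run the argument behind \Cref{siam:thm:sol-acc}, but to evaluate the spatial variability and the variance at one and the same root, so that they combine into $\sup_{\theta\in\Theta^*}(\delta(\theta)+\mathrm{Var}(S(\theta)))$. Bounding them separately would only give $\sup\delta+\sup\mathrm{Var}(S)$, which is weaker. I then apply \Cref{siam:thm:saa-rate} to all remaining terms. Throughout, fix any $\theta_n^*\in\Theta_n^*$. Since $\Theta^*=\tilde\Theta^*$ is closed (as $\tilde f$ is continuous by \Cref{siam:assump:lips}) and $\Theta$ is compact, I can pick $\hat\theta\in\Theta^*$ with $\|\theta_n^*-\hat\theta\|=\mathrm{dist}(\theta_n^*,\Theta^*)$.

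\textbf{Step 1: reduce to a function gap.} By \Cref{siam:cor:strong-convex}, and since $\theta_n^*\in\Theta_n^*$,
\[
\mathrm{dist}(\bar\theta^*,\Theta_n^*)^2\le\|\theta_n^*-\bar\theta^*\|^2\le \tfrac{2}{\bar\ell_C}\bigl(\bar f(\theta_n^*)-\bar f(\bar\theta^*)\bigr)\le \tfrac{2}{\bar\ell_C}\bar f(\theta_n^*),
\]
where the last step uses $\bar f\ge 0$.

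\textbf{Step 2: decompose $\bar f(\theta_n^*)$.} By the decomposition preceding \Cref{siam:def:spatial-variability},
\[
\bar f(\theta_n^*)=f(\theta_n^*)+\delta(\theta_n^*)+\mathrm{Var}(S(\theta_n^*)),
\]
where $\mathrm{Var}(S(\cdot))$ is a deterministic function evaluated at the random point $\theta_n^*$. I bound the three pieces as follows.
\begin{itemize}
\item Since $\tilde f(\hat\theta)=0$, \Cref{siam:assump:lips} gives $f(\theta_n^*)=(\tilde f(\theta_n^*)-\tilde f(\hat\theta))^2\le\tilde\ell_E^2\,\mathrm{dist}(\theta_n^*,\Theta^*)^2$.
\item \Cref{siam:assump:out-var-lip} gives $\delta(\theta_n^*)\le\delta(\hat\theta)+\tilde\ell_\delta\,\mathrm{dist}(\theta_n^*,\Theta^*)$.
\item \Cref{siam:assump:var-lips} gives $\mathrm{Var}(S(\theta_n^*))\le\mathrm{Var}(S(\hat\theta))+\tilde\ell_V\,\mathrm{dist}(\theta_n^*,\Theta^*)$.
\end{itemize}
The key point is that both Lipschitz bounds are anchored at the \emph{same} $\hat\theta$. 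Hence
\[
\delta(\hat\theta)+\mathrm{Var}(S(\hat\theta))\le\sup_{\theta\in\Theta^*}\bigl(\delta(\theta)+\mathrm{Var}(S(\theta))\bigr).
\]
Altogether,
\[
\mathrm{dist}(\bar\theta^*,\Theta_n^*)^2\le\tfrac{2}{\bar\ell_C}\sup_{\theta\in\Theta^*}(\delta+\mathrm{Var}(S))+\tfrac{2}{\bar\ell_C}\Bigl((\tilde\ell_\delta+\tilde\ell_V)\,\mathrm{dist}(\theta_n^*,\Theta^*)+\tilde\ell_E^2\,\mathrm{dist}(\theta_n^*,\Theta^*)^2\Bigr).
\]

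\textbf{Step 3: rates.} By \Cref{siam:thm:saa-rate}, $\mathrm{dist}(\theta_n^*,\Theta^*)=O_p(n^{-1/2})$, and therefore its square is $O_p(n^{-1})\subseteq O_p(n^{-1/2})$. The remainder in Step 2 is a finite linear combination of these terms with deterministic constants, so it is $O_p(n^{-1/2})$, which yields the stated bound. If one prefers to follow the form of \Cref{siam:thm:sol-acc}, whose variance-related terms appear in expectation, \Cref{siam:assump:BS2} and the preceding corollary make those expectations $O(n^{-1/2})$ and $O(n^{-1})$. Deterministic sequences of these orders are trivially $O_p(n^{-1/2})$.

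The main obstacle is Step 2's bookkeeping. One must use a single nearest root $\hat\theta$ for both $\delta$ and $\mathrm{Var}(S)$, and must not pass to separate suprema before combining them. Otherwise only the weaker form $\sup\delta+\sup\mathrm{Var}(S)$ results, rather than the stated $\sup_{\theta\in\Theta^*}(\delta+\mathrm{Var}(S))$. A minor point is the measurability and existence of the nearest point $\hat\theta$, which follows from the compactness and closedness of $\Theta^*$ noted above.
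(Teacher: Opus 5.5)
Your argument is correct, but it does not follow the paper's route. The paper proves the corollary in one line, by substituting \Cref{siam:thm:saa-rate} and \Cref{siam:assump:BS2} into the bound of \Cref{siam:thm:sol-acc}. The in-expectation terms $\tilde{\ell}_V\mathbb{E}[\mathrm{dist}(\theta_n^*,\Theta^*)]$ and $\tilde{\ell}_E^2\mathbb{E}[\mathrm{dist}(\theta_n^*,\Theta^*)^2]$ in that theorem come from a law-of-total-variance step that treats $\theta_n^*$ as random inside $\mathrm{Var}(S(\theta_n^*))$, and this is why \Cref{siam:assump:BS2} is needed there. You instead rerun the argument pathwise. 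You read $\mathrm{Var}(S(\cdot))$ as a deterministic function evaluated at $\theta_n^*$, which is the reading under which $\bar f=f+\delta+\mathrm{Var}(S)$ holds at $\theta_n^*$, and you anchor both Lipschitz bounds at the same nearest root $\hat\theta$. This gains three things. First, the combined supremum $\sup_{\theta\in\Theta^*}(\delta(\theta)+\mathrm{Var}(S(\theta)))$ is genuinely justified. \Cref{siam:thm:sol-acc} as stated only gives $\sup_{\Theta^*}\delta+\sup_{\Theta^*}\mathrm{Var}(S)$, and the appendix inequality $\delta(\theta^*)+\sup_{\Theta^*}\mathrm{Var}(S)\le\sup_{\Theta^*}(\delta+\mathrm{Var}(S))$, which the paper's combined form relies on, need not hold (take two roots, one with large $\delta$ and small $\mathrm{Var}(S)$, the other the reverse). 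Second, the extra $\tilde{\ell}_E^2\mathbb{E}[\mathrm{dist}^2]$ term disappears. Third, \Cref{siam:assump:BS2} becomes superfluous, since only $\mathrm{dist}(\theta_n^*,\Theta^*)=O_p(n^{-1/2})$ is used. The paper's route is shorter only because it reuses \Cref{siam:thm:sol-acc}, whose mixed pathwise/expectation form it later integrates for \Cref{siam:cor:exp-sol-acc}. Your pathwise bound would serve that purpose as well, after taking expectations and applying Cauchy--Schwarz.
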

\begin{proof}
    Follows directly from \Cref{siam:thm:sol-acc} by applying \Cref{siam:thm:saa-rate} and \Cref{siam:assump:BS2}.
\end{proof}
\Cref{siam:cor:conv-rate} shows that the accuracy of the surrogate solution is governed by the spatial variability, the variance of the aggregated residual, and the curvature of the original objective. In particular, solving \eqref{siam:eq:spaa} yields a more reliable proxy for \eqref{siam:eq:sp} when the variability terms are small, or when $\bar{f}$ is sufficiently strongly curved around its minimizer so that the effect of these variability terms is small. Importantly, these components are intrinsic to the calibration problem itself and cannot be reduced through \eqref{siam:eq:spaa}. However, a better treatment of finite-sample approximation can be obtained by accounting for sampling uncertainty, where $\mathcal{D}_n$ is treated as random. To formalize this, we impose the following assumption.
\begin{assumption}[Improved Uniform Mean Squared Error of the Stochastic Estimator]\label{siam:assump:better-mse}
Define $\check{f}_n$ as a stochastic estimator of $f$ constructed from $\mathcal{D}_n$, and $\check{\Theta}_n^* := \arg\min_{\theta \in \Theta} \check f_n(\theta).$ Then,
\[
    \mathbb{E}\left[\sup_{\theta \in \Theta}(\check f_n(\theta)-f(\theta))^2\right]
    \leq
    \mathbb{E}\left[\sup_{\theta \in \Theta} (f_n(\theta)-f(\theta))^2\right].
\]
\end{assumption}
\Cref{siam:assump:better-mse} states that the stochastic estimator $\check{f}_n$ provides better estimates of $f$ than $f_n$ over $\Theta$. In \Cref{siam:sec:sto-metamodeling}, we show that stochastic kriging motivates \Cref{siam:assump:better-mse} under suitable regularity conditions. The next result gives an expected version of the solution accuracy bound.
\begin{corollary}[Expected Solution Accuracy Bound]\label{siam:cor:exp-sol-acc}
    Under the assumptions of \Cref{siam:thm:sol-acc} and \Cref{siam:assump:root-stab},
    \begin{align*}
        \mathbb{E}\left[\mathrm{dist}(\bar{\theta}^*, \Theta_n^*)^2\right]
        &\leq
        \frac{2}{\bar{\ell}_C}
        \Bigg(
            2\tilde{\ell}_E^2\mathbb{E}\left[\mathrm{dist}(\theta_n^*, \Theta^*)^2\right]
            +
            \sup_{\theta \in \Theta^*}\delta(\theta)
            +
            \sup_{\theta \in \Theta^*}\mathrm{Var}(S(\theta))
        \\
        &\quad
            + (\tilde{\ell}_\delta + \tilde{\ell}_V)
            \left(\mathbb{E}\left[\mathrm{dist}(\theta_n^*, \Theta^*)^2\right]\right)^{1/2}
        \Bigg).
    \end{align*}
\end{corollary}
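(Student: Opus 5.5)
The plan is to take expectations of both sides of the bound in \Cref{siam:thm:sol-acc} and then tidy up the terms that are not already in expectation form. The right-hand side of \Cref{siam:thm:sol-acc} has three kinds of terms.
\begin{enumerate}
\item The deterministic constants $\sup_{\theta\in\Theta^*}\delta(\theta)$ and $\sup_{\theta\in\Theta^*}\mathrm{Var}(S(\theta))$. These are unchanged by the expectation.
\item The random terms $\tilde{\ell}_E^2\mathrm{dist}(\theta_n^*,\Theta^*)^2$ and $\tilde{\ell}_\delta\mathrm{dist}(\theta_n^*,\Theta^*)$, which depend on $\mathcal{D}_n$ through $\theta_n^*$.
\item The terms $\tilde{\ell}_V\mathbb{E}[\mathrm{dist}(\theta_n^*,\Theta^*)]$ and $\tilde{\ell}_E^2\mathbb{E}[\mathrm{dist}(\theta_n^*,\Theta^*)^2]$, which are already expectations and hence constants.
\end{enumerate}
Since $\mathrm{dist}(\bar\theta^*,\Theta_n^*)^2$ is a nonnegative random variable dominated pointwise by the right-hand side, monotonicity and linearity of expectation let us take expectations termwise. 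Compactness of $\Theta$ ensures all distances are bounded, so every expectation is finite and no integrability issue arises.

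After taking expectations, the two quadratic contributions $\tilde{\ell}_E^2\mathbb{E}[\mathrm{dist}(\theta_n^*,\Theta^*)^2]$ combine into $2\tilde{\ell}_E^2\mathbb{E}[\mathrm{dist}(\theta_n^*,\Theta^*)^2]$. This matches the first term of the claimed bound.

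The linear terms become $(\tilde{\ell}_\delta+\tilde{\ell}_V)\mathbb{E}[\mathrm{dist}(\theta_n^*,\Theta^*)]$. Jensen's inequality (equivalently Cauchy--Schwarz, as in the preceding corollary) gives $\mathbb{E}[\mathrm{dist}]\le(\mathbb{E}[\mathrm{dist}^2])^{1/2}$, which yields the square-root term in the statement. Collecting everything and keeping the prefactor $2/\bar{\ell}_C$ gives the bound.

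The main subtlety is measurability and the choice of $\theta_n^*$. If $\Theta_n^*$ is not a singleton, the bound in \Cref{siam:thm:sol-acc} holds for any $\theta_n^*\in\Theta_n^*$, so I would fix a measurable selection $\theta_n^*$ to make $\mathrm{dist}(\theta_n^*,\Theta^*)$ a random variable. Measurable selections exist because $f_n$ is a Carathéodory function on a compact set. With that selection, the argument above is a short computation.

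The cited \Cref{siam:assump:root-stab} is not needed for the inequality itself. Its role is to make the right-hand side meaningful: through \Cref{siam:thm:saa-rate} it guarantees the expected distances are small, at order $O(n^{-1})$ and $O(n^{-1/2})$ under \Cref{siam:assump:BS2}.
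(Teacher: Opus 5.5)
Your proposal is correct and is essentially the paper's proof. You take expectations of the bound in \Cref{siam:thm:sol-acc} and merge the two $\tilde{\ell}_E^2$ terms into $2\tilde{\ell}_E^2\mathbb{E}[\mathrm{dist}(\theta_n^*,\Theta^*)^2]$ and the two linear terms into $(\tilde{\ell}_\delta+\tilde{\ell}_V)\mathbb{E}[\mathrm{dist}(\theta_n^*,\Theta^*)]$, then apply Cauchy--Schwarz (Jensen) to reach $(\mathbb{E}[\mathrm{dist}(\theta_n^*,\Theta^*)^2])^{1/2}$; your points on measurable selection, finiteness via compactness, and \Cref{siam:assump:root-stab} not being used are sound details the paper leaves implicit.
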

See \Cref{siam:appendix:exp-sol-acc-proof} for the proof. The following result shows that an improvement in the mean squared estimation error of the stochastic estimator can lead to a tighter upper bound on the expected solution accuracy.
\begin{corollary}[Improved Upper Bound on Expected Solution Accuracy]\label{siam:cor:tight-bound}
Let $\overline{\mathbb{E}}[\mathrm{dist}(\bar{\theta}^*, \mathcal{A})^2]$ denote the upper bound on $\mathbb{E}[\mathrm{dist}(\bar{\theta}^*, \mathcal{A})^2]$ given in \Cref{siam:cor:exp-sol-acc}. Under the assumptions of \Cref{siam:cor:exp-sol-acc} and \Cref{siam:assump:better-mse},
\[
    \overline{\mathbb{E}}\left[
        \mathrm{dist}(\bar{\theta}^*, \check{\Theta}_n^*)^2
    \right]
    \leq
    \overline{\mathbb{E}}\left[
        \mathrm{dist}(\bar{\theta}^*, \Theta_n^*)^2
    \right].
\]
\end{corollary}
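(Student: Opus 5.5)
The plan is to note that the bound $\overline{\mathbb{E}}[\cdot]$ from \Cref{siam:cor:exp-sol-acc} depends on the solution set only through $e(\mathcal{A}) := \mathbb{E}[\mathrm{dist}(\theta_n^*,\Theta^*)^2]$, with $\theta_n^*$ replaced by a minimizer $\check\theta_n^* \in \check{\Theta}_n^*$ when $\mathcal{A}=\check{\Theta}_n^*$. The terms $\sup_{\Theta^*}\delta$ and $\sup_{\Theta^*}\mathrm{Var}(S)$ do not involve the estimator. The remaining part,
\[
\frac{2}{\bar{\ell}_C}\Big(2\tilde{\ell}_E^2\, e + (\tilde{\ell}_\delta+\tilde{\ell}_V)\, e^{1/2}\Big),
\]
is nondecreasing in $e\ge 0$. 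Hence it suffices to show $\mathbb{E}[\mathrm{dist}(\check\theta_n^*,\Theta^*)^2] \le \mathbb{E}[\mathrm{dist}(\theta_n^*,\Theta^*)^2]$, or at least the analogous inequality between quantities that upper-bound these errors and that the corollary's bound can be stated in terms of.

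The link between estimator error and solution error comes from the quadratic growth shown after \Cref{siam:thm:saa-rate}, namely $f(\theta)\ge \tilde{\ell}_R^2\,\mathrm{dist}(\theta,\Theta^*)^2$ with $f=0$ on $\Theta^*$. Write $\Delta_n:=\sup_\Theta|\check f_n-f|$. For $\check\theta_n^*$ and any $\theta^*\in\Theta^*$,
\[
f(\check\theta_n^*) \le \check f_n(\check\theta_n^*)+\Delta_n \le \check f_n(\theta^*)+\Delta_n \le f(\theta^*)+2\Delta_n = 2\Delta_n ,
\]
so $\mathrm{dist}(\check\theta_n^*,\Theta^*)^2 \le 2\Delta_n/\tilde{\ell}_R^2$. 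The same bound holds for $\theta_n^*$ with $\sup|f_n-f|$ in place of $\Delta_n$. Taking expectations and applying Jensen gives $\mathbb{E}[\mathrm{dist}^2]\le (2/\tilde{\ell}_R^2)\,(\mathbb{E}[\sup(\cdot)^2])^{1/2}$ for both estimators. \Cref{siam:assump:better-mse} then orders these right-hand sides.

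The main obstacle is that this argument orders upper bounds on the two errors, not the errors themselves. A smaller uniform mean squared error does not force $\check\theta_n^*$ to be closer to $\Theta^*$ than $\theta_n^*$ realization by realization, or even in expectation. I will handle this as the corollary's phrasing with $\overline{\mathbb{E}}$ allows, by defining $\overline{\mathbb{E}}[\mathrm{dist}(\bar\theta^*,\mathcal{A})^2]$ with $e(\mathcal{A})$ replaced by the growth-based proxy $(2/\tilde{\ell}_R^2)(\mathbb{E}[\sup_\Theta(\cdot-f)^2])^{1/2}$. This stays a valid upper bound by the display above, since \Cref{siam:cor:exp-sol-acc} is monotone in $e$. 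With that proxy, monotonicity in $e$ and \Cref{siam:assump:better-mse} give the claimed inequality directly. I would also check that the proof of \Cref{siam:cor:exp-sol-acc} in the appendix uses nothing about $\Theta_n^*$ beyond its being the minimizer set of some estimator of $f$, so that it applies verbatim to $\check{\Theta}_n^*$.
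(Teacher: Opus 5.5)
Your proposal matches the paper's proof step for step. The appendix also derives $f(\check{\theta}_n^*)\le 2\sup_{\Theta}|\check f_n - f|$ from the minimizing property and $f=0$ on $\Theta^*$. It converts this via \Cref{siam:assump:root-stab} and Cauchy--Schwarz into $\mathbb{E}[\mathrm{dist}(\check{\theta}_n^*,\Theta^*)^2]\le (2/\tilde{\ell}_R^2)(\mathbb{E}[\sup_{\Theta}(\check f_n-f)^2])^{1/2}$, does the same for $\theta_n^*$, and orders the two bounds via \Cref{siam:assump:better-mse}. Like you, the paper orders only these growth-based proxies, not the actual errors, and then reads $\overline{\mathbb{E}}$ with the proxies substituted. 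Your remark that the bound of \Cref{siam:cor:exp-sol-acc} is nondecreasing in $e$ is exactly what makes that final step valid, and the paper leaves it implicit.
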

For the proof, see \Cref{siam:appendix:tight-bound-proof}. We note that \Cref{siam:cor:tight-bound} does not necessarily imply a comparable convergence rate as shown in \Cref{siam:thm:sol-acc}. Establishing such a rate would require additional assumptions on $\check f_n$, analogous to those used in \Cref{siam:thm:saa-rate}. We do not pursue that extension here, although one may expect a similar type of rate result under suitable regularity conditions. We note that several assumptions used in this section, such as \Cref{siam:assump:root-stab}, are imposed globally over $\Theta$ rather than locally near the optimum, primarily for technical convenience. The key implication of this section is that \eqref{siam:eq:spaa} can become a reliable proxy for \eqref{siam:eq:sp}, and that a stochastic treatment can further improve the solution accuracy bound.

\section{Calibration with Deterministic Simulation Metamodeling}\label{siam:sec:det-metamodeling}
Solving the calibration problem \eqref{siam:eq:p} can be computationally challenging, especially when $\mathcal{D}_n$ is large or when the evaluation of $h(\cdot)$ itself is time consuming. This motivates the use of a simulation metamodeling approach, often referred to as the \textit{response surface} method, in which a surrogate model is constructed to approximate the objective function.

\subsection{Kriging}
Kriging is a widely-used metamodeling approach that was originally developed in geostatistics for spatial interpolation, such as estimating mineral distributions in mining \cite{oliver1990kriging}. It has since then evolved into a standard technique with applications in machine learning, uncertainty quantification, and simulation metamodeling. In the machine learning area, kriging is more commonly referred to as Gaussian process regression (GPR) and is widely used for hyperparameter tuning, prediction and uncertainty quantification. To ease the use of Kriging for solving the calibration task in \eqref{siam:eq:p}, we generally assume that $\bar{f}_{n}(\theta)$ is a realization of a Gaussian random field (GRF) over $\Theta$ \cite{staum2009better}.

Define $\Theta^t$ as the set of evaluated parameters at iteration $t$. To initialize the procedure, we allow for an initial design consisting of $p$-points in the parameter space, so that $\Theta^t$ consists of these $p$-initial design points when $t=0$, i.e., $\Theta^0 = \{\theta_1, \dots, \theta_p\}$. Let $\bar{\boldsymbol{f}}^t_{n}$ represents the corresponding function values at $\Theta^t$. For solving \eqref{siam:eq:p}, we get $\bar{\boldsymbol{f}}^t_{n} = [\bar{f}_{n}(\theta_1), \dots, \bar{f}_{n}(\theta_{p+t})]^\top$. Then, the predictive distribution of kriging at an unseen configuration $\theta$ at iteration-$t$ is expressed as:
\begin{align}\label{siam:eq:kriging-pred-dist}
    \eta^t_{n}(\theta) &\sim \mathcal{N}\left(\mu^t_{n}(\theta), (\sigma^t_{n})^2(\theta)\right),\\
    \mu^t_{n}(\theta) &= \mathbf{k}(\theta, \Theta^t;l) [\mathbf{k}(\Theta^t,\Theta^t;l)]^{-1} \bar{\boldsymbol{f}}^t_{n}, \nonumber\\
    (\sigma^t_{n})^2(\theta) &= k(\theta, \theta;l) - \mathbf{k}(\theta,\Theta^t;l) [\mathbf{k}(\Theta^t,\Theta^t;l)]^{-1} \mathbf{k}(\theta,\Theta^t;l)^\top \nonumber,
\end{align}
where $\Theta^t = \{\theta_1, \dots, \theta_{p+t}\}$. The kernel function $k(\cdot, \cdot; l)$ defines the correlation between two configurations in the parameter space.  A common choice for kernel function is squared exponential kernel, also known as radial basis function (RBF) kernel, which is defined by $k(\theta,\theta';l) = \exp\left(-\frac{\|\theta-\theta'\|^2}{2l^2}\right),$
where $\theta, \theta' \in \mathbb{R}^{m_\theta}$, and $l > 0$ is the length-scale hyperparameter for the kernel. Given two sets of points $w \in \mathbb{R}^{m_m \times m_\theta}$ and $v \in \mathbb{R}^{m_n \times m_\theta}$, we define the kernel matrix $\mathbf{k}(w,v;l) \in \mathbb{R}^{m_m \times m_n}$, where each entry corresponds to the kernel evaluation between a pair of points from the two sets. The $(i,j)-$th entry of the kernel matrix $\mathbf{k}(w,v;l)$ is $[\mathbf{k}(w,v;l)]_{i,j} := k([w]_i, [v]_j;l).$
$\mathbf{k}(\theta, \Theta^t ;l) \in \mathbb{R}^{p+t}$ denotes the vector of kernel evaluations between the test point $\theta$ and the set of evaluated points $\Theta^t \in \mathbb{R}^{(p+t) \times m_\theta}$. The kernel matrix $\mathbf{k}(\Theta^t, \Theta^t;l) \in \mathbb{R}^{(p+t) \times (p+t)}$ represents the pairwise kernel evaluation between the evaluated configurations. The hyperparameter $l$ controls how smoothly the function varies across the parameter space. Smaller values of $l$ make the kernel decay more rapidly with distance, allowing the function to capture more localized variations. In contrast, larger values of $l$ result in a more broader correlation across the parameter space, thereby enforcing smoother and lower variational structure. In practice, $l$ is chosen by maximizing the log marginal likelihood as
\begin{equation*}
    \mathcal{L}^t_{n}(l)=
    -\frac{1}{2} (\bar{\boldsymbol{f}}^t_{n})^\top \left[\mathbf{k}(\Theta^t,\Theta^t;l)\right]^{-1} \bar{\boldsymbol{f}}^t_{n} 
    - \frac{1}{2} \log \det \mathbf{k}(\Theta^t,\Theta^t;l) 
    - \frac{p+t}{2} \log(2\pi).
\end{equation*}

\subsection{Sequential Design and Acquisition Strategies}

As noted, metamodeling approaches are employed to reduce the computational cost of the time-consuming computer models. These surrogates are refined sequentially by selecting new parameter values in a way that exploits the surrogate’s structure.
This process is carried out by optimizing an \textit{acquisition function}, which provides the criterion for choosing the next evaluation point. When kriging is used, this sequential process is typically referred to as \textit{Bayesian optimization}, particularly in the machine learning domain \cite{frazier2018tutorial}. Some commonly used acquisition functions are Lower Confidence Bound (LCB), Probability of Improvement (PI), and Expected Improvement (EI) \cite{kleijnen2019simulation, sha2020applying}.

In the following, we review three commonly used acquisition functions. 
For consistency, we adopt the same notation as before. 

\subsubsection*{Lower Confidence Bound} 

Lower Confidence Bound (LCB) is a widely used acquisition function that balances exploration and exploitation when selecting the next evaluation point. In the minimization context, LCB favors points where the predicted mean is low with higher uncertainty (to explore). 
The LCB acquisition function is defined as:
\begin{equation*}
    \text{LCB}^t_{n}(\theta) = \mu^t_{n}(\theta) - \kappa \sigma^t_{n}(\theta)
\end{equation*}
where $\kappa > 0$ is a user-defined parameter that balances the exploitation and exploration. Larger values of $\kappa$ emphasize exploration by giving more weight to uncertainty, while smaller values trigger more exploitation over the mean prediction. In this setting, optimizing the acquisition function is carried out by minimizing the LCB, i.e., $\theta_{t+1} = \arg\min_{\theta \in \Theta}\text{LCB}^t_{n}(\theta)$.

\subsubsection*{Probability of Improvement} 

Probability of Improvement (PI) selects the next evaluation point based on the likelihood that it is expected to yield a lower value than the current best observed value so far. Define $v^t_{n}$ as the current best observed value among the evaluated configurations, $v^t_{n} := \min \bar{\boldsymbol{f}}^t_{n}$. The PI acquisition function is defined as:
\begin{equation*}
    \text{PI}^t_{n}(\theta) = \Phi\left( \frac{v^t_{n} - \mu^t_{n}(\theta)}{\sigma^t_{n}(\theta)} \right),
\end{equation*}
for the detailed derivation, see \Cref{siam:appendix:pi}. The next evaluation point is carried out by maximizing the PI, i.e., $\theta_{t+1} = \arg\max_{\theta \in \Theta}\text{PI}^t_{n}(\theta)$.

\subsubsection*{Expected Improvement} Expected Improvement (EI) selects the next evaluation point based on the expected magnitude of improvement over the current best observed value. Define the standardized threshold $z^t_{n}(\theta) := \frac{v^t_{n} - \mu^t_{n}(\theta)}{\sigma^t_{n}(\theta)}$. The EI acquisition function is defined as:
\begin{equation*}
    \text{EI}^t_{n}(\theta)
    = 
    (v^t_{n} - \mu^t_{n}(\theta)) \Phi(z^t_{n}(\theta)) 
    + \sigma^t_{n}(\theta) \phi(z^t_{n}(\theta)),
\end{equation*}
for the detailed derivation, see \Cref{siam:appendix:ei}. Using EI, the next evaluation point is obtained by maximizing EI, i.e., $\theta_{t+1} = \arg\max_{\theta \in \Theta} \text{EI}^t_{n}(\theta).$

\subsubsection*{Acquisition Function Optimization}  

In the sequential design framework, once the acquisition is chosen, the next evaluation point is decided by optimizing the acquisition function. One common approach for this step is to employ a multi-start strategy, where several initial points are randomly sampled across the parameter space and then refined using an efficient local optimization solver \cite{frazier2018tutorial}. 
Quasi-Newton or other nonlinear optimization methods are typically employed when derivatives can be computed efficiently. Each evaluation of the acquisition function requires computing the posterior predictive distribution at the candidate point, which involves inverting the regularized kernel matrix in \eqref{siam:eq:kriging-pred-dist}. This matrix inversion is usually the most computationally intensive part of the sequential design procedure. Accordingly, as more observations are accumulated, acquisition function optimization becomes more computationally expensive.

\section{Calibration with Root Finding Strategy}\label{siam:sec:rf-metamodeling}

In this section, we introduce an alternative root finding framework to solve \eqref{siam:eq:spaa-saa}. Define the sample root-finding objective as
\begin{equation}\label{siam:eq:spaarf-saa}
    \tilde{f}_n(\theta)
    :=
    \frac{1}{m_y}\frac{1}{n}\sum_{j=1}^n \boldsymbol{1}^\top r_j(\theta),
\end{equation}
and since $f_n(\theta)=(\tilde{f}_n(\theta))^2$, solving \eqref{siam:eq:spaa-saa} is equivalent to finding a root of \eqref{siam:eq:spaarf-saa} whenever such a root exists. The key motivation is that aggregating residuals before squaring preserves sign information, which can be effectively leveraged to guide the calibration process. Throughout, we assume that the sign of the discrepancy is treated symmetrically, without favoring either positive or negative values, although weighting schemes may be incorporated if desired to favor one over the other. We first establish the following assumptions.

\begin{assumption}[Regularity Conditions on the Sample Root-Finding Problem]
\label{siam:assump:reg:sample-rf}~
    
    \begin{enumerate}[label=(\alph*), ref=\theassumption(\alph*)]
    
    \item\label{siam:assump:existence} There exists a $\theta^*_n \in \Theta$ such that $\tilde{f}_{n}(\theta^*_n)=0$.
    
    \item\label{siam:assump:continuity} $\tilde{f}_{n}(\theta)$ is continuous 
    on $\theta \in \Theta$.
    
    \end{enumerate}
\end{assumption}

\begin{theorem}[Bolzano’s Theorem]\label{siam:thm:bolzano}
    Let $\tilde{f}: \Theta \to \mathbb{R}$ be continuous on a closed interval $[\theta_a, \theta_b] \subset \Theta$. If $\tilde{f}(\theta_a) \cdot \tilde{f}(\theta_b) < 0$, then there exists $\theta^* \in (\theta_a, \theta_b)$ such that $\tilde{f}(\theta^*) = 0$.
\end{theorem}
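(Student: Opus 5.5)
The statement is Bolzano's theorem for the continuous function $\tilde f$ on the segment $[\theta_a,\theta_b]\subset\Theta$. Since $\Theta\subset\mathbb{R}^{m_\theta}$ may be multidimensional, I would first reduce to the one-dimensional case by parametrizing the segment. Define
\[
  g(s) := \tilde f\bigl(\theta_a + s(\theta_b-\theta_a)\bigr), \qquad s\in[0,1].
\]
Then $g$ is a composition of continuous maps, hence continuous on $[0,1]$, and $g(0)g(1)=\tilde f(\theta_a)\tilde f(\theta_b)<0$. It therefore suffices to find $s^*\in(0,1)$ with $g(s^*)=0$ and set $\theta^*:=\theta_a+s^*(\theta_b-\theta_a)$. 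This point lies in the open segment $(\theta_a,\theta_b)$, and in $\Theta$ because the closed segment is assumed to be contained in $\Theta$.

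Without loss of generality $g(0)<0<g(1)$; otherwise replace $g$ by $-g$. Consider $A:=\{s\in[0,1] : g(s)\le 0\}$. This set is nonempty since it contains $0$ and is bounded above by $1$, so by completeness of $\mathbb{R}$ the number $s^*:=\sup A$ exists. Moreover, $s^*>0$, since $g<0$ on a neighborhood of $0$ by continuity, and $s^*<1$, since $g>0$ on a neighborhood of $1$.

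The central step is showing $g(s^*)=0$, which I would do by ruling out both strict signs.
\begin{itemize}
  \item If $g(s^*)>0$, continuity gives a $\delta>0$ with $g>0$ on $(s^*-\delta,s^*]$. No point of $A$ lies in that interval, so $s^*-\delta$ is an upper bound for $A$, contradicting that $s^*$ is the least upper bound.
  \item If $g(s^*)<0$, continuity gives $g<0$ on $[s^*,s^*+\delta)$ with $s^*+\delta\le 1$. Then points of $A$ exceed $s^*$, again a contradiction.
\end{itemize}
Hence $g(s^*)=0$, and $s^*\in(0,1)$ because $g(0)\ne0\ne g(1)$.

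The only real content is this supremum argument, which rests on the completeness of $\mathbb{R}$. The rest is bookkeeping: continuity is preserved along the segment, and the root is interior.
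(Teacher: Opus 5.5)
Your proof is correct and complete. The parametrization $g(s)=\tilde f(\theta_a+s(\theta_b-\theta_a))$ reduces the claim to $[0,1]$, and your least-upper-bound argument is the standard proof of the intermediate value theorem, including the checks that $0<s^*<1$ and the two sign contradictions.

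The paper gives no proof of this statement. It quotes Bolzano's theorem as a classical fact, just as it quotes dominated convergence, so there is no competing argument to compare with. What your segment reduction adds is an explicit meaning for $[\theta_a,\theta_b]$ and $(\theta_a,\theta_b)$ when $m_\theta>1$: they are the closed and open line segments. That segment lies in the hyperrectangle $R(\theta_a,\theta_b)$, which is exactly what justifies restricting the search to that region in the RSS step.

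Your reading is also the right one. If $(\theta_a,\theta_b)$ were read as the open box $\prod_l(\min\{[\theta_a]_l,[\theta_b]_l\},\max\{[\theta_a]_l,[\theta_b]_l\})$, the statement would fail whenever $\theta_a$ and $\theta_b$ share a coordinate, because that box is then empty.
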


While \Cref{siam:assump:existence} does not hold in general for finite $n$, we first impose this for clarity and later discuss in \Cref{siam:sec:rootless} where no root exists. For \Cref{siam:assump:continuity}, this assumption is typically satisfied, since most commonly used kernels such as RBF or Matérn, are continuous, and such continuity assumptions appear frequently in many computer model applications \cite{ord1975computer, cellier2013continuous}.
\Cref{siam:assump:existence} and \Cref{siam:assump:continuity} are particularly important, since under \Cref{siam:thm:bolzano}, they ensure the existence of a root when the discrepancy function exhibits opposite signs at two distinct points. Importantly, this enables sign-guided calibration, in which we exploit the signs to progressively narrow down the search space. \Cref{siam:fig:metamodel-min-vs-rf} illustrates this structural advantage.

\begin{figure}[!htbp]
    \centering
    \includegraphics[width=0.70\linewidth]{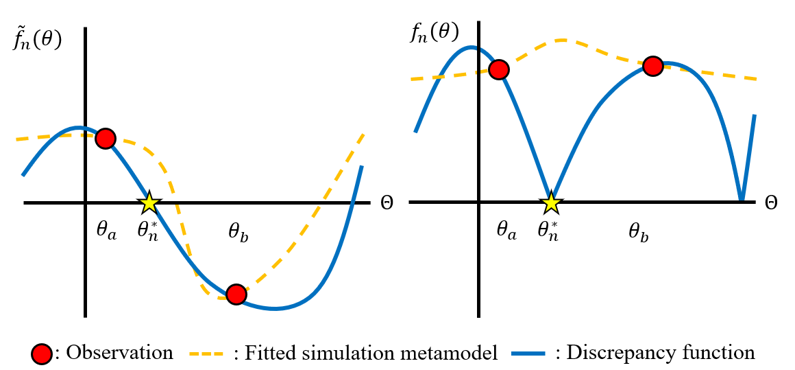}
    \caption{
        Left: a metamodel fitted to the signed discrepancy $\tilde{f}_n(\theta)$, where opposite signs at $\theta_a$ and $\theta_b$ guarantees the existence of a root within $[\theta_a, \theta_b]$ by continuity.  Right: a metamodel fitted to the squared discrepancy $f_n(\theta)$, where 
        additional evaluations may be needed to locate lower estimates within $[\theta_a, \theta_b]$.
    }
    \label{siam:fig:metamodel-min-vs-rf}
\end{figure}
Adapting this root finding strategy to the metamodeling framework, however, requires new acquisition functions, since the conventional acquisition functions are designed to target the regions with lower values. Instead, root finding acquisition function should target where the regions are expected to be close-to-zero. In what follows, we propose the root finding variants of the acquisition functions reviewed in \Cref{siam:sec:det-metamodeling}. 

\subsection{Proposed Root finding Acquisition Functions}

Following similar notation as before, to distinguish these quantities, we denote $\tilde{\eta}^t_{n}(\theta) \sim \mathcal{N}(\tilde{\mu}^t_{n}(\theta), (\tilde{\sigma}^t_{n})^2(\theta)),$
as the predictive distribution that is constructed with $\tilde{f}_n(\theta)$ from  \eqref{siam:eq:spaarf-saa}. Accordingly, in \eqref{siam:eq:kriging-pred-dist}, we use $\tilde{\boldsymbol{f}}^t_{n} = [\tilde{f}_{n}(\theta_1), \dots, \tilde{f}_{n}(\theta_{p+t})]^\top$ instead. 

\subsubsection*{Root finding with Lower Confidence Bound}

We extend the LCB acquisition function to root finding by favoring points where the predicted discrepancy is close-to-zero while encouraging exploration through uncertainty. Accordingly, we define the root finding LCB as:
\begin{equation*}
    {\widetilde{\text{LCB}}}^t_{n}(\theta) = |\tilde{\mu}^t_{n}(\theta)| - \kappa \tilde{\sigma}^t_{n}(\theta).
\end{equation*}
Since the predictive variance is always non-negative, only the mean component needs to be adjusted. As in the standard LCB, the balance between these exploitation and exploration is controlled by the user-defined parameter $\kappa > 0$. 
The next evaluation point is selected by minimizing ${\widetilde{\text{LCB}}}^t_{n}$, i.e., $\theta_{t+1} = \arg\min_{\theta \in \Theta} {\widetilde{\text{LCB}}}^t_{n}(\theta).$

\subsubsection*{Root finding with Probability of Improvement}

In the root finding setting, the improvement event is redefined to measure how much the predicted value moves closer to zero relative to the currently closest observed value, where ${i^t_{n}}^* := \arg\min_{1 \leq i \leq p+t} \left|[\tilde{\boldsymbol{f}}^t_{n}]_i\right|$ and $\tilde{v}^t_{n} := [\tilde{\boldsymbol{f}}^t_{n}]_{{i^t_{n}}^*}$ denote the index and evaluated configuration value whose observed discrepancy is closest to zero. The resulting root finding PI is defined as
\begin{equation*}
    {\widetilde{\text{PI}}}^t_{n}(\theta)
    = \Phi\left( \frac{|\tilde{v}^t_{n}| - \tilde{\mu}^t_{n}(\theta)}{\tilde{\sigma}^t_{n}(\theta)} \right) - \Phi\left( \frac{-|\tilde{v}^t_{n}| - \tilde{\mu}^t_{n}(\theta)}{\tilde{\sigma}^t_{n}(\theta)} \right),
\end{equation*}
for the detailed derivation, see \Cref{siam:appendix:pirf}.

\begin{figure}[!htbp]
    \centering
    \includegraphics[width=0.9\linewidth]{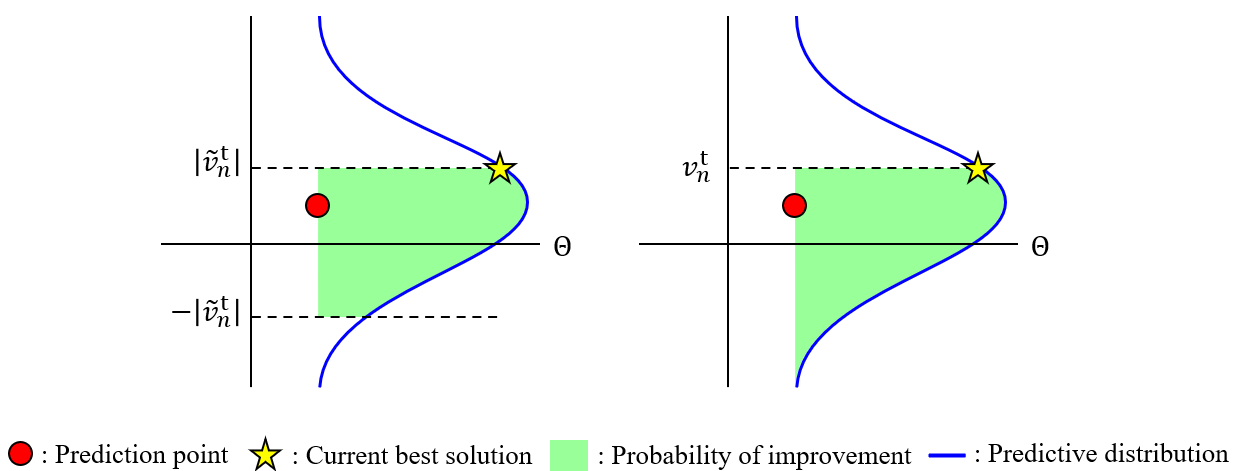}
        \caption{Probability of improvement under the root finding (left) and minimization framework (right).}
    \label{siam:fig:rf_improvement}
\end{figure}

\Cref{siam:fig:rf_improvement} illustrates the conceptual difference of improvement criteria between the minimization and root finding. In the minimization, improvement is defined as the predictive distribution producing a value lower than the current best value, resulting in a one-sided threshold. However, the root finding seeks predictions that are closer to zero, resulting in a two-sided threshold. The next evaluation point is then selected by maximizing ${\widetilde{\text{PI}}}^t_{n}$, i.e., $\theta_{t+1} = \arg\max_{\theta \in \Theta} {\widetilde{\text{PI}}}^t_{n}(\theta).$

\subsubsection*{Root finding with Expected Improvement}
In the root finding framework, the expected amount of improvement is redefined to quantify the expected reduction in the absolute deviation from zero. Let ${\ubar{z}}^t_{n}(\theta) := \frac{-|\tilde{v}^t_{n}| - \tilde{\mu}^t_{n}(\theta)}{\tilde{\sigma}^t_{n}(\theta)}$, $\tilde{z}^t_{n}(\theta) := \frac{-\tilde{\mu}^t_{n}(\theta)}{\tilde{\sigma}^t_{n}(\theta)}$, and ${\bar{z}}^t_{n}(\theta) := \frac{|\tilde{v}^t_{n}| - \tilde{\mu}^t_{n}(\theta)}{\tilde{\sigma}^t_{n}(\theta)}$ denote the standardized lower, central, and upper thresholds, respectively. The resulting root finding expected improvement is defined as
\begin{align*}
    {\widetilde{\text{EI}}}^t_{n}(\theta) 
    &= |\tilde{v}^t_{n}| \left( \Phi({\bar{z}}^t_{n}(\theta)) - \Phi({\ubar{z}}^t_{n}(\theta)) \right) 
    \\
    &
    \quad
    + 
    \tilde{\mu}^t_{n}(\theta) \left( 2\Phi({z}^t_{n}(\theta)) - \Phi({\bar{z}}^t_{n}(\theta)) - \Phi({\ubar{z}}^t_{n}(\theta)) \right) \\
    &
    \quad 
    - 
    \tilde{\sigma}^t_{n}(\theta) \left( 2\phi({z}^t_{n}(\theta)) - \phi({\bar{z}}^t_{n}(\theta)) - \phi({\ubar{z}}^t_{n}(\theta)) \right),
\end{align*}
for the detailed derivation, see \Cref{siam:appendix:eirf}. The next evaluation point is then selected by maximizing ${\widetilde{\text{EI}}}^t_{n}$, i.e., $\theta_{t+1} = \arg\max_{\theta \in \Theta} {\widetilde{\text{EI}}}^t_{n}(\theta).$

\subsection{Accelerating Acquisition Function Optimization}

Optimizing the acquisition function is the most computationally demanding step when using metamodel. 
In the root finding, however, this optimization step can be effectively accelerated by exploiting the sign information.
That is, we can bypass regions where no sign change occurs, thereby concentrating optimization efforts on those areas that are more likely to contain a root. Based on this insight, we introduce a search space reduction strategy tailored to this framework. We additionally derive the closed form first-order derivatives of the proposed root finding acquisition functions (see \Cref{siam:sec:appendix:gradient derivation} and \Cref{siam:sec:appendix:gradient validation}) to facilitate the use of gradient-based optimizer. 
\subsection*{Search Space Reduction}

As additional evaluations are acquired, the number of sign changing intervals (combinations of configurations) also increases. The key question is therefore which intervals to select and prioritize for subsequent evaluations. Since we assume no prior structural knowledge of the function, we focus on the most compact interval that contains a root. We quantify such compactness using a hypervolume measure based on the size of each sign-changing region in the parameter space. To formalize this, we first identify candidate subregions by examining all pairs $(\theta_a, \theta_b)$ with $1 \leq a < b \leq p+t$ such that $[\tilde{\boldsymbol{f}}^t_{n}]_a \cdot [\tilde{\boldsymbol{f}}^t_{n}]_b < 0$, 
where each pair defines a hyperrectangular subregion $R(\theta_a, \theta_b)$, 
\begin{equation*}
    R(\theta_a, \theta_b) = \prod\nolimits_{l=1}^{m_\theta} 
    \left[\min\{[\theta_{a}]_l, [\theta_{b}]_l\}, \max\{[\theta_{a}]_l, [\theta_{b}]_l\}\right].
\end{equation*}
The hypervolume of each subregion is computed as 
\begin{equation*}
    V(\theta_a, \theta_b) = \left(\prod\nolimits_{l=1}^{m_\theta} \max\{\left|[\theta_{a}]_l - [\theta_{b}]_l\right|, \underline{\theta}\}\right) 
    \cdot 
    \left|[\tilde{\boldsymbol{f}}^t_{n}]_a - [\tilde{\boldsymbol{f}}^t_{n}]_b\right|,
\end{equation*}
where $\underline{\theta}>0$ is a user-defined distance threshold to prevent zero-volume subregions when coordinates coincide along any axis. Figure~\ref{siam:fig:rss} illustrates this process and overall algorithmic framework is presented in \Cref{siam:alg:rss}.

\begin{algorithm}[htbp]
\caption{Reduced Search Space (RSS) - deterministic}
\label{siam:alg:rss}
\begin{algorithmic}[1]
\STATE \textbf{Initialization:} Evaluated configurations $\Theta^t = \{\theta_1, \dots, \theta_{p+t}\} \in \mathbb{R}^{(p+t) \times m_\theta}$; 
corresponding signed discrepancies $\tilde{\boldsymbol{f}}^t_{n} = [\tilde{f}_{n}(\theta_1), \dots, \tilde{f}_{n}(\theta_{p+t})]^\top$;
distance threshold $\underline{\theta} > 0$
\STATE $\mathcal{S} \gets \emptyset$ to store candidate subregions
\FOR{each pair $(a, b)$ with $1 \leq a < b \leq p+t$}
    \IF{$\tilde{f}_{n}(\theta_a) \cdot \tilde{f}_{n}(\theta_b) < 0$}
        \STATE Define subregion:
        $R(\theta_a, \theta_b) = \prod_{l=1}^{m_\theta} \left[\min\{[\theta_{a}]_l, [\theta_{b}]_l\}, \max\{[\theta_{a}]_l, [\theta_{b}]_l\}\right]$
        \STATE Compute hypervolume: 
            $V(\theta_a, \theta_b) = \left(\prod_{l=1}^{m_\theta} \max\{\left|[\theta_{a}]_l - [\theta_{b}]_l\right|, \underline{\theta}\}\right) 
            \cdot 
            \left|[\tilde{\boldsymbol{f}}^t_{n}]_a - [\tilde{\boldsymbol{f}}^t_{n}]_b\right|$
        \STATE Append $(R(\theta_a, \theta_b), V(\theta_a, \theta_b))$ to $\mathcal{S}$
    \ENDIF
\ENDFOR
\STATE Select subregions in $\mathcal{S}$ with the smallest $V(\theta_a, \theta_b)$
\end{algorithmic}
\end{algorithm}

\begin{figure}[htbp]
    \centering
    \includegraphics[width=0.45\linewidth]{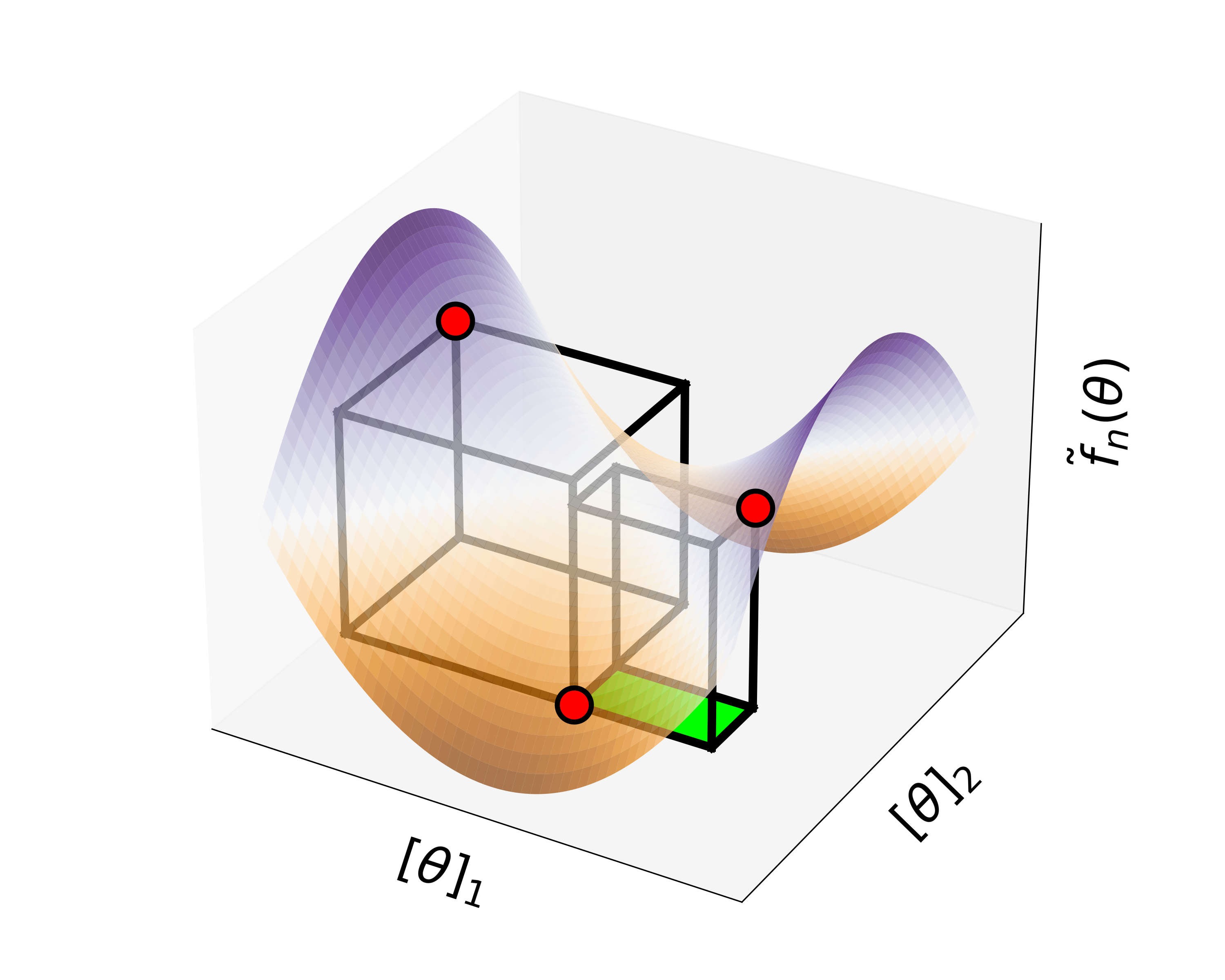}
    \caption{Computed hypervolume from three evaluated configurations in a two-dimensional input space ($m_\theta = 2$), among which two pairs exhibit opposite signs. These pairs define two candidate subregions, and the subregion with the smallest hypervolume is highlighted in green.}
    \label{siam:fig:rss}
\end{figure}

\subsection{Root Finding in Rootless Scenarios}
Throughout, the core of this section was motivated by the sign information under \Cref{siam:assump:existence} and \Cref{siam:assump:continuity}. 
However, \Cref{siam:assump:existence} need not hold for finite $n$, and $\tilde f_n(\theta)$ may remain strictly positive (or negative) over $\Theta$. In practice, the functional structure is unknown in advance, so it is unclear whether one should abandon the root finding formulation and refit the surrogate to the minimization objective $\bar f_n(\theta)$ based on the observed residuals before collecting any excessive evaluations. We refer to this as the \textit{rootless} scenario, and note that RSS is also inapplicable here since it relies on sign-changing intervals. 
We discuss this scenario on a realized sample path of $\tilde{f}_n$. For this, let $(\Omega, \mathcal{F}, \mathbb{P})$ be the probability space governing the randomness in $\mathcal{D}_n$, and define $\Omega_+ := \{\omega \in \Omega : \tilde{f}_n(\theta; \omega) > 0 \text{ for all } \theta \in \Theta\}$ as the event corresponding to the rootless scenario. We then establish the following result.
\begin{theorem}[Asymptotic Equivalence in the Rootless Scenario]\label{siam:thm:rootless-summary}
    For a fixed $\omega \in \Omega_+$, the optimizers of the root finding and standard acquisition functions coincide asymptotically as $t \to \infty$,
    \begin{align*}
        \left\|\arg\min_{\theta \in \Theta} \widetilde{\mathrm{LCB}}^t(\theta; \omega) - \arg\min_{\theta \in \Theta} \mathrm{LCB}^t(\theta; \omega)\right\| &\to 0, \\
        \left\|\arg\max_{\theta \in \Theta} \widetilde{\mathrm{PI}}^t(\theta; \omega) - \arg\max_{\theta \in \Theta} \mathrm{PI}^t(\theta; \omega)\right\| &\to 0, \\
        \left\|\arg\max_{\theta \in \Theta} \widetilde{\mathrm{EI}}^t(\theta; \omega) - \arg\max_{\theta \in \Theta} \mathrm{EI}^t(\theta; \omega)\right\| &\to 0.
    \end{align*}
\end{theorem}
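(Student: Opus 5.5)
The plan is to show that on $\Omega_+$ the root finding acquisition functions become uniformly indistinguishable from their standard counterparts, and then pass from uniform closeness of objectives to closeness of optimizers. Throughout, I read the standard acquisition functions as built from the same kriging predictor of $\tilde f_n$ that the root finding versions use. When $\tilde f_n>0$ the minimization target is $\tilde f_n$ itself, so the incumbent $v^t_n=\min_i[\tilde{\boldsymbol f}^t_n]_i$ coincides with $\tilde v^t_n$, and $|\tilde v^t_n|=\tilde v^t_n$.

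Fix $\omega\in\Omega_+$. By compactness of $\Theta$ and \Cref{siam:assump:continuity}, $c(\omega):=\min_{\theta\in\Theta}\tilde f_n(\theta;\omega)>0$.

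\textbf{Step 1 (surrogate consistency).} Assume the design $\Theta^t$ becomes dense in $\Theta$ as $t\to\infty$, with a fixed RBF kernel. I would then invoke standard kriging interpolation results to obtain
\[
\sup_\theta|\tilde\mu^t_n-\tilde f_n|\to0 \quad\text{and}\quad \sup_\theta\tilde\sigma^t_n\to0 .
\]
This gives a time $T$ after which $\tilde\mu^t_n\ge c/2$ on all of $\Theta$. In addition, $\tilde v^t_n\ge c$ always holds.

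\textbf{Step 2 (term-by-term comparison for $t\ge T$).} Each acquisition function is handled separately.
\begin{itemize}
\item For LCB, $|\tilde\mu^t_n|=\tilde\mu^t_n$, so $\widetilde{\mathrm{LCB}}^t=\mathrm{LCB}^t$ exactly. The argmins therefore coincide for all $t\ge T$.
\item For PI, $\widetilde{\mathrm{PI}}^t-\mathrm{PI}^t=-\Phi\big((-\tilde v^t_n-\tilde\mu^t_n)/\tilde\sigma^t_n\big)$. Its magnitude is at most $\Phi\big(-\tfrac{3c}{2}/\sup\tilde\sigma^t_n\big)$, which tends to $0$ uniformly.
\item For EI, I would subtract the standard EI (with $v=\tilde v$) from the root finding EI and regroup. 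Every remaining term involves $\Phi(\ubar z^t_n)$, $\phi(\ubar z^t_n)$, $\Phi(\tilde z^t_n)$ or $\phi(\tilde z^t_n)$, with $\ubar z^t_n\le -\tfrac{3c}{2}/\tilde\sigma^t_n$ and $\tilde z^t_n\le -\tfrac{c}{2}/\tilde\sigma^t_n$. The prefactors $|\tilde v|$, $|\tilde\mu|$ and $\tilde\sigma$ are bounded on the compact $\Theta$, so the difference goes to zero uniformly. Gaussian tail bounds $\Phi(-x)\le\phi(x)/x$ handle these terms.
\end{itemize}

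\textbf{Step 3 (from uniform closeness to optimizers).} This is the main obstacle. Both PI and EI flatten as $\tilde\sigma\to0$, so a small uniform error does not by itself pin down the argmax. My first attempt is to exploit that the errors are \emph{relatively} negligible. The discrepancy is of order $\Phi(-\tfrac{3c}{2}/\sigma)$, whereas the standard acquisition near its maximizer is of order at least $\Phi\big((\tilde v-\tilde\mu)/\sigma\big)$ with $\tilde v-\tilde\mu\ge -o(1)$. Hence the ratio of the error to the acquisition value tends to $0$. I would then apply an argmax continuity argument, in the style of a Berge maximum theorem, to the normalized acquisitions $\mathrm{PI}^t/\max\mathrm{PI}^t$ and the analogous EI ratio. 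This requires assuming the standard maximizer is well separated, meaning a unique maximizer with a strict margin. As a fallback, I would state the result as convergence of the distance between argmax sets to $0$ under that well-separatedness assumption. For LCB no such argument is needed, since Step 2 gives exact equality.
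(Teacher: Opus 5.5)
Your argument is sound under the hypotheses you add, but it takes a different route from the paper's. Your reading of the standard criteria as built on the same predictor of $\tilde f_n$, with $v^t=\tilde v^t$, is also what the paper's proof implicitly uses. The paper does not derive the sign of the posterior mean. It postulates $\tilde\mu^t>0$ (\Cref{siam:assump:mean-sign}), together with unique optimizers and exact global optimization, and then handles LCB exactly as you do. For PI and EI it splits into two regimes:
\begin{itemize}
\item Its large-$\epsilon$ case is your Step 2: Mills-ratio bounds on the $\ubar{z}$- and $\tilde z$-terms, using $\tilde\mu^t\ge\epsilon(\omega)$ and $\tilde\sigma^t\to0$, neither of which is listed as an assumption. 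Your direct bound on $\widetilde{\mathrm{EI}}^t-\mathrm{EI}^t$ is cleaner than its term-ratio argument.
\item Its small-$\epsilon$ case invokes \Cref{siam:assump:rootless-small} and a Taylor expansion of $\Phi$ at $0$. This gives $\widetilde{\mathrm{PI}}^t\approx 2\,\mathrm{PI}^t-1$ and $\widetilde{\mathrm{EI}}^t\approx 2\,\mathrm{EI}^t-2\tilde\sigma^t\phi(0)$, the latter with $\tilde\sigma^t$ asymptotically constant near the maximizer. So the conclusion rests on a positive affine transformation rather than a uniform approximation.
\end{itemize}
You lose nothing by omitting that second case. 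In your setting $\tilde v^t\ge c(\omega)$, so under $\sup_\Theta\tilde\sigma^t\to0$ the requirement $|\tilde v^t|/\tilde\sigma^t\to0$ of \Cref{siam:assump:asymptotic-order} cannot hold. For deterministic kriging with the paper's unit-variance kernel it never holds at all, since $\tilde\sigma^t\le 1$.

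Your Step 3 is more careful than the paper. The paper passes from ``$\widetilde{\mathrm{PI}}^t\to\mathrm{PI}^t$'' to coincidence of maximizers using only uniqueness at each $t$ (\Cref{siam:assump:rootless-unique}), and that step does not follow. Uniqueness gives no margin that is uniform in $t$. Moreover, $\widetilde{\mathrm{PI}}^t$ is not a monotone function of $\mathrm{PI}^t$: at equal $\mathrm{PI}^t$ it penalizes larger $\tilde\sigma^t$. A separation condition uniform in $t$, like yours, is what actually closes the argument.

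In exchange, the paper's formulation does not depend on your Step 1, which is the weakest link of your proposal:
\begin{itemize}
\item Acquisition-driven designs need not become dense.
\item The paper re-estimates $l$ by maximum likelihood at every iteration, so a fixed RBF kernel is an extra assumption.
\item The standard bound $|\tilde\mu^t-\tilde f_n|\le\|\tilde f_n\|_{\mathcal{H}}\,\tilde\sigma^t$ needs $\tilde f_n$ in the Gaussian kernel's RKHS $\mathcal{H}$. That is a space of analytic functions, which the $C^2$ assumptions do not provide.
\end{itemize}
For large $t$, Step 2 only uses $\inf_\Theta\tilde\mu^t\ge c'>0$, $\sup_\Theta|\tilde\mu^t|$ bounded uniformly in $t$, and $\sup_\Theta\tilde\sigma^t\to0$. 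It is cleaner to assume exactly these, which is essentially what the paper's large-$\epsilon$ case relies on.
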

For the detailed assumptions and proofs, see \Cref{siam:sec:rootless}. \Cref{siam:thm:rootless-summary} implies that the proposed acquisition functions at least asymptotically recovers the minimizer of $f_n$, where the gap between $f_n$ and the original objective $\bar{f}$ is characterized in \Cref{siam:sec:sol-acc-rf}.

\section{Calibration with Stochastic Simulation Metamodeling}\label{siam:sec:sto-metamodeling}

The deterministic approach in \Cref{siam:sec:det-metamodeling} relies on a fixed observed dataset $\mathcal{D}_n$. However, this finite-sample formulation has several limitations. First, it is subject to sampling variability and noise. Second, when $n$ is large, evaluating \eqref{siam:eq:p} becomes computationally expensive, as each calibration iteration requires processing the full dataset. Third, this sample average approximation implicitly assumes homoskedastic noise, which may not generally hold in practice. Stochastic kriging addresses these limitations by explicitly modeling the heteroskedastic nature of the observation. In this framework, the stochastic input-output pairs $(X,Y)$ can be obtained either through resampling methods (e.g., bootstrap) from $\mathcal{D}_n$, or by fitting a generative model. 
In what follows, we first review the standard minimization based stochastic kriging. 

\subsection{Stochastic Kriging}
Stochastic kriging models the mean response $\bar{f}(\theta)$ in \eqref{siam:eq:sp} as a realization of GRF. Unlike the deterministic setting in \Cref{siam:sec:det-metamodeling}, where residuals $r_j(\theta)$ were computed from the fixed observed $\mathcal{D}_n$, in stochastic kriging, each configuration $\theta$ is evaluated with $n(\theta)$ independent replications drawn from the underlying distribution $(X,Y)$, where $r_{(j)}(\theta) = y_{(j)} - h(x_{(j)}; \theta)$. We use separate notation for these residuals to emphasize that they arise from generated simulation draws at each $\theta$, rather than from a pre-existing fixed dataset. The replication average at $\theta$ is computed as
\begin{equation}\label{siam:eq:p-sto}
    \bar{f}_{n(\theta)}(\theta) 
    := 
    \frac{1}{m_y}\frac{1}{n(\theta)}\sum_{j=1}^{n(\theta)} \|r_{(j)}(\theta)\|_2^2
    .
\end{equation}
Let $\Theta^t = \{\theta_1, \ldots, \theta_{p+t}\}$ denote the set of evaluated configurations up to iteration $t$, and $\bar{\boldsymbol{f}}^t_n = [\bar{f}_{n(\theta_1)}(\theta_1), \ldots, \bar{f}_{n(\theta_{p+t})}(\theta_{p+t})]^\top$ be the vector of replication averages. These observations are modeled as
\begin{equation*}
    \bar{\boldsymbol{f}}^t_n = \bar{\boldsymbol{f}}^t + \boldsymbol{\varepsilon}^t, 
    \quad 
    \boldsymbol{\varepsilon}^t \sim \mathcal{N}(0, \Sigma^{(t)}),
\end{equation*}
where $\bar{\boldsymbol{f}}^t = [\bar{f}(\theta_1), \dots, \bar{f}(\theta_{p+t})]^\top$ denotes the vector of latent mean responses and $\boldsymbol{\varepsilon}^t$ represents the simulation noise. The covariance matrix $\Sigma^{(t)} \in \mathbb{R}^{(p+t) \times (p+t)}$ characterizes the noise structure induced by finite replications. Its diagonal entries quantify the sampling variance at each design point, while the off-diagonal entries capture potential correlations arising from common random numbers or other shared sources of randomness. The $a$-th diagonal entry of $\Sigma^{(t)}$ is estimated as 
\begin{equation*}
    [\Sigma^{(t)}]_{a,a} = \frac{1}{n(\theta_a)(n(\theta_a)-1)}
    \sum_{j=1}^{n(\theta_a)} \left( \frac{1}{m_y}\|r_{(j)}(\theta_a)\|_2^2 
    - \bar{f}_{n(\theta_a)}(\theta_a) \right)^2.
\end{equation*}
The replication size $n(\theta)$ governs the extent to which simulation noise is reduced at each design point. Increasing $n(\theta)$ decreases the variance of the replication average and gives more accurate estimates of the mean response. However, larger $n(\theta)$ also incur higher computational cost, so it is important to balance the replication effort and modeling accuracy. 
One approach for determining $n(\theta)$ is based on the integrated mean squared error (IMSE) criterion \cite{ankenman2010stochastic}. In this approach, replication sizes are allocated to minimize the expected mean squared prediction error integrated over the domain of interest. Assuming that the extrinsic covariance and intrinsic variance functions are known, the replication effort can be determined by quantifying its marginal impact on the IMSE score. Alternatively, $n(\theta)$ can be determined by balancing the exploration–exploitation trade-off between replicating at the visited points and evaluating new design points \cite{Sürer01082025}. Here, replication sizes are chosen by minimizing the integrated variance criterion, which is computed by comparing the posterior variance reduction achieved by adding replications at existing design points and the reduction achieved by introducing new design points sampled from the posterior distribution.

We now describe the predictive distribution of stochastic kriging. Given $\Theta^t$ and corresponding vector of replication averages $\bar{\boldsymbol{f}}^t_n$, we seek a predictor of the mean response $\bar{f}(\theta)$ at an unobserved configuration $\theta$. The predictive distribution of stochastic kriging is given by
\begin{align}\label{siam:eq:stochastickriging-pred-dist}
    \eta^t_n(\theta) &\sim \mathcal{N}\left(\mu^t_n(\theta), (\sigma^t_n)^2(\theta)\right),
    \\
    \mu^t_n(\theta) 
    &= 
    \mathbf{k}(\theta, \Theta^t;l) \left[\mathbf{k}(\Theta^t,\Theta^t;l) + \Sigma^{(t)}\right]^{-1} \bar{\boldsymbol{f}}^t_n
    \nonumber
    ,
    \\
    (\sigma^t_n)^2(\theta) 
    &= 
    k(\theta, \theta;l) 
    - \mathbf{k}(\theta,\Theta^t;l) \left[\mathbf{k}(\Theta^t,\Theta^t;l) + \Sigma^{(t)}\right]^{-1} 
    \mathbf{k}(\theta,\Theta^t;l)^\top,
    \nonumber
\end{align}
for the detailed derivation, see \Cref{siam:sec:SK-pred-derivation}. These expressions closely mirror the standard kriging predictor in \eqref{siam:eq:kriging-pred-dist}, where the difference is that the simulation noise covariance $\Sigma^{(t)}$ is explicitly incorporated. The hyperparameter $l$ is estimated by maximizing the log marginal likelihood
\begin{equation}\label{siam:eq:stochastickriging-mle}
    \mathcal{L}^t_n(l)=
    -\frac{1}{2} (\bar{\boldsymbol{f}}_n^t)^\top \left[\mathbf{k}(\Theta^t,\Theta^t;l) + \Sigma^{(t)}\right]^{-1} \bar{\boldsymbol{f}}_n^t
    - \frac{1}{2} \log \det \left(\mathbf{k}(\Theta^t,\Theta^t;l) + \Sigma^{(t)}\right)
    - \frac{p+t}{2} \log(2\pi).
\end{equation}

\subsection{Sequential Design and Acquisition Strategies}

The sequential design strategies for stochastic kriging follow the same general principles as in \Cref{siam:sec:det-metamodeling}. A conventional approach is to directly apply the standard acquisition functions in \Cref{siam:sec:det-metamodeling},
under the assumption that the stochasticity is implicitly accounted through $\Sigma^{(t)}$. Here, alternative resampling methods, such as jackknife-based methods, can also be used to refine the variance estimation \cite{tang2022borderline}. Moreover, several refined acquisition functions have also been proposed to better accommodate the stochastic settings.
One example is the Augmented Expected Improvement (AEI), which modifies EI by penalizing repeated sampling at the current best configuration \cite{huang2006global}. However, AEI relies on the assumption that the noise variance is known, which is not practically satisfied in many applications \cite{amini2025tutorial}. Another related work is Modified Expected Improvement (MEI), which is proposed within a two-stage framework. In its initial stage, MEI accounts only for spatial prediction uncertainty, while in the second stage, replication effort is distributed based on the estimated noise variance \cite{quan2013simulation}. Furthermore, several choices have been discussed in the literature for defining the current best value used in PI and EI, such as using the predictive mean \cite{zhou2024corrected} or quantile-based approach \cite{picheny2013quantile}. In this work, we define the current best value as the minimum predictive mean among $\Theta^t$,
\begin{equation}\label{siam:eq:recommend-rule}
    {i^t_{n}}^* := \arg\min_{1 \leq i \leq p+t}, \mu^t_{n}(\theta_i),
    \quad
    v^t_{n} := \mu^t_{n}(\theta_{{i^t_{n}}^*}).
\end{equation}

\subsection{Root finding Modifications}

We now extend the proposed root finding framework to the stochastic setting. Here, observations are no longer noise-free, and the presence of stochasticity implies that all quantities of interest, such as the sign and the existence of a root, are probabilistic in nature. We first write the stochastic root finding observation as
\begin{equation*}
    \tilde{f}_{n(\theta)}(\theta) 
    := \frac{1}{m_y}\frac{1}{n(\theta)}\sum_{j=1}^{n(\theta)} \boldsymbol{1}^\top r_{(j)}(\theta),
\end{equation*}
which mirrors the expression in \eqref{siam:eq:p-sto}.
Given the set of design points $\Theta^t$, the vector of collected sample averages is $\tilde{\boldsymbol{f}}^t_n = [\tilde{f}_{n(\theta_1)}(\theta_1), \ldots, \tilde{f}_{n(\theta_{p+t})}(\theta_{p+t})]^\top$.
The $a$-th diagonal entries for simulation-noise covariance matrix $\tilde{\Sigma}^{(t)} \in \mathbb{R}^{(p+t)\times(p+t)}$ is
\begin{equation*}
    [\tilde{\Sigma}^{(t)}]_{a,a}
    =
    \frac{1}{n(\theta_a)(n(\theta_a)-1)}
    \sum_{j=1}^{n(\theta_a)}
    \left(
    \frac{1}{m_y}\boldsymbol 1^\top r_{(j)}(\theta_a)
    -
    \tilde f_{n(\theta_a)}(\theta_a)
    \right)^2.
\end{equation*}
All remaining modeling assumptions follow the convention in \eqref{siam:eq:stochastickriging-pred-dist} and \eqref{siam:eq:stochastickriging-mle}, with $\bar{\boldsymbol{f}}^t_n$ and $\Sigma^{(t)}$ replaced by $\tilde{\boldsymbol{f}}^t_n$ and $\tilde{\Sigma}^{(t)}$, and denote the root finding predictive distribution as $\tilde{\eta}^t_n(\theta) \sim \mathcal{N}\left(\tilde{\mu}^t_n(\theta), (\tilde{\sigma}^t_n)^2(\theta)\right)$.

We now discuss the stochastic estimator that motivates \Cref{siam:assump:better-mse}. In this setting, we define the stochastic estimator in \Cref{siam:assump:better-mse} as
\begin{equation}\label{siam:eq:sto-min-define}
    \check{f}_n(\theta)
    :=
    \mathbb{E}\left[\left(\tilde{\eta}_n^t(\theta)\right)^2 \mid \mathcal{D}_n\right]
    =
    \left(\tilde{\mu}_n^t(\theta)\right)^2
    +(\tilde{\sigma}_n^t)^2(\theta).
\end{equation}
Under the GRF assumption, $\tilde{\eta}_n^t(\theta) | \mathcal{D}_n$ represents the posterior distribution of $\tilde{f}(\theta)$, so that 
\begin{equation*}
    \mathbb{E}\left[\left(\tilde{\eta}_n^t(\theta)\right)^2 \mid \mathcal{D}_n\right]
    =
    \mathbb{E}\left[\left(\tilde{f}(\theta)\right)^2 \mid \mathcal{D}_n\right]
    =
    \mathbb{E}\left[f(\theta) \mid \mathcal{D}_n\right].
\end{equation*}
As this becomes the posterior mean of $f(\theta)$ under $\mathcal{D}_n$, \eqref{siam:eq:sto-min-define} achieves the minimum mean squared error among $\Theta$. Therefore, for each fixed $\theta \in \Theta$,
\begin{equation}\label{siam:eq:better-mse-point}
    \mathbb{E}\left[
        \left(
            \check{f}_n(\theta) - f(\theta)
        \right)^2
    \right]
    \leq
    \mathbb{E}\left[
        \left(
            f_n(\theta) - f(\theta)
        \right)^2
    \right].
\end{equation}
Although \eqref{siam:eq:better-mse-point} does not by itself imply \Cref{siam:assump:better-mse}, this implies that for any $\theta \in \Theta$, $\check{f}_n$ achieves a smaller mean squared error than $f_n$. Now, given that $\Theta$ is compact, if this pointwise error improvement is sufficiently uniform across $\Theta$, or when the kernel is bounded and sufficiently smooth so that $(\check{f}_n(\theta) - f(\theta))^2$ is relatively stable across $\Theta$, then \Cref{siam:assump:better-mse} becomes more plausible. Moreover, when the design points are well dispersed so that the resulting predictive variance does not drastically changes over $\Theta$, this would further justify \Cref{siam:assump:better-mse}. Accordingly, the refined solution is obtained by minimizing $\check f_n(\theta)$ over $\Theta$, where we define the current best value in the root-finding as
\begin{equation}\label{siam:eq:recommend-rule-rf}
    {i^t_{n}}^* := \arg\min_{1 \leq i \leq p+t}
    \left\{
    (\tilde{\mu}^t_{n}(\theta_i))^2 + (\tilde{\sigma}^t_{n})^2(\theta_i)
    \right\},
    \quad
    \tilde{v}^t_{n} := \tilde{\mu}^t_{n}(\theta_{{i^t_{n}}^*}).
\end{equation}

\subsubsection*{Search Space Reduction}

In the stochastic setting, \Cref{siam:thm:bolzano} no longer applies directly since we can only observe noisy sample averages. 
For this, we instead quantify the probability of root existence using the stochastic kriging posterior,
\begin{equation}\label{siam:eq:sign-change-prob}
    \mathbb{P}(\mathcal{I}^{\pm}(\theta_a, \theta_b)) 
    = 
    \Phi\left(-\frac{\tilde{\mu}^t_n(\theta_a)}{\tilde{\sigma}^t_n(\theta_a)}\right) 
    + \Phi\left(-\frac{\tilde{\mu}^t_n(\theta_b)}{\tilde{\sigma}^t_n(\theta_b)}\right) 
    - 2\Phi\left(-\frac{\tilde{\mu}^t_n(\theta_a)}{\tilde{\sigma}^t_n(\theta_a)}\right)
    \Phi\left(-\frac{\tilde{\mu}^t_n(\theta_b)}{\tilde{\sigma}^t_n(\theta_b)}\right),
\end{equation}
where $\mathcal{I}^{\pm}(\theta_a, \theta_b)$ denotes the event that $\tilde{f}_n(\theta_a)$ and $\tilde{f}_n(\theta_b)$ have opposite signs (see \Cref{siam:sec:prob-rootexist}).

In the deterministic RSS, the search process was triggered when the evaluated points exhibited opposite signs. However, in the stochastic setting, this triggering condition has to be modified since every pair of configurations has a non-zero probability of having opposite signs. For this, we compute the probability of root existence for all configuration pairs and narrow the search only to those that are likely to contain a root with high probability, i.e., $\mathbb{P}(\mathcal{I}^{\pm}(\theta_a, \theta_b)) \geq \alpha$, where $\alpha \in [0, 1]$ is a user-defined probability threshold. Accordingly, we define the hypervolume measure as
\begin{equation*}
    V(\theta_a, \theta_b) 
    = \left(\prod\nolimits_{l=1}^{m_\theta} \max\{|[\theta_a]_l - [\theta_b]_l|, \underline{\theta}\}\right) 
    \cdot (1 - \mathbb{P}(\mathcal{I}^{\pm}(\theta_a, \theta_b))),
\end{equation*}
where the goal is to identify a compact region with a smaller hypervolume. \Cref{siam:fig:rss_stochastic} illustrates this process and  detailed algorithmic framework is presented in  \Cref{siam:alg:rss-stochastic}.

\begin{figure}[htbp]
    \centering
    \includegraphics[width=0.45\linewidth]{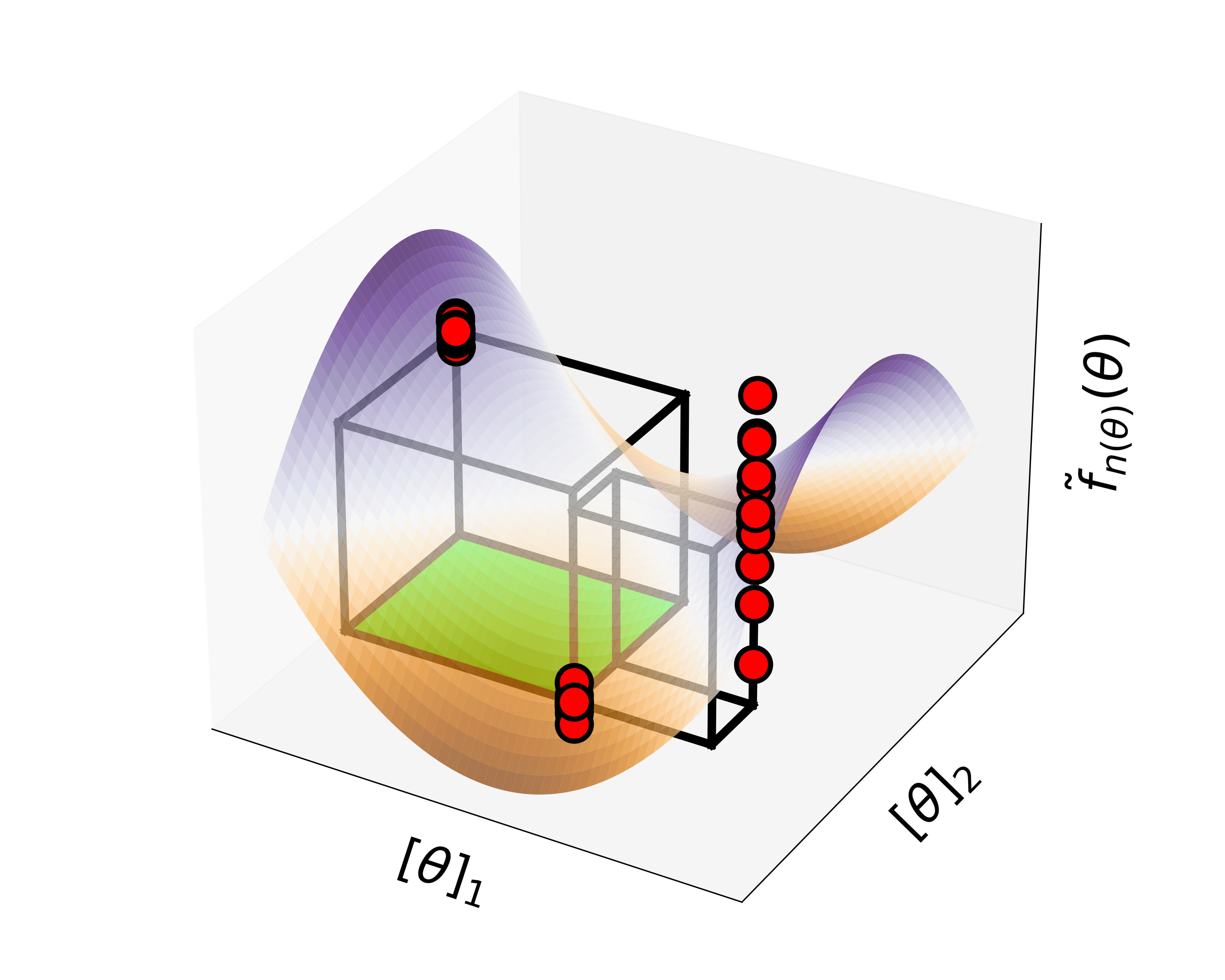}
    \caption{RSS in a stochastic setting in a two-dimensional input space ($m_\theta = 2$). Under the probabilistic framework, the hypervolume accounts for sampling variability, resulting in a different selected subregion compared to 
    \Cref{siam:fig:rss}.}
\label{siam:fig:rss_stochastic}
\end{figure}

\begin{algorithm}[htbp]
\caption{Reduced Search Space (RSS) - stochastic}
\label{siam:alg:rss-stochastic}
\begin{algorithmic}[1]
\STATE \textbf{Initialization:} Evaluated configurations $\Theta^t = \{\theta_1, \ldots, \theta_{p+t}\} \in \mathbb{R}^{(p+t) \times m_\theta}$; predictive means $\tilde{\mu}^t_n(\theta)$ and standard deviations $\tilde{\sigma}^t_n(\theta)$ for all $\theta \in \Theta^t$; probability threshold $\alpha \in [0, 1]$; distance threshold $\underline{\theta} > 0$
\STATE $\mathcal{S} \gets \emptyset$ to store candidate subregions
\FOR{each pair $(a, b)$ with $1 \leq a < b \leq p+t$}
    \IF{$\mathbb{P}(\mathcal{I}^{\pm}(\theta_a, \theta_b)) \geq \alpha$}
        \STATE Define subregion: $R(\theta_a, \theta_b) = \prod_{l=1}^{m_\theta}[\min\{[\theta_a]_l, [\theta_b]_l\}, \max\{[\theta_a]_l, [\theta_b]_l\}]$
        \STATE Compute hypervolume:
        $V(\theta_a, \theta_b) = \left(\prod_{l=1}^{m_\theta} \max\{|[\theta_a]_l - [\theta_b]_l|, \underline{\theta}\}\right) \cdot (1 - \mathbb{P}(\mathcal{I}^{\pm}(\theta_a, \theta_b)))$
        
        \STATE Append $(R(\theta_a, \theta_b), V(\theta_a, \theta_b))$ to $\mathcal{S}$
    \ENDIF
\ENDFOR
\STATE Select subregions in $\mathcal{S}$ with the smallest $V(\theta_a, \theta_b)$
\end{algorithmic}
\end{algorithm}

Here, the triggering condition in RSS may also be used to determine the replication size $n(\theta)$. 
Since larger replication sizes reduce the predictive variance (uncertainty) by reducing the simulation noise, replication can be allocated in two complementary ways. First, within a fixed replication budget at each iteration, additional replications may be assigned to reduce uncertainty at evaluated configurations where no candidate pair has yet satisfied the triggering condition $\alpha$. Secondly, once the potential pairs are identified, replications can be allocated to those configurations under the ranking and selection method \cite{kim2007recent}, continuing until a desired level of statistical significance is achieved.

\section{Numerical Experiment}\label{siam:sec:experiment}
In this section, we conduct a range of experiments to evaluate the calibration performance of the proposed root finding framework against standard minimization. 
We compare three acquisition strategies reviewed in \Cref{siam:sec:det-metamodeling} with their root finding variants in \Cref{siam:sec:rf-metamodeling}. 
We set $\kappa = 1$ for both LCBs and $\alpha = 0.95$ for the stochastic RSS, and set $\underline{\theta} = 1e-8$. 
Each experiment consists of 100 independent macro replications. In each macro replication, we start from initial design of $p = 2$ points and perform up to 10 sequential evaluations. At each design point $\theta$, we conduct 10 independent replications to estimate the discrepancy. 
To optimize the acquisition function, we use a multi-start strategy with L-BFGS-B initialized from 10 randomly generated starting points, and each run is limited to 10 iterations.
To assess the calibration performance, we conduct a post-evaluation step \cite{eckman2023diagnostic} by estimating $\bar{f}$ at each current best solution using 1,000 independent replications. 

\subsection{2D Synthetic Example}
We consider a synthetic example where the computer model has parameter $\theta \in [-3, 3]^2$. The signed discrepancy is a modified Himmelblau function \cite{jeon2024calibrating},
\begin{equation*}
    \tilde{f}(\theta) 
    := \log_2 \left(\left([\theta]_1^2 + [\theta]_2 - 3\right)^2 + \left( [\theta]_1 + [\theta]_2^2 - 2\right)^2\right) - 1,
\end{equation*}
where each replication at $\theta$ is observed through
\begin{equation*}
    R_{(i)}(\theta) = \tilde{f}(\theta) + \varepsilon_{(i)}, 
    \quad \varepsilon_{(i)} \sim \mathcal{N}\big(0, \sigma^2(\theta)\big), 
    \quad \sigma(\theta) = \sqrt{|\tilde{f}(\theta)|},
\end{equation*}
in which the noise variance is heteroskedastic and increases with the magnitude of the discrepancy. \Cref{siam:fig:2d_func} depicts the signed discrepancy surface.
\begin{figure}[htbp]
    \centering
    \includegraphics[width=0.85\linewidth]{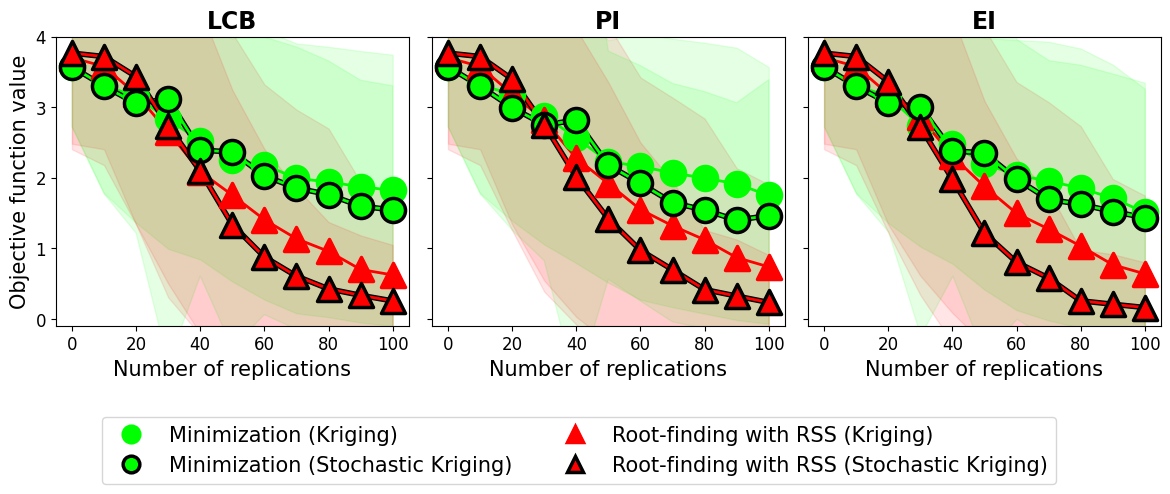}
    \caption{Calibration results of the 2D synthetic example. Objective function value at each  best observed solution is obtained from 1,000 post replications and reported with 95\% confidence intervals.}
    \label{siam:fig:2d_result}
\end{figure}

The experimental results are summarized in \Cref{siam:fig:2d_result}. Across all three acquisition functions, root finding with RSS consistently achieves faster convergence than minimization. Notably, stochastic kriging tends to offer additional improvement in both minimization and root-finding. However, this benefit is less reflected in minimization, as minimization requires more evaluations to identify promising region. During the early stage, evaluations are mostly exploratory, so the advantage of stochastic kriging is not exploited as effectively. In contrast, root finding offers a structural advantage that allows the modeling benefit to be leveraged more effectively. This result is also consistent with the analysis in \Cref{siam:sec:sol-acc-rf}, where stochastic kriging provides tighter solution bounds by accounting for the finite sample uncertainty. 

\begin{figure}[htbp]
    \centering    \includegraphics[width=1\linewidth]{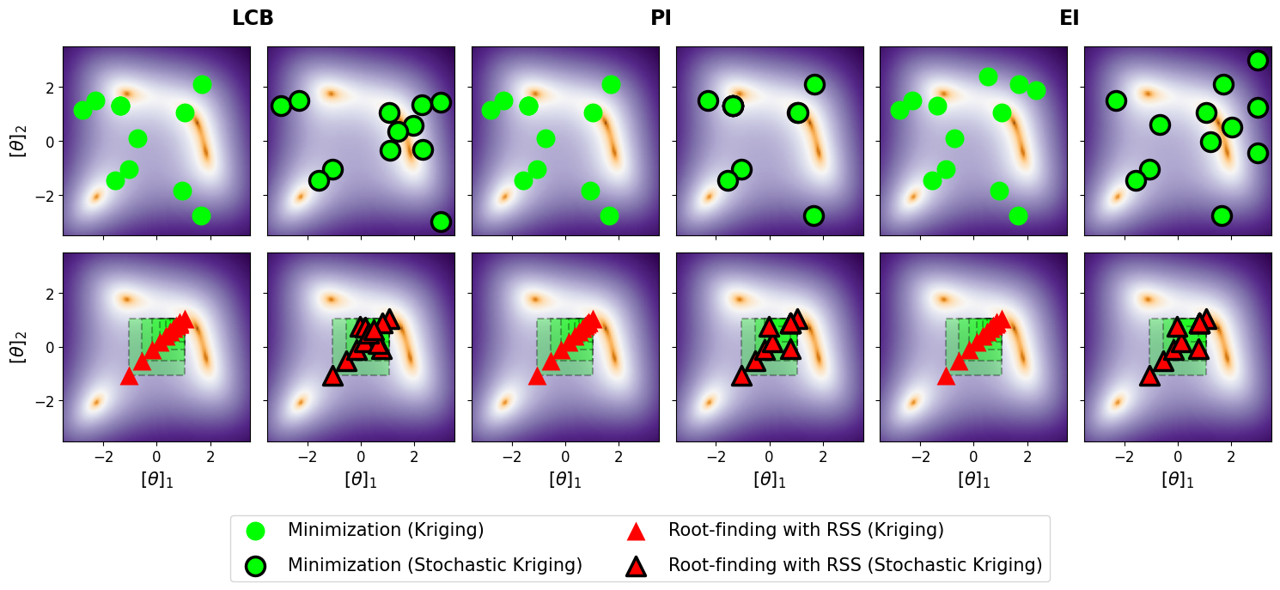}
    \caption{Sampling trajectories of the benchmark methods after 10 evaluations from a single macro replication. Dashed rectangles show the progressively reduced search space using RSS.}
    \label{siam:fig:2d_sampling_jrace}
\end{figure}

To better illustrate this, \Cref{siam:fig:2d_sampling_jrace} shows the sampling trajectories from a single macro replication after 10 evaluations. For the minimization, the sampling trajectory is more scattered and exploratory, as there is no structural guidance. In contrast, root finding produces a more directed trajectory by leveraging sign information. Particularly, in root finding, the sampling trajectories under kriging and stochastic kriging are noticeably different. In the deterministic case, the search space is progressively reduced toward a single region as the root is expected to exist as a fixed point. In the stochastic case, however, the probability of sign changes (from \eqref{siam:eq:sign-change-prob}) is adaptively updated as new observations are acquired, thereby the search space can switch across different region. This results in the sampling trajectories that remain close to the zeroth contour while being scattered around it.
 
\subsection{Logistic Simulation Example}

In this experiment, we consider a logistic simulation that resembles a classical M/M/1  system. The service rate $\mu^{\text{real}} = 4$ is assumed to be known and fixed, while the actual arrival rate $\theta^{\text{real}} = 6$ is unknown. In this system, entities (e.g., customers or jobs) arrive according to a Poisson process with rate $\theta^{\text{real}}$, and are served sequentially with exponentially distributed service times. We assume that only the sojourn times of entities are observable in the system, and thereby $\theta^{\text{real}}$ can only be estimated through computer simulation by comparing the simulated output with the observed data.

Such settings are commonly appeared in practice, where detailed system dynamics are often unavailable and thereby parameter estimation must rely on partially observable dataset \cite{glynn1989indirect, asanjarani2021survey}. Moreover, even if all arrival and service events were fully observable, estimating the arrival rate would still be challenging, as most of those tractable closed-form equations are derived under idealized assumptions, such as infinite-horizon or steady-state assumptions, which are not applicable in our finite entity setting. Furthermore, we do not attempt to restrict the problem to so called “well-behaved” setting in which the arrival rate is strictly less than the service rate (i.e., $\theta^{\text{real}} < \mu^{\text{real}}$). Instead, we allow the calibration task to directly reflect the observed system output as is, without imposing additional assumptions. For these reasons, the problem is addressed within a simulation based calibration framework. \Cref{siam:fig:mm1} illustrates the calibration setup with $m_y = 100$ observed entities.

\Cref{siam:fig:mm1_func} illustrates the estimated discrepancy functions along with their 95\% confidence intervals. As $\theta$ increases, the signed discrepancy becomes more negative, since a higher arrival rate makes the system more congested and leads to longer average sojourn times. The variance of the estimated discrepancy also increases with $\theta$, as higher arrival rate incurs longer waiting times, which accumulate additional variability across replications. In contrast, when the arrival rate is relatively smaller than the service rate, most of the randomness arises solely from the arrival and service processes, so the variability is much smaller. 

\begin{figure}[htbp]
    \centering
    \includegraphics[width=0.85\linewidth]{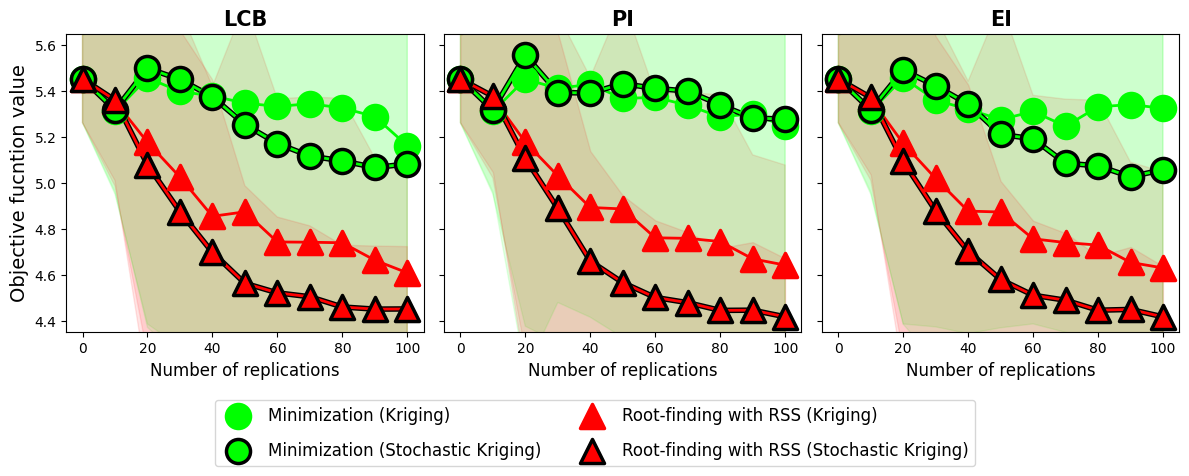}
    \caption{Calibration results of the M/M/1 example. Objective function value at each  best observed solution is obtained from 1,000 post replications and reported with 95\% confidence intervals.}
    \label{siam:fig:mm1_result}
\end{figure}

In this example, the two formulations are expected to yield similar optimal solutions. Since all entities are governed by the same arrival rate $\theta$ and are sequentially dependent, changes in $\theta$ tend to affect the resulting sojourn time of all entities in the same direction. This makes the spatial variability $\delta(\theta)$ small, meaning that the averaged behavior well represents the individual output. 

\Cref{siam:fig:mm1_result} summarizes the calibration results. The root finding framework consistently achieves faster convergence than minimization across all acquisition functions. Similar to the previous example, given the inherent variability of the problem, the benefit of stochastic kriging is not well exploited in the minimization, as seen in LCB and PI cases. Again, this is mostly attributed to the exploratory nature of minimization in the early stage, requiring more observations to identify a promising region. Moreover, the best observed solution under minimization can be misleading in the early stage, which leads to higher objective function value. This is because typically, minimization has to solve more complex functional structure, and with few and often uninformative early observations, the metamodel is unlikely to be reliable.

\subsection{Epidemic Simulation Example}

In this experiment, we consider an epidemic computer model based on a stochastic SIR (Susceptible–Infected–Recovered) framework. In this example, a population of 100 individuals is divided into three compartments: susceptible (S), infected (I), and recovered (R). Initially, 10 individuals are infected, and disease transmission occurs through contacts between infected and susceptible individuals. Each infected individual interacts with up to 2 randomly selected susceptible individuals per day, and each contact results in a new infection with probability $\theta^{\text{real}} = 0.65$. Infected individuals recover independently with a daily probability of 0.7. The simulation runs for 5 days epidemic period, after which the proportion of recovered individuals is recorded as the model output. 
\Cref{siam:fig:epi_example} depicts the example temporal evolutions of this model.

The objective of this experiment is to estimate the unknown infection rate $\theta \in [0,1]$ by comparing the simulated daily recovery proportions ($m_y = 5$) against the observed trajectory. This problem setting is particularly relevant in practice, since during the epidemic outbreaks, recovery records are often the only observable data \cite{kiamari2020covid}.

\Cref{siam:fig:epi_func} illustrates the estimated discrepancy functions. As the infection rate $\theta$ increases, the disease spreads more rapidly in the early stages, leading to a higher proportion of (observed) recovered individuals, thereby the signed discrepancy becomes more negative. Notice that the variability also increases with $\theta$, since a higher infection rate leads to more infected individuals, generating more interactions and further amplifying the stochasticity in the recovery phase. However, compared to the M/M/1 example, the overall variability remains relatively small across the parameter space. In the M/M/1 system, entities are sequentially dependent, so that the effect of $\theta$ tends to accumulate throughout the queue. In contrast, in the SIR model, the same infection and recovery rules apply to all individuals. Although interactions within each replication are processed sequentially, the realized order is itself random, so the individual-level randomness tends to average out at the population level. Hence, $\text{Var}(S(\theta))$ is also small in the SIR problem.

\begin{figure}[!htbp]
    \centering
    \includegraphics[width=0.85\linewidth]{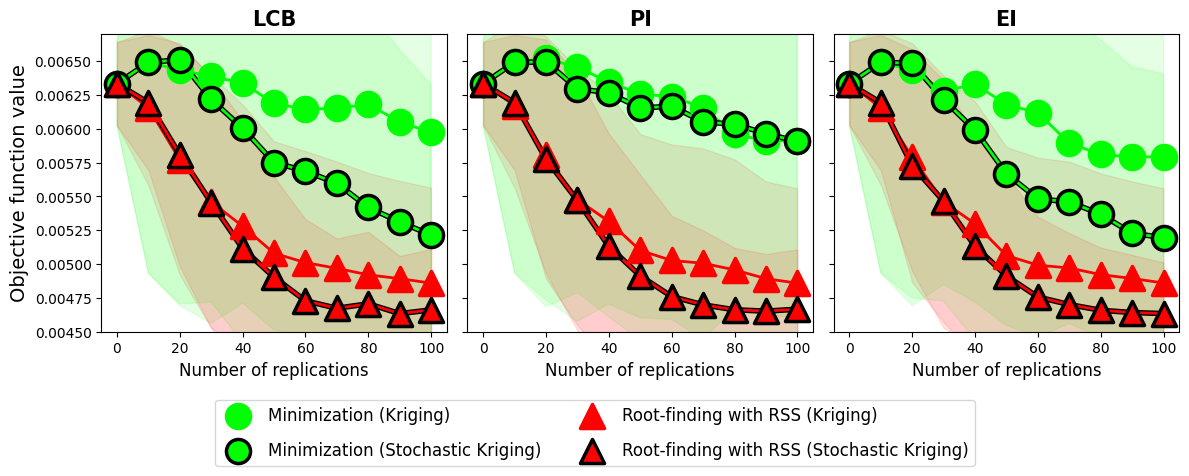}
    \caption{Calibration results of the SIR example. Objective function value at each  best observed solution is obtained from 1,000 post replications and reported with 95\% confidence intervals.}
    \label{siam:fig:epi_result}
\end{figure}

\Cref{siam:fig:epi_result} summarizes the overall experimental results. The root finding framework consistently achieves faster convergence than minimization across all acquisition functions. Compared to the M/M/1 example, in the root-finding framework, due to the relatively small variability of the problem, there is no noticeable performance difference between kriging and stochastic kriging at the early stage. Among the acquisition functions, PI tends to exhibit slower convergence under minimization in both M/M/1 and SIR examples. Overall, root finding framework under stochastic kriging with RSS achieves the best calibration performance across all three examples.

\section{Conclusion}\label{siam:sec:conclusion}

In this paper, we introduced a root finding framework for computer model calibration that leverages sign information to accelerate the calibration process. We proposed three new acquisition functions and analyzed their asymptotic behavior in rootless scenarios. Moreover, a search space reduction technique is introduced that is uniquely enabled by the sign structure of the root finding formulation. Both the theoretical analyses and numerical experiments demonstrate that the proposed framework offers a reliable and computationally efficient alternative to standard calibration approaches.

We close with several remarks. First, while the root finding approach was introduced within a metamodel-based setting, the theoretical results on solution accuracy and convergence in \Cref{siam:sec:sol-acc-rf} are broadly applicable to other derivative-free optimization methods. Second, adaptive sampling presents a promising direction for further improving the computational efficiency, including not only the sequential design of new configurations but also the adaptive refinement of replication budget in the stochastic RSS. Lastly, incorporating variance reduction techniques such as common random numbers \cite{ha2025complexity} or stratified sampling \cite{park2025strata} are also noteworthy.

\appendix

\section{Proof of Solution Accuracy Bound}\label{siam:appendix:sol-acc-proof}

This section provides the proof of \Cref{siam:thm:sol-acc}.
\begin{proof}
Let $\theta_n^* \in \Theta_n^*$ satisfy 
$\mathrm{dist}(\bar{\theta}^*, \Theta_n^*)^2 = \|\theta_n^* - \bar{\theta}^*\|^2$, 
and let $\theta^* \in \Theta^*$ satisfy $\mathrm{dist}(\theta_n^*, \Theta^*) = \|\theta_n^* - \theta^*\|$.
Then, under \Cref{siam:cor:strong-convex},
\begin{align}\label{siam:eq:solution-accuracy}
    \mathrm{dist}(\bar{\theta}^*, \Theta_n^*)^2
    &\leq 
    \frac{2}{\bar{\ell}_C}\left(\bar{f}(\theta_n^*) - \bar{f}(\bar{\theta}^*)\right)
    \leq
    \frac{2}{\bar{\ell}_C}\left(\bar{f}(\theta_n^*) - f(\theta^*)\right)
    \\
    &=
    \frac{2}{\bar{\ell}_C}\left(
        f(\theta_n^*) - f(\theta^*) + \delta(\theta_n^*) + \mathrm{Var}(S(\theta_n^*))
    \right)
    \nonumber
    \\
    &=
    \frac{2}{\bar{\ell}_C}
    \left(
        \underbrace{(\mathbb{E}[S(\theta_n^{*})])^2 - (\mathbb{E}[S(\theta^*)])^2}_{(\text{A}_{1})}
        + \underbrace{\delta(\theta_n^*)}_{(\text{A}_{2})}
        + \underbrace{\mathrm{Var}(S(\theta_n^*))}_{(\text{A}_{3})}
    \right)
    \nonumber,
\end{align}
where the second inequality holds since $f(\theta) \leq \bar{f}(\theta)$ and $f(\theta^*) \leq f(\bar{\theta}^*)$, which gives $f(\theta^*) \leq f(\bar{\theta}^*) \leq \bar{f}(\bar{\theta}^*)$. For $(\text{A}_{1})$, under \Cref{siam:assump:lips} and using $a^2 - b^2 = (a-b)^2 + 2b(a-b)$,
    \begin{align*}
        (\mathbb{E}[S(\theta_n^{*})])^2 - (\mathbb{E}[S(\theta^*)])^2
        &=
        \left(\mathbb{E}[S(\theta_n^{*})] - \mathbb{E}[S(\theta^*)]\right)^2
        +
        2\mathbb{E}[S(\theta^*)]
        \left(\mathbb{E}[S(\theta_n^{*})] - \mathbb{E}[S(\theta^*)]\right)
        \\
        &\leq
        \tilde{\ell}_E^2\mathrm{dist}(\theta_n^*, \Theta^*)^2
        + 2\tilde{\ell}_E\left|\mathbb{E}[S(\theta^*)]\right|\mathrm{dist}(\theta_n^*, \Theta^*).
    \end{align*}
    Since $\theta^* \in \Theta^*$ and under \Cref{siam:assump:rootexist-RF} we have 
    $\Theta^* = \tilde{\Theta}^*$, it follows that $\mathbb{E}[S(\theta^*)]=0$, and thus
    \begin{equation*}
        (\mathbb{E}[S(\theta_n^{*})])^2 
        \leq
        \tilde{\ell}_E^2\mathrm{dist}(\theta_n^*, \Theta^*)^2.
    \end{equation*}
    For $(\text{A}_{2})$, under \Cref{siam:assump:out-var-lip},
    \begin{equation*}
        \delta(\theta_n^{*}) \leq \delta(\theta^*) + \tilde{\ell}_\delta\mathrm{dist}(\theta_n^*, \Theta^*).
    \end{equation*}
    For $(\text{A}_{3})$, applying the law of total variance on $\theta_n^{*}$,
    \begin{equation}\label{siam:eq:var-kr-hat}
        \mathrm{Var}(S(\theta_n^{*}))
        = 
        \underbrace{\mathbb{E}\left[\mathrm{Var}(S(\theta_n^{*}) \mid \theta_n^*)\right]}_{(\text{B}_1)}
        +
        \underbrace{\mathrm{Var}\left(\mathbb{E}[S(\theta_n^{*}) \mid \theta_n^*]\right)}_{(\text{B}_2)}.
    \end{equation}
    We bound the two components separately. For $(\text{B}_1)$, under \Cref{siam:assump:var-lips},
    \begin{equation*}
        -\tilde{\ell}_V\mathrm{dist}(\theta_n^*, \Theta^*)
        \leq
        \mathrm{Var}(S(\theta_n^{*}) \mid \theta_n^*) - \mathrm{Var}(S(\theta^*))
        \leq
        \tilde{\ell}_V\mathrm{dist}(\theta_n^*, \Theta^*).
    \end{equation*}
    Using the right-hand-side inequality and taking expectation gives
    \begin{align*}
        \mathbb{E}\left[\mathrm{Var}(S(\theta_n^{*}) \mid \theta_n^*)\right]
        \leq
        \mathbb{E}\left[\mathrm{Var}(S(\theta^*)) + \tilde{\ell}_V\mathrm{dist}(\theta_n^*, \Theta^*)\right].
    \end{align*}
    Since $\theta^* \in \Theta^*$, this further implies
    \begin{equation*}
        \mathbb{E}\left[\mathrm{Var}(S(\theta_n^{*}) \mid \theta_n^*)\right]
        \leq
        \sup_{\theta \in \Theta^*}\mathrm{Var}(S(\theta))
        + \tilde{\ell}_V\mathbb{E}\left[\mathrm{dist}(\theta_n^*, \Theta^*)\right].
    \end{equation*}
    For $(\text{B}_2)$, from $\mathrm{Var}(X) \leq \mathbb{E}[X^2]$ and under \Cref{siam:assump:lips},
    \begin{align*}
        \mathrm{Var}\left(\mathbb{E}[S(\theta_n^{*}) \mid \theta_n^*]\right)
        &\leq
        \mathbb{E}\left[\left(
            \mathbb{E}[S(\theta_n^{*}) \mid \theta_n^*]
        \right)^2\right]
        \leq
        \tilde{\ell}_E^2\mathbb{E}\left[\mathrm{dist}(\theta_n^*, \Theta^*)^2\right].
    \end{align*}
    Hence, \eqref{siam:eq:var-kr-hat} is bounded as
    \begin{equation*}\label{siam:eq:variance-bound}
        \mathrm{Var}(S(\theta_n^{*}))
        \leq
        \sup_{\theta \in \Theta^*}\mathrm{Var}(S(\theta))
        + \tilde{\ell}_V\mathbb{E}\left[\mathrm{dist}(\theta_n^*, \Theta^*)\right]
        + \tilde{\ell}_E^2\mathbb{E}\left[\mathrm{dist}(\theta_n^*, \Theta^*)^2\right].
    \end{equation*}
    Accordingly, the solution bound in \eqref{siam:eq:solution-accuracy} can be written as
    \begin{align*}
        \mathrm{dist}(\bar{\theta}^*, \Theta_n^*)^2
        &\leq
        \frac{2}{\bar{\ell}_C}
        \Bigg(
            \tilde{\ell}_E^2\mathrm{dist}(\theta_n^*, \Theta^*)^2
            +
            \delta(\theta^*)
            +
            \tilde{\ell}_\delta\mathrm{dist}(\theta_n^*, \Theta^*)
        \\
        &\quad
            + \sup_{\theta \in \Theta^*}\mathrm{Var}(S(\theta))
            + \tilde{\ell}_V\mathbb{E}\left[\mathrm{dist}(\theta_n^*, \Theta^*)\right]
            + \tilde{\ell}_E^2\mathbb{E}\left[\mathrm{dist}(\theta_n^*, \Theta^*)^2\right]
        \Bigg).
    \end{align*}
    Since $\theta^* \in \Theta^*$, it follows that
    \begin{equation*}
        \delta(\theta^*) + \sup_{\theta \in \Theta^*}\mathrm{Var}(S(\theta))
        \leq
        \sup_{\theta \in \Theta^*}\left(\delta(\theta) + \mathrm{Var}(S(\theta))\right).
    \end{equation*}
    Thereby, we can conclude that
    \begin{align*}
        \mathrm{dist}(\bar{\theta}^*, \Theta_n^*)^2
        &\leq
        \frac{2}{\bar{\ell}_C}
        \Bigg(
            \tilde{\ell}_E^2\mathrm{dist}(\theta_n^*, \Theta^*)^2
            +
            \sup_{\theta \in \Theta^*}\delta(\theta)
            +
            \tilde{\ell}_\delta\mathrm{dist}(\theta_n^*, \Theta^*)
        \\
        &\quad
            + \sup_{\theta \in \Theta^*}\mathrm{Var}(S(\theta))
            + \tilde{\ell}_V\mathbb{E}\left[\mathrm{dist}(\theta_n^*, \Theta^*)\right]
            + \tilde{\ell}_E^2\mathbb{E}\left[\mathrm{dist}(\theta_n^*, \Theta^*)^2\right]
        \Bigg).
    \end{align*}
\end{proof}

\section{Proof of Expected Solution Accuracy Bound}\label{siam:appendix:exp-sol-acc-proof}
This section provides the proof of \Cref{siam:cor:exp-sol-acc}.
\begin{proof}
    Let $\theta_n^* \in \Theta_n^*$ satisfy $\mathrm{dist}(\bar{\theta}^*, \Theta_n^*)^2 = \|\theta_n^* - \bar{\theta}^*\|^2$, and let $\theta^* \in \Theta^*$ satisfy $\mathrm{dist}(\theta_n^*, \Theta^*) = \|\theta_n^* - \theta^*\|$. 
    From \Cref{siam:thm:sol-acc}, taking expectation gives
    \begin{align*}
        \mathbb{E}\left[\mathrm{dist}(\bar{\theta}^*, \Theta_n^*)^2\right]
        &\leq
        \frac{2}{\bar{\ell}_C}
        \Bigg(
            2\tilde{\ell}_E^2\mathbb{E}\left[\mathrm{dist}(\theta_n^*, \Theta^*)^2\right]
            +
            \sup_{\theta \in \Theta^*}\left(\delta(\theta) + \mathrm{Var}(S(\theta))\right)
        \\
        &\quad
        + (\tilde{\ell}_\delta + \tilde{\ell}_V)\mathbb{E}\left[\mathrm{dist}(\theta_n^*, \Theta^*)\right]
        \Bigg)
        \\
        &\leq
        \frac{2}{\bar{\ell}_C}
        \Bigg(
            2\tilde{\ell}_E^2\mathbb{E}\left[\mathrm{dist}(\theta_n^*, \Theta^*)^2\right]
            +
            \sup_{\theta \in \Theta^*}\left(\delta(\theta) + \mathrm{Var}(S(\theta))\right)
        \\
        &\quad
        +
        (\tilde{\ell}_\delta + \tilde{\ell}_V)\mathbb{E}\left[\mathrm{dist}(\theta_n^*, \Theta^*)^2\right]^{1/2}
        \Bigg),
    \end{align*}
    where the second inequality follows from the Cauchy--Schwarz inequality $\mathbb{E}\left[\mathrm{dist}(\theta_n^*, \Theta^*)\right] \leq \mathbb{E}\left[\mathrm{dist}(\theta_n^*, \Theta^*)^2\right]^{1/2}$.
\end{proof}

\section{Proof of Tighter Expected Accuracy Bound of Stochastic Approximation Solution}\label{siam:appendix:tight-bound-proof}
This section provides the proof of \Cref{siam:cor:tight-bound}. 
\begin{proof}
For any $\check{\theta}_n^* \in \check{\Theta}_n^*$, from $a - b \leq |b-a|$,
\begin{equation*}
    f(\check{\theta}_n^*) - \check{f}_n(\check{\theta}_n^*) \leq \left|\check{f}_n(\check{\theta}_n^*) - f(\check{\theta}_n^*)\right| \leq \sup_{\theta \in \Theta}\left|\check{f}_n(\theta) - f(\theta)\right|,
\end{equation*}
where we get
\begin{equation}\label{siam:eq:temp1}
    f(\check{\theta}_n^*) 
    \leq \check{f}_n(\check{\theta}_n^*) + \sup_{\theta \in \Theta}\left|\check{f}_n(\theta) - f(\theta)\right|.
\end{equation}
Since $\check{\theta}_n^*$ minimizes $\check{f}_n$ over $\Theta$, $\check{f}_n(\check{\theta}_n^*) \leq \check{f}_n(\theta^*)$, so we can further bound
\begin{equation*}
    f(\check{\theta}_n^*) \leq \check{f}_n(\theta^*) + \sup_{\theta \in \Theta}\left|\check{f}_n(\theta) - f(\theta)\right|.
\end{equation*}
Similarly, for $\theta^* \in \Theta^*$, now, from $b-a \leq |b-a|$,
\begin{equation*}
    \check{f}_n(\theta^*) - f(\theta^*) \leq \left|\check{f}_n(\theta^*) - f(\theta^*)\right| \leq \sup_{\theta \in \Theta}\left|\check{f}_n(\theta) - f(\theta)\right|,
\end{equation*}
which gives 
\begin{equation}\label{siam:eq:temp2}
    \check{f}_n(\theta^*) 
    \leq 
    f(\theta^*) 
    +
    \sup_{\theta \in \Theta}\left|\check{f}_n(\theta) - f(\theta)\right|.
\end{equation}
From \eqref{siam:eq:temp1} and \eqref{siam:eq:temp2}, we have
\begin{equation*}
    f(\check{\theta}_n^*) 
    \leq 
    f(\theta^*) 
    + 2\sup_{\theta \in \Theta}\left|\check{f}_n(\theta) - f(\theta)\right|.
\end{equation*}
Under \Cref{siam:assump:rootexist-RF}, $f(\theta^*) = 0$, so
\begin{equation*}
    f(\check{\theta}_n^*) \leq 2\sup_{\theta \in \Theta}\left|\check{f}_n(\theta) - f(\theta)\right|.
\end{equation*}
Under \Cref{siam:assump:root-stab}, for any $\theta \in \Theta$,
\begin{equation*}
    \mathrm{dist}(\theta, \Theta^*)^2 \leq \frac{1}{\tilde{\ell}_R^2}f(\theta).
\end{equation*}
For $\check{\theta}_n^*$, taking expectations gives
\begin{equation*}
    \mathbb{E}\left[\mathrm{dist}(\check{\theta}_n^*, \Theta^*)^2\right]
    \leq
    \frac{1}{\tilde{\ell}_R^2}
    \mathbb{E}\left[f(\check{\theta}_n^*)\right]
    \leq
    \frac{2}{\tilde{\ell}_R^2}
    \mathbb{E}\left[\sup_{\theta \in \Theta}\left|\check{f}_n(\theta) - f(\theta)\right|\right].
\end{equation*}
By the same argument, for $\theta_n^*$, we get
\begin{equation*}
    \mathbb{E}\left[\mathrm{dist}(\theta_n^*, \Theta^*)^2\right]
    \leq
    \frac{2}{\tilde{\ell}_R^2}
    \mathbb{E}\left[\sup_{\theta \in \Theta}\left|f_n(\theta) - f(\theta)\right|\right].
\end{equation*}
By the Cauchy--Schwarz inequality,
\begin{equation*}
    \mathbb{E}\left[
        \sup_{\theta \in \Theta}
        \left|
            \check f_n(\theta)-f(\theta)
        \right|
    \right]
    \leq
    \left(
        \mathbb{E}\left[
            \sup_{\theta \in \Theta}
            \left(
                \check f_n(\theta)-f(\theta)
            \right)^2
        \right]
    \right)^{1/2},
\end{equation*}
and
\begin{equation*}
    \mathbb{E}\left[
        \sup_{\theta \in \Theta}
        \left|
            f_n(\theta)-f(\theta)
        \right|
    \right]
    \leq
    \left(
        \mathbb{E}\left[
            \sup_{\theta \in \Theta}
            \left(
                f_n(\theta)-f(\theta)
            \right)^2
        \right]
    \right)^{1/2}.
\end{equation*}
Then, under \Cref{siam:assump:better-mse},
\begin{equation*}
    \left(
        \mathbb{E}\left[
            \sup_{\theta \in \Theta}
            \left(
                \check f_n(\theta)-f(\theta)
            \right)^2
        \right]
    \right)^{1/2}
    \leq
    \left(
        \mathbb{E}\left[
            \sup_{\theta \in \Theta}
            \left(
                f_n(\theta)-f(\theta)
            \right)^2
        \right]
    \right)^{1/2},
\end{equation*}
which implies that the upper bound of $\mathbb{E}\left[\left\|\check{\theta}_n^*-\theta^*\right\|^2\right]$ is smaller than that of $\mathbb{E}\left[\left\|\theta_n^*-\theta^*\right\|^2\right]$. Thereby, we can conclude that 
\begin{equation*}
    \overline{\mathbb{E}}\left[
        \mathrm{dist}(\bar{\theta}^*, \check{\Theta}_n^*)^2
    \right]
    \leq
    \overline{\mathbb{E}}\left[
        \mathrm{dist}(\bar{\theta}^*, \Theta_n^*)^2
    \right].
\end{equation*}
\end{proof}

\section{Derivation of Probability of Improvement}\label{siam:appendix:pi}
The event of improvement at a candidate point $\theta$ is defined as the predicted value being smaller than $v^t_{n}$:  
\begin{equation*}
    \mathcal{I}^t_{n}(\theta) := \left\{\eta^t_{n}(\theta) \leq v^t_{n} \right\}.
\end{equation*}
Using the reparameterization trick, we can express the predictive random variable as  
\begin{equation*}
    \eta^t_{n}(\theta) = \mu^t_{n}(\theta) + \sigma^t_{n}(\theta) Z, \quad Z \sim \mathcal{N}(0, 1).
\end{equation*} 
Then, the PI acquisition function is given by  
\begin{align}
    \text{PI}^t_{n}(\theta) 
        &= \mathbb{P}(\mathcal{I}^t_{n}(\theta)) = \mathbb{P}( \mu^t_{n}(\theta) + \sigma^t_{n}(\theta) Z \leq v^t_{n} )  = \Phi\left( \frac{v^t_{n} - \mu^t_{n}(\theta)}{\sigma^t_{n}(\theta)} \right), \nonumber
\end{align}  
where $\Phi(\cdot)$ denotes the standard normal cumulative distribution function (CDF).

\section{Derivation of Expected Improvement}\label{siam:appendix:ei}
The improvement at a candidate point $\theta$ is defined as 
\begin{equation*}
    \text{EI}^t_{n}(\theta) = \mathbb{E}\left[ \max\left(0, v^t_{n} - \eta^t_{n}(\theta) \right) \right].
\end{equation*}
Using the reparameterization trick, we can write 
\begin{equation*}
    \eta^t_{n}(\theta) = \mu^t_{n}(\theta) + \sigma^t_{n}(\theta) Z, \quad Z \sim \mathcal{N}(0, 1).
\end{equation*}
Substituting into the EI expression gives  
\begin{equation*}
    \text{EI}^t_{n}(\theta) = \mathbb{E}\left[ \max\left(0, v^t_{n} - \mu^t_{n}(\theta) - \sigma^t_{n}(\theta) Z \right) \right].
\end{equation*}
Then, applying $\int_{-\infty}^{a} \phi(b)db = \Phi(a)$ and $\int_{-\infty}^{a} b \phi(b) db = -\phi(a)$, 
\begin{align*}
    \text{EI}^t_{n}(\theta) 
    &= \int_{-\infty}^{z^t_{n}(\theta)} \left( v^t_{n} - \mu^t_{n}(\theta) - \sigma^t_{n}(\theta) z \right) \phi(z)dz 
    = (v^t_{n} - \mu^t_{n}(\theta)) \Phi(z^t_{n}(\theta)) 
       + \sigma^t_{n}(\theta) \phi(z^t_{n}(\theta)), \nonumber
\end{align*}
where $\phi(\cdot)$ denotes the probability density function (PDF) of the standard normal distribution.

\section{Derivation of Root-Finding Probability of Improvement}\label{siam:appendix:pirf}

The improvement event at a candidate point $\theta$ is defined as the predicted value being closer to zero than the current best,
\begin{equation*}
    {\widetilde{\mathcal{I}}}^t_{n}(\theta) 
    := 
    \left\{ |\tilde{\eta}^t_{n}(\theta)| \leq |\tilde{v}^t_{n}| \right\},
\end{equation*}
which accounts for improvement in both sign directions symmetrically. Using the reparameterization trick,
\begin{equation*}
    \tilde{\eta}^t_{n}(\theta) 
    = 
    \tilde{\mu}^t_{n}(\theta) + \tilde{\sigma}^t_{n}(\theta) Z, \quad Z \sim \mathcal{N}(0,1),
\end{equation*}
and we can express the closed-form of root-finding PI as:
\begin{align*}
    {\widetilde{\text{PI}}}^t_{n}(\theta)
    &= \mathbb{P}\left({\widetilde{\mathcal{I}}}^t_{n}(\theta) \right) \\
    &= \mathbb{P}\left( -|\tilde{v}^t_{n}| \leq \tilde{\mu}^t_{n}(\theta) 
       + \tilde{\sigma}^t_{n}(\theta) Z \leq |\tilde{v}^t_{n}| \right) \\
    &= \mathbb{P}\left( 
        \frac{-|\tilde{v}^t_{n}| - \tilde{\mu}^t_{n}(\theta)}{\tilde{\sigma}^t_{n}(\theta)}
        \leq Z \leq
        \frac{|\tilde{v}^t_{n}| - \tilde{\mu}^t_{n}(\theta)}{\tilde{\sigma}^t_{n}(\theta)}
       \right) \\
    &= \Phi\left( \frac{|\tilde{v}^t_{n}| - \tilde{\mu}^t_{n}(\theta)}{\tilde{\sigma}^t_{n}(\theta)} \right) 
     - \Phi\left( \frac{-|\tilde{v}^t_{n}| - \tilde{\mu}^t_{n}(\theta)}{\tilde{\sigma}^t_{n}(\theta)} \right).
\end{align*}

\section{Derivation of Root-Finding Expected Improvement}\label{siam:appendix:eirf}

The improvement at a candidate point $\theta$ is defined as
\begin{equation*}
    {\widetilde{\text{EI}}}^t_{n}(\theta) 
    = 
    \mathbb{E}\left[\max(0, |\tilde{v}^t_{n}| - |\tilde{\eta}^t_{n}(\theta)|)\right].
\end{equation*} 
Then, the full expectation form is expressed as 
\begin{equation}\label{siam:eq:eirf-full}
    {\widetilde{\text{EI}}}^t_{n}(\theta) = \int_{-\infty}^{\infty} \max\left(0, |\tilde{v}^t_{n}| - |\tilde{\mu}^t_{n}(\theta) + \tilde{\sigma}^t_{n}(\theta) z| \right) \phi(z) dz.
\end{equation}
Here, observe that the integrand is nonzero only when $|\tilde{\mu}^t_{n}(\theta) + \tilde{\sigma}^t_{n}(\theta) z| \leq |\tilde{v}^t_{n}|$. 
Then, we can write \eqref{siam:eq:eirf-full} as
\begin{equation*}
    {\widetilde{\text{EI}}}^t_{n}(\theta) = \int_{{\ubar{z}}^t_{n}(\theta)}^{{\bar{z}}^t_{n}(\theta)} \left( |\tilde{v}^t_{n}| - |\tilde{\mu}^t_{n}(\theta) + \tilde{\sigma}^t_{n}(\theta) z| \right) \phi(z) dz.
\end{equation*}
Since the absolute value term changes its sign at $\tilde{z}^t_{n}(\theta)$, we divide the integral parts,
\begin{align*}
    {\widetilde{\text{EI}}}^t_{n}(\theta) 
    &= 
    \int_{{\ubar{z}}^t_{n}(\theta)}^{\tilde{z}^t_{n}(\theta)} \left( |\tilde{v}^t_{n}| + \tilde{\mu}^t_{n}(\theta) + \tilde{\sigma}^t_{n}(\theta) z \right) \phi(z) dz
    + 
    \int_{\tilde{z}^t_{n}(\theta)}^{{\bar{z}}^t_{n}(\theta)} \left( |\tilde{v}^t_{n}| - \tilde{\mu}^t_{n}(\theta) - \tilde{\sigma}^t_{n}(\theta) z \right) \phi(z) dz. 
    \nonumber 
\end{align*}
Separating each term gives
\begin{align*}
    {\widetilde{\text{EI}}}^t_{n}(\theta)
    &= |\tilde{v}^t_{n}| \left( \int_{{\ubar{z}}^t_{n}(\theta)}^{\tilde{z}^t_{n}(\theta)} \phi(z) dz + \int_{\tilde{z}^t_{n}(\theta)}^{{\bar{z}}^t_{n}(\theta)} \phi(z) dz \right) 
    \\
    &\quad 
    + 
    \tilde{\mu}^t_{n}(\theta) \left( \int_{{\ubar{z}}^t_{n}(\theta)}^{\tilde{z}^t_{n}(\theta)} \phi(z) dz - \int_{\tilde{z}^t_{n}(\theta)}^{{\bar{z}}^t_{n}(\theta)} \phi(z)dz \right) \nonumber \\
    &\quad + \tilde{\sigma}^t_{n}(\theta) \left( \int_{{\ubar{z}}^t_{n}(\theta)}^{\tilde{z}^t_{n}(\theta)} z \phi(z) dz - \int_{\tilde{z}^t_{n}(\theta)}^{{\bar{z}}^t_{n}(\theta)} z \phi(z) dz \right). \nonumber
\end{align*}
Using $\int_a^b \phi(z) dz = \Phi(b) - \Phi(a), \int_a^b z \phi(z) dz = \phi(a) - \phi(b)$, we get 
\begin{align*}
    {\widetilde{\text{EI}}}^t_{n}(\theta) 
    &= |\tilde{v}^t_{n}| \left( \Phi({\bar{z}}^t_{n}(\theta)) - \Phi({\ubar{z}}^t_{n}(\theta)) \right) 
    \\
    &\quad+ 
    \tilde{\mu}^t_{n}(\theta) \left( 2\Phi(\tilde{z}^t_{n}(\theta)) - \Phi({\bar{z}}^t_{n}(\theta)) - \Phi({\ubar{z}}^t_{n}(\theta)) \right) \nonumber \\
    &\quad - \tilde{\sigma}^t_{n}(\theta) \left( 2\phi(\tilde{z}^t_{n}(\theta)) - \phi({\bar{z}}^t_{n}(\theta)) - \phi({\ubar{z}}^t_{n}(\theta)) \right). \nonumber
\end{align*}

\section{Gradient Derivations for Root finding Acquisition Functions}\label{siam:sec:appendix:gradient derivation}

In this appendix, we provide analytical expressions for the first-order derivatives of the proposed root finding acquisition functions. We first present the gradients of the posterior mean and variance under both kriging and stochastic kriging models. The gradient of the posterior mean under the kriging model is given by:
\begin{align*}
    \frac{\partial \tilde{\mu}^t_{n}(\theta)}{\partial \theta}
    = 
    \frac{\partial}{\partial \theta}
    \left(
        \mathbf{k}(\theta, \Theta^t; l)
        [\mathbf{k}(\Theta^t, \Theta^t; l)]^{-1}
        \tilde{\boldsymbol{f}}^t_{n}
    \right) 
    = 
    \frac{\partial \mathbf{k}(\theta, \Theta^t; l)}{\partial \theta}
    [\mathbf{k}(\Theta^t, \Theta^t; l)]^{-1}
    \tilde{\boldsymbol{f}}^t_{n}. \nonumber
\end{align*}
Similarly, the gradient of the posterior mean under the stochastic kriging model is:
\begin{align*}
    \frac{\partial \tilde{\mu}^t_{n}(\theta)}{\partial \theta}
    &= \frac{\partial \mathbf{k}(\theta, \Theta^t; l)}{\partial \theta}
    [ \mathbf{k}(\Theta^t, \Theta^t; l) + \Sigma^{(t)} ]^{-1}
    \tilde{\boldsymbol{f}}^t_{n}.
\end{align*}
The gradient of the posterior standard deviation under the kriging model is:
\begin{align*}
    \frac{\partial \tilde{\sigma}^t_{n}(\theta)}{\partial \theta}
    &= 
    \frac{\partial}{\partial \theta} \sqrt{(\tilde{\sigma}^t_{n})^2(\theta)} 
    = 
    \frac{1}{2 \sqrt{(\tilde{\sigma}^t_{n})^2(\theta)}} \cdot \frac{\partial (\tilde{\sigma}^t_{n})^2(\theta)}{\partial \theta} \nonumber \\
    &= -\frac{1}{2 \sqrt{(\tilde{\sigma}^t_{n})^2(\theta)}} \Bigg(
    \frac{\partial \mathbf{k}(\theta, \Theta^t; l)}{\partial \theta} [\mathbf{k}(\Theta^t, \Theta^t; l)]^{-1} \mathbf{k}(\theta, \Theta^t; l)^\top \nonumber 
    \\
    &\quad 
    + 
    \mathbf{k}(\theta, \Theta^t; l) [\mathbf{k}(\Theta^t, \Theta^t; l)]^{-1} \frac{\partial \mathbf{k}(\theta, \Theta^t; l)^\top}{\partial \theta}
    \Bigg), \nonumber
\end{align*}
where
\begin{equation*}
    \frac{\partial \mathbf{k}(\theta, \Theta^t; l)^\top}{\partial \theta}
    =
    \begin{bmatrix}
        -\frac{1}{l^2} (\theta - \theta_1) k(\theta, \theta_1; l) \\
        -\frac{1}{l^2} (\theta - \theta_2) k(\theta, \theta_2; l) \\
        \vdots \\
        -\frac{1}{l^2} (\theta - \theta_{p+t}) k(\theta, \theta_{p+t}; l)
    \end{bmatrix}.
\end{equation*}
The gradient of the posterior standard deviation under the stochastic kriging model is:
\begin{align*}
    \frac{\partial \tilde{\sigma}^t_{n}(\theta)}{\partial \theta}
    &= -\frac{1}{2 \sqrt{(\tilde{\sigma}^t_{n})^2(\theta)}} \Bigg(
    \frac{\partial \mathbf{k}(\theta, \Theta^t; l)}{\partial \theta} [\mathbf{k}(\Theta^t, \Theta^t; l) + \Sigma^{(t)}]^{-1} \mathbf{k}(\theta, \Theta^t; l)^\top 
    \\
    &\quad 
    + \mathbf{k}(\theta, \Theta^t; l) [\mathbf{k}(\Theta^t, \Theta^t; l) + \Sigma^{(t)}]^{-1} \frac{\partial \mathbf{k}(\theta, \Theta^t; l)^\top}{\partial \theta}
    \Bigg)
    \nonumber
    .
\end{align*}
\subsection*{Gradient of $\widetilde{\mathrm{LCB}}$}
\begin{align*}
    \frac{\partial {\widetilde{\text{LCB}}}^t_{n}(\theta)}{\partial \theta}
    &= \frac{\partial}{\partial \theta}\big(|\tilde{\mu}^t_{n}(\theta)| - \kappa \tilde{\sigma}^t_{n}(\theta)\big) = \text{sign}(\tilde{\mu}^t_{n}(\theta)) \cdot \frac{\partial \tilde{\mu}^t_{n}(\theta)}{\partial \theta}
    - \kappa \cdot \frac{\partial \tilde{\sigma}^t_{n}(\theta)}{\partial \theta}. \nonumber
\end{align*}
\subsection*{Gradient of $\widetilde{\mathrm{PI}}$}
We use $\Phi(a) = \int_{-\infty}^{a} \phi(b) db$,
\begin{align*}
    \frac{\partial {\widetilde{\text{PI}}}^t_{n}(\theta)}{\partial \theta}
    &= 
    \frac{\partial}{\partial \theta} \big( \Phi({\bar{z}}^t_{n}(\theta)) - \Phi({\ubar{z}}^t_{n}(\theta)) \big) 
    \\
    &= 
    \frac{\partial}{\partial \theta} \left( \int_{-\infty}^{{\bar{z}}^t_{n}(\theta)} \phi(z) dz - \int_{-\infty}^{{\ubar{z}}^t_{n}(\theta)} \phi(z) dz \right) \nonumber \\
    &= \frac{\partial}{\partial \theta} \int_{{\ubar{z}}^t_{n}(\theta)}^{{\bar{z}}^t_{n}(\theta)} \phi(z) dz. \nonumber
\end{align*}
Applying the Leibniz integral rule $\frac{\partial}{\partial \theta} \int_{a(\theta)}^{b(\theta)} g(z) dz = g(b(\theta)) \frac{\partial b(\theta)}{\partial \theta} - g(a(\theta)) \frac{\partial a(\theta)}{\partial \theta}$, we get
\begin{equation*}
    \frac{\partial {\widetilde{\text{PI}}}^t_{n}(\theta)}{\partial \theta}
    = 
    \phi({\bar{z}}^t_{n}(\theta)) \frac{\partial {\bar{z}}^t_{n}(\theta)}{\partial \theta}
     - \phi({\ubar{z}}^t_{n}(\theta)) \frac{\partial {\ubar{z}}^t_{n}(\theta)}{\partial \theta}
     ,
\end{equation*}
where the gradients of ${\bar{z}}^t_{n}(\theta)$ and ${\ubar{z}}^t_{n}(\theta)$ are
\begin{align*}
    \frac{\partial {\bar{z}}^t_{n}(\theta)}{\partial \theta}
    &= 
    \frac{-\frac{\partial \tilde{\mu}^t_{n}(\theta)}{\partial \theta} \cdot \tilde{\sigma}^t_{n}(\theta)
    - (|\tilde{v}^t_{n}| - \tilde{\mu}^t_{n}(\theta)) \cdot \frac{\partial \tilde{\sigma}^t_{n}(\theta)}{\partial \theta}}{(\tilde{\sigma}^t_{n})^2(\theta)}, 
    \\
    \frac{\partial {\ubar{z}}^t_{n}(\theta)}{\partial \theta}
    &= \frac{-\frac{\partial \tilde{\mu}^t_{n}(\theta)}{\partial \theta} \cdot \tilde{\sigma}^t_{n}(\theta)
    + (|\tilde{v}^t_{n}| + \tilde{\mu}^t_{n}(\theta)) \cdot \frac{\partial \tilde{\sigma}^t_{n}(\theta)}{\partial \theta}}{(\tilde{\sigma}^t_{n})^2(\theta)}
    \nonumber
    .
\end{align*}
\subsection*{Gradient of $\widetilde{\mathrm{EI}}$}
\begin{align*}
    \frac{\partial}{\partial \theta} {\widetilde{\text{EI}}}^t_{n}(\theta) 
    &= |\tilde{v}^t_{n}| \left(
        \frac{\partial {\bar{z}}^t_{n}(\theta)}{\partial \theta} \cdot \phi({\bar{z}}^t_{n}(\theta))
      - \frac{\partial {\ubar{z}}^t_{n}(\theta)}{\partial \theta} \cdot \phi({\ubar{z}}^t_{n}(\theta))
    \right) 
    \\
    &\quad 
    + 
    \frac{\partial \tilde{\mu}^t_{n}(\theta)}{\partial \theta}
        (2\Phi(\tilde{z}^t_{n}(\theta)) - \Phi({\bar{z}}^t_{n}(\theta)) - \Phi({\ubar{z}}^t_{n}(\theta))) \nonumber \\
    &\quad + \tilde{\mu}^t_{n}(\theta) \left(
        2 \frac{\partial \tilde{z}^t_{n}(\theta)}{\partial \theta} \cdot \phi(\tilde{z}^t_{n}(\theta))
      - \frac{\partial {\bar{z}}^t_{n}(\theta)}{\partial \theta} \cdot \phi({\bar{z}}^t_{n}(\theta))
      - \frac{\partial {\ubar{z}}^t_{n}(\theta)}{\partial \theta} \cdot \phi({\ubar{z}}^t_{n}(\theta))
    \right) \nonumber \\
    &\quad - \frac{\partial \tilde{\sigma}^t_{n}(\theta)}{\partial \theta}
        (2\phi(\tilde{z}^t_{n}(\theta)) - \phi({\bar{z}}^t_{n}(\theta)) - \phi({\ubar{z}}^t_{n}(\theta))) \nonumber \\
    &\quad - \tilde{\sigma}^t_{n}(\theta) \Bigg(
        -2 \frac{\partial \tilde{z}^t_{n}(\theta)}{\partial \theta} \cdot \frac{\tilde{z}^t_{n}(\theta)}{\sqrt{2\pi}} e^{-\tfrac{1}{2}\tilde{z}^t_{n}(\theta)^2}
        + \frac{\partial {\ubar{z}}^t_{n}(\theta)}{\partial \theta} \cdot \frac{{\ubar{z}}^t_{n}(\theta)}{\sqrt{2\pi}} e^{-\tfrac{1}{2}{\ubar{z}}^t_{n}(\theta)^2} \nonumber \\
    &\quad + \frac{\partial {\bar{z}}^t_{n}(\theta)}{\partial \theta} \cdot \frac{{\bar{z}}^t_{n}(\theta)}{\sqrt{2\pi}} e^{-\tfrac{1}{2}{\bar{z}}^t_{n}(\theta)^2}
    \Bigg). \nonumber
\end{align*}
The gradient of $\tilde{z}^t_{n}(\theta)$ is given by
\begin{align*}
    \frac{\partial \tilde{z}^t_{n}(\theta)}{\partial \theta}
    = \frac{-\frac{\partial \tilde{\mu}^t_{n}(\theta)}{\partial \theta} \tilde{\sigma}^t_{n}(\theta) + \tilde{\mu}^t_{n}(\theta) \frac{\partial \tilde{\sigma}^t_{n}(\theta)}{\partial \theta}}{(\tilde{\sigma}^t_{n})^2(\theta)}.
\end{align*}


\section{Validation of Analytical Gradients Using Finite Difference Estimates}\label{siam:sec:appendix:gradient validation}

In this experiment, we validate the derived analytical gradients by comparing them against central finite difference estimates with step size $10^{-5}$. 

\begin{figure}[H]
    \centering
    \includegraphics[width=0.8\linewidth]{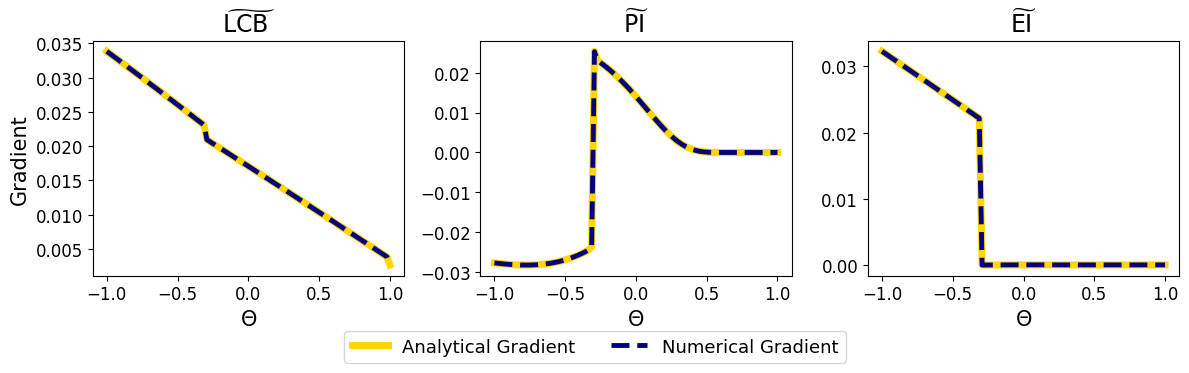}
    \caption{Comparison of analytical and numerical gradient estimates for root finding acquisition functions.}
    \label{siam:fig:gradient-validation}
\end{figure}

As shown in \Cref{siam:fig:gradient-validation}, the analytical gradients are consistent with the finite difference estimates across all three root finding acquisition functions.

\section{Asymptotic Behavior of Root Finding Acquisition Functions}\label{siam:sec:rootless}
In this section, we analyze the rootless scenario, where the sample average root finding objective does not contain a root over $\Theta$. We show that, under suitable posterior regularity conditions, the proposed root finding acquisition functions asymptotically recover the same optimizer as their standard minimization-based counterparts. For analytical convenience, let $(\Omega, \mathcal{F}, \mathbb{P})$ be the probability space in \Cref{siam:sec:rf-metamodeling}, and fix $\omega \in \Omega_+$. The asymptotic analysis throughout this section is with respect to $t$.

\begin{assumption}[Regularity Conditions for the Rootless Scenario]\label{siam:assump:rootless-reg}~
\begin{enumerate}[label=(\alph*), ref=\theassumption(\alph*)]
    \item\label{siam:assump:rootless-one-sided}
    The sample average root finding function is strictly positive over $\Theta$, i.e., there exists $\epsilon(\omega) > 0$ such that $\tilde f_n(\theta;\omega) \geq \epsilon(\omega)$ for all $\theta \in \Theta$.
    \item\label{siam:assump:rootless-globopt}
    Each acquisition function is globally optimized at every iteration.
    \item\label{siam:assump:mean-sign}
    Under the rootless regime, the predictive mean remains strictly positive over $\Theta$ for all sufficiently large $t$, i.e., $\tilde{\mu}^t(\theta;\omega) > 0$ for all $\theta \in \Theta$.
    
    \item\label{siam:assump:rootless-unique}
    For all sufficiently large $t$, the compared acquisition functions have unique optimizers over $\Theta$.
\end{enumerate}
\end{assumption}
\begin{assumption}[Regularity Conditions for the Small-$\epsilon$ Regime]\label{siam:assump:rootless-small}~
\begin{enumerate}[label=(\alph*), ref=\theassumption(\alph*)]
    \item\label{siam:assump:asymptotic-order}
    Let $\Theta_t^* \subseteq \Theta$ be a neighborhood containing the optimizer of the acquisition function. As $t \to \infty$,
    \[
        \sup_{\theta \in \Theta_t^*} \frac{\tilde{\mu}^t(\theta;\omega)}{|\tilde{v}^t(\omega)|} \to 0,
        \qquad
        \sup_{\theta \in \Theta_t^*} \frac{|\tilde{v}^t(\omega)|}{\tilde{\sigma}^t(\theta;\omega)} \to 0.
    \]
    \item\label{siam:assump:local-variance}
    Let $\Theta_t^* \subseteq \Theta$ be a neighborhood containing the optimizer of the acquisition function. The predictive standard deviation is asymptotically constant over $\Theta_t^*$, i.e.,
    \[
        \sup_{\theta_a,\theta_b \in \Theta_t^*}
        \left|\tilde\sigma^t(\theta_a;\omega) - \tilde\sigma^t(\theta_b;\omega)\right|
        \to 0,
        \qquad \text{as } t \to \infty.
    \]
\end{enumerate}
\end{assumption}

\Cref{siam:assump:rootless-one-sided} formalizes the rootless regime, where the sample average root finding function is strictly positive across the parameter space. Although the following analysis is presented under the strictly positive case, the arguments and proofs extend symmetrically to the negative case. \Cref{siam:assump:rootless-globopt} assures that the acquisition function is globally optimized at each iteration. This technical condition is imposed to avoid potential complications arising from local optimization of the acquisition function. \Cref{siam:assump:mean-sign} is a mild condition that the predictive mean follows the one-sided structure of the rootless regime. Lastly, \Cref{siam:assump:rootless-unique} ensures that the compared acquisition functions admit unique optimizers for all sufficiently large $t$. 

The conditions in \Cref{siam:assump:rootless-small} are invoked only in the small-$\epsilon$ analyses of PI and EI, where $\epsilon(\omega):= \inf_{\theta \in \Theta}\tilde{f}_n(\theta;\omega) > 0$ is small enough that the best observed value $|\tilde{v}^t(\omega)|$ and the predictive mean $\tilde{\mu}^t(\theta;\omega)$ both approach zero as $t \to \infty$. \Cref{siam:assump:asymptotic-order} reflects the local asymptotic behavior of the search in this regime, where regions favored by improvement-based acquisition functions are expected to have predictive means smaller than the current best observed value, since otherwise the standardized improvement term would be strictly negative, yielding a small CDF value. The second relation captures the presence of regions with sufficient remaining uncertainty to explore, as the search would have otherwise already converged. \Cref{siam:assump:local-variance} requires the predictive standard deviation to vary only mildly over the neighborhood $\Theta^*_t$ containing the optimizer, and is more likely to be satisfied at a later stage of the search, once $\Theta^*_t$ is well-identified and enough evaluations have accumulated within it so that the predictive uncertainty stabilizes locally.

\subsection*{Root finding Lower Confidence Bound in Rootless Scenario}
\begin{theorem}\label{siam:thm:lcb-rf-rootless}
    Under \Cref{siam:assump:rootless-reg}, for a fixed $\omega \in \Omega_+$, the optimizer of the root finding and standard LCB acquisition functions
    coincide asymptotically, i.e.,
    \[
        \left\|\arg\min_{\theta \in \Theta} \widetilde{\mathrm{LCB}}^t(\theta;\omega)
        - \arg\min_{\theta \in \Theta} \mathrm{LCB}^t(\theta;\omega)\right\| \to 0,
        \quad \text{as } t \to \infty.
    \]
\end{theorem}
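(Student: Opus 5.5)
Under \Cref{siam:assump:mean-sign}, there is a finite $T(\omega)$ such that $\tilde{\mu}^t(\theta;\omega) > 0$ for all $\theta \in \Theta$ and all $t \geq T(\omega)$. For such $t$ we have $|\tilde{\mu}^t(\theta;\omega)| = \tilde{\mu}^t(\theta;\omega)$ pointwise on $\Theta$. The plan is to show that for $t \geq T(\omega)$ the two acquisition functions are identical as functions on $\Theta$, so their optimizers coincide exactly, not merely asymptotically.

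The key identification is that the ``standard'' $\mathrm{LCB}^t$ being compared is built on the same metamodel. In the rootless comparison it is the LCB applied to the surrogate fitted to the signed observations $\tilde{\boldsymbol{f}}^t$, i.e. $\mathrm{LCB}^t(\theta;\omega) = \tilde{\mu}^t(\theta;\omega) - \kappa\tilde{\sigma}^t(\theta;\omega)$. This is the minimization of $\tilde f_n$, which on $\Omega_+$ is equivalent to minimizing $f_n = \tilde f_n^2$, since squaring is strictly increasing on $(0,\infty)$ and \Cref{siam:assump:rootless-one-sided} keeps $\tilde f_n \ge \epsilon(\omega)>0$. I will state this convention explicitly. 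Then for every $t\ge T(\omega)$ and every $\theta\in\Theta$, with the same $\kappa$,
\[
\widetilde{\mathrm{LCB}}^t(\theta;\omega) = |\tilde{\mu}^t(\theta;\omega)| - \kappa\tilde{\sigma}^t(\theta;\omega) = \mathrm{LCB}^t(\theta;\omega).
\]

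Next I combine this identity with \Cref{siam:assump:rootless-globopt} and \Cref{siam:assump:rootless-unique}. The optimizers are computed globally, so both $\arg\min$ sets are the global minimizer sets of the same function. By uniqueness, each set is a single point, and the two points are equal. Hence the norm in the statement is $0$ for all $t\ge \max\{T(\omega),T'(\omega)\}$, where $T'(\omega)$ is the threshold from \Cref{siam:assump:rootless-unique}. This gives convergence to $0$.

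The main obstacle is interpretive: making precise which metamodel underlies $\mathrm{LCB}^t$. If it were instead fitted to $\bar f_n$ or to $f_n$, the two posteriors would differ and exact equality would fail. In that case I would argue via the monotone transformation. Minimizing $\tilde f_n$ and minimizing $f_n$ share minimizers on $\Omega_+$, so the asymptotic identification of optimizers would need additional posterior consistency. I therefore fix the convention above, under which the result follows without any small-$\epsilon$ assumptions. This is consistent with the theorem invoking only \Cref{siam:assump:rootless-reg}.
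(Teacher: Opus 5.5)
Your proposal is correct and follows the paper's own argument. \Cref{siam:assump:mean-sign} removes the absolute value, so $\widetilde{\mathrm{LCB}}^t \equiv \mathrm{LCB}^t$ on $\Theta$ for all large $t$, and \Cref{siam:assump:rootless-globopt} together with \Cref{siam:assump:rootless-unique} then force the two optimizers to be identical. Your explicit convention that $\mathrm{LCB}^t$ is built on the same signed-observation metamodel ($\mu^t=\tilde\mu^t$, $\sigma^t=\tilde\sigma^t$) is exactly the identification the paper makes implicitly when it equates the two functions.
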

\begin{proof}[Proof of \Cref{siam:thm:lcb-rf-rootless}]
    The standard LCB acquisition function and its root finding counterpart at iteration $t$ are defined as
    \[
        \mathrm{LCB}^t(\theta;\omega) = \mu^t(\theta;\omega) - \kappa\sigma^t(\theta;\omega),
        \quad
        \widetilde{\mathrm{LCB}}^t(\theta;\omega) = |\tilde{\mu}^t(\theta;\omega)| - \kappa\tilde{\sigma}^t(\theta;\omega).
    \]
    Under \Cref{siam:assump:mean-sign}, the predictive mean satisfies
    $\tilde{\mu}^t(\theta;\omega) > 0$ for all $\theta \in \Theta$ and all
    sufficiently large $t$, which implies
    \[
        |\tilde{\mu}^t(\theta;\omega)| = \tilde{\mu}^t(\theta;\omega),
        \qquad \forall \theta \in \Theta.
    \]
    Therefore,
    \[
        \widetilde{\mathrm{LCB}}^t(\theta;\omega) = \mathrm{LCB}^t(\theta;\omega),
        \qquad \forall \theta \in \Theta,
    \]
    and under \Cref{siam:assump:rootless-unique}, both acquisition functions have same unique optimizers for all sufficiently large $t$. Then, under \Cref{siam:assump:rootless-globopt},
    \[
        \left\|\arg\min_{\theta \in \Theta} \widetilde{\mathrm{LCB}}^t(\theta;\omega)
        - \arg\min_{\theta \in \Theta} \mathrm{LCB}^t(\theta;\omega)\right\| \to 0,
        \quad \text{as } t \to \infty.
    \]
\end{proof}
Notably, although \Cref{siam:thm:lcb-rf-rootless} is established in the asymptotic regime, this reverting behavior can generally be observed in practice even with a relatively small number of evaluations. Under \Cref{siam:assump:rootless-one-sided}, the observed evaluations remain predominantly one-sided, and the predictive mean $\tilde{\mu}^t(\theta;\omega)$ therefore also tends to be one-sided over $\Theta$.

\subsection*{Root finding Probability Improvement in Rootless Scenario}

\begin{theorem}\label{siam:thm:pi-rf-rootless}
    Under \Cref{siam:assump:rootless-reg} and \Cref{siam:assump:asymptotic-order}, for a fixed $\omega \in \Omega_+$, the maximizers of the root finding and standard PI acquisition functions coincide asymptotically, i.e.,
    \[
        \left\|\arg\max_{\theta \in \Theta} \widetilde{\mathrm{PI}}^t(\theta;\omega)
        - \arg\max_{\theta \in \Theta} \mathrm{PI}^t(\theta;\omega)\right\| \to 0,
        \quad \text{as } t \to \infty.
    \]
\end{theorem}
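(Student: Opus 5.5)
The plan is to show that, under \Cref{siam:assump:asymptotic-order}, the second term of $\widetilde{\mathrm{PI}}^t$ becomes negligible uniformly on $\Theta_t^*$. Once that holds, $\widetilde{\mathrm{PI}}^t$ reduces to a one-sided probability with the same functional form as $\mathrm{PI}^t$, and the maximizers then coincide by \Cref{siam:assump:rootless-unique} and \Cref{siam:assump:rootless-globopt}, as in \Cref{siam:thm:lcb-rf-rootless}.

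Concretely, I would write
\[
\widetilde{\mathrm{PI}}^t(\theta;\omega)=\Phi\big(\bar z^t(\theta)\big)-\Phi\big(\ubar z^t(\theta)\big),
\qquad
\ubar z^t(\theta)=\frac{-|\tilde v^t|-\tilde\mu^t(\theta)}{\tilde\sigma^t(\theta)} .
\]
By \Cref{siam:assump:mean-sign}, $\tilde\mu^t>0$, and by \Cref{siam:assump:rootless-one-sided}, $\tilde v^t\ge\epsilon(\omega)>0$, so $|\tilde v^t|=\tilde v^t$. The first term then has exactly the same form as $\mathrm{PI}^t=\Phi((v^t-\mu^t)/\sigma^t)$. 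Here I identify the standard PI with the surrogate fitted to the same one-sided data, so that $v^t=\tilde v^t$ and $\mu^t=\tilde\mu^t$ in the rootless regime, mirroring how the LCB proof identifies $|\tilde\mu^t|$ with $\mu^t$.

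Next I would control the extra term on $\Theta_t^*$. The first relation in \Cref{siam:assump:asymptotic-order} gives $\tilde\mu^t/|\tilde v^t|\to0$, and the second gives $|\tilde v^t|/\tilde\sigma^t\to0$ uniformly. Together these imply $\tilde\mu^t/\tilde\sigma^t\to0$, so both $\bar z^t\to0$ and $\ubar z^t\to0$ uniformly on $\Theta_t^*$. A first-order expansion of $\Phi$ around $0$ then gives
\[
\widetilde{\mathrm{PI}}^t \approx \phi(0)\,\frac{2|\tilde v^t|}{\tilde\sigma^t},
\qquad
\mathrm{PI}^t \approx \tfrac12 + \phi(0)\,\frac{|\tilde v^t|-\tilde\mu^t}{\tilde\sigma^t}.
\]
The dominant $\theta$-dependence of both is therefore through $1/\tilde\sigma^t$ at leading order, with $\mathrm{PI}^t$ carrying a correction of relative size $\tilde\mu^t/|\tilde v^t|\to0$. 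Both are thus asymptotically monotone transformations of the same quantity $|\tilde v^t|/\tilde\sigma^t(\theta)$ on $\Theta_t^*$. I would make this precise by normalizing, considering $(\widetilde{\mathrm{PI}}^t-0)\tilde\sigma^t/|\tilde v^t|$ versus $(\mathrm{PI}^t-\tfrac12)\tilde\sigma^t/|\tilde v^t|$, and showing that the two normalized criteria differ by $o(1)$ uniformly on $\Theta_t^*$.

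The main obstacle is converting this uniform closeness of criteria into closeness of the argmaxes. Uniform $o(1)$ agreement of objectives does not by itself imply that the maximizers converge without some separation or well-posedness condition. My first attempt would be to argue via \Cref{siam:assump:rootless-unique} together with compactness of $\Theta$: any limit point of the difference of maximizers must maximize the common limiting criterion, whose maximizer is unique, and so the difference tends to zero. A second subtlety is making sure the maximizers of both criteria lie in $\Theta_t^*$. I would argue this from the definition of $\Theta_t^*$ as a neighborhood containing the optimizer, applying it to each acquisition function, and, if needed, taking the union of the two neighborhoods.
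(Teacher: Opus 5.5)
Your third-paragraph computation is essentially the paper's small-$\epsilon$ case. Your opening plan, however, describes a different mechanism, and under \Cref{siam:assump:asymptotic-order} that mechanism is false. If $\bar z^t,{\ubar{z}}^t\to0$ on $\Theta_t^*$, then $\Phi({\ubar{z}}^t)\to\tfrac12$. So the second term of $\widetilde{\mathrm{PI}}^t$ is not negligible, and $\widetilde{\mathrm{PI}}^t$ reduces not to $\mathrm{PI}^t$ but to $2\,\mathrm{PI}^t-1$ plus the remainder $2\phi(0)\tilde\mu^t/\tilde\sigma^t+O(\cdot)$. The picture ``second term vanishes, $\widetilde{\mathrm{PI}}^t\to\mathrm{PI}^t$'' is the paper's \emph{large}-$\epsilon$ case. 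That case uses $\tilde\sigma^t\to0$, hence ${\ubar{z}}^t\le-2\epsilon(\omega)/\tilde\sigma^t\to-\infty$, together with Mills' ratio. This is the opposite of the second relation in \Cref{siam:assump:asymptotic-order}, which is why the paper treats it as a separate case; your proposal conflates the two regimes. On the positive side, your remark that the discrepancy has \emph{relative} size $\tilde\mu^t/|\tilde v^t|$ compared with the leading term $\phi(0)|\tilde v^t|/\tilde\sigma^t$ is the meaningful scale. Both $\widetilde{\mathrm{PI}}^t$ and $2\,\mathrm{PI}^t-1$ tend to $0$ on $\Theta_t^*$, so the paper's absolute statement that the remainder tends to $0$ says nothing, by itself, about maximizers.

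The step that would actually fail is your normalization. Multiplying by $\tilde\sigma^t(\theta)/|\tilde v^t|$ is a $\theta$-dependent rescaling, so it changes the argmax. It also flattens both criteria to asymptotic constants ($2\phi(0)$ and $\phi(0)$), so their closeness, even after fixing the factor $2$, says nothing about where either maximum sits. Use a $\theta$-independent comparison instead. With $g_t(\theta):=|\tilde v^t|/\tilde\sigma^t(\theta)$, the mean value theorem gives $\widetilde{\mathrm{PI}}^t=2\phi(0)g_t(1+o(1))$ and $\mathrm{PI}^t-\tfrac12=\phi(0)g_t(1+o(1))$, uniformly on $\Theta_t^*$.

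Even then, your argmax transfer has nothing to act on. There is no fixed limiting criterion, since $g_t$ changes with $t$. \Cref{siam:assump:rootless-unique} gives uniqueness at each $t$ but no uniform margin, so the compactness/limit-point argument does not apply. What closes the argument is a well-separation condition, for example: for every $\eta>0$ there is $\gamma>0$ such that, for all large $t$, $g_t(\theta)\le(1-\gamma)\max_{\Theta_t^*}g_t$ whenever $\theta\in\Theta_t^*$ is at distance at least $\eta$ from the maximizer of $g_t$. You also need both maximizers to lie in $\Theta_t^*$, as you note. The paper does not supply this either; it simply asserts that the approximate affine relation $\widetilde{\mathrm{PI}}^t\approx2\,\mathrm{PI}^t-1$ transfers the maximizer. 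So you identified the weak point correctly, but your proposed remedy does not close it.
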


\begin{proof}[Proof of \Cref{siam:thm:pi-rf-rootless}]
    The standard PI acquisition function and its root finding counterpart at iteration $t$ are defined as
    \begin{align*}
        \mathrm{PI}^t(\theta;\omega)
        &= \Phi\left( \frac{v^t(\omega) - \mu^t(\theta;\omega)}{\sigma^t(\theta;\omega)} \right)
        ,
        \\
        \widetilde{\mathrm{PI}}^t(\theta;\omega)
        &= \Phi\left( \frac{|\tilde{v}^t(\omega)| - \tilde{\mu}^t(\theta;\omega)}{\tilde{\sigma}^t(\theta;\omega)} \right)
        - \Phi\left( \frac{-|\tilde{v}^t(\omega)| - \tilde{\mu}^t(\theta;\omega)}{\tilde{\sigma}^t(\theta;\omega)} \right).
    \end{align*}
    We consider two distinct regimes based on the magnitude of $\epsilon(\omega)$.

    \paragraph{Case 1. Large-$\epsilon$} 
    
    We show that the second term in $\widetilde{\mathrm{PI}}^t(\theta;\omega)$ is uniformly negligible over $\Theta$ as $t \to \infty$. Since $|\tilde{v}^t(\omega)| \geq \epsilon(\omega)$ and $\tilde{\mu}^t(\theta;\omega) \geq \epsilon(\omega)$ for sufficiently large $t$, we have
    \[
        \frac{-|\tilde{v}^t(\omega)| - \tilde{\mu}^t(\theta;\omega)}{\tilde{\sigma}^t(\theta;\omega)}
        \leq -\frac{2\epsilon(\omega)}{\tilde{\sigma}^t(\theta;\omega)}.
    \]
    By Mills' ratio, for $a > 0$, $\Phi(-a) \leq \frac{1}{a\sqrt{2\pi}}e^{-a^2/2}$, so that
    \[
        \Phi\left(\frac{-|\tilde{v}^t(\omega)| - \tilde{\mu}^t(\theta;\omega)}{\tilde{\sigma}^t(\theta;\omega)}\right)
        \leq \frac{1}{\sqrt{2\pi}} \frac{\tilde{\sigma}^t(\theta;\omega)}{2\epsilon(\omega)}
        \exp\left(-\frac{2\epsilon(\omega)^2}{\tilde{\sigma}^t(\theta;\omega)^{2}}\right).
    \]
    Since $\tilde{\sigma}^t(\theta;\omega) \to 0$ as $t \to \infty$, the argument $-2\epsilon(\omega) / \tilde{\sigma}^t(\theta;\omega) \to -\infty$, and hence the second term vanishes uniformly over $\Theta$. Therefore, the root finding acquisition function reduces to
    \[
        \widetilde{\mathrm{PI}}^t(\theta;\omega) \to
        \Phi\left(\frac{|\tilde{v}^t(\omega)| - \tilde{\mu}^t(\theta;\omega)}{\tilde{\sigma}^t(\theta;\omega)}\right)
        = \mathrm{PI}^t(\theta;\omega).
    \]
    Then, under \Cref{siam:assump:rootless-globopt} and \Cref{siam:assump:rootless-unique}, we obtain
    \[
        \left\|\arg\max_{\theta \in \Theta} \widetilde{\mathrm{PI}}^t(\theta;\omega)
        - \arg\max_{\theta \in \Theta} \mathrm{PI}^t(\theta;\omega)\right\| \to 0,
        \quad \text{as } t \to \infty.
    \]

    \paragraph{Case 2. Small-$\epsilon$}

    We now consider the regime where the irreducible discrepancy $\epsilon(\omega) > 0$ is small. Let $\Theta^*_t \subseteq \Theta$ be a compact neighborhood containing the maximizer of the acquisition function. Under \Cref{siam:assump:asymptotic-order}, both arguments of $\Phi$ in $\widetilde{\mathrm{PI}}^t(\theta;\omega)$ approach zero, i.e.,
    \[
        \frac{|\tilde{v}^t(\omega)| - \tilde{\mu}^t(\theta;\omega)}{\tilde{\sigma}^t(\theta;\omega)} \to 0, \qquad
        \frac{-|\tilde{v}^t(\omega)| - \tilde{\mu}^t(\theta;\omega)}{\tilde{\sigma}^t(\theta;\omega)} \to 0, \quad \text{as } t \to \infty.
    \]
    We then apply a second-order Taylor expansion $\Phi(a) = 1/2 + a\phi(0) + O(a^3)$ around $a = 0$, which gives, for $\theta \in \Theta^*_t$,
    \begin{align*}
        \mathrm{PI}^t(\theta;\omega) &= \frac{1}{2} + \phi(0) \cdot \frac{v^t(\omega) - \mu^t(\theta;\omega)}{\sigma^t(\theta;\omega)} + O\left(\left(\frac{v^t(\omega) - \mu^t(\theta;\omega)}{\sigma^t(\theta;\omega)}\right)^3\right), \\
        \widetilde{\mathrm{PI}}^t(\theta;\omega) &= 2\phi(0) \cdot \frac{|\tilde{v}^t(\omega)|}{\tilde{\sigma}^t(\theta;\omega)} + O\left(\left(\frac{|\tilde{v}^t(\omega)| + \tilde{\mu}^t(\theta;\omega)}{\tilde{\sigma}^t(\theta;\omega)}\right)^3\right).
    \end{align*}
    Under \Cref{siam:assump:mean-sign}, for sufficiently large $t$ we have $|\tilde{\mu}^t(\theta;\omega)| = \tilde{\mu}^t(\theta;\omega)$ and $|\tilde{v}^t(\omega)| = \tilde{v}^t(\omega)$, so that
    \[
        \widetilde{\mathrm{PI}}^t(\theta;\omega) = 2 \cdot \mathrm{PI}^t(\theta;\omega) - 1 + \underbrace{2\phi(0) \cdot \frac{\tilde{\mu}^t(\theta;\omega)}{\tilde{\sigma}^t(\theta;\omega)} + O\left(\left(\frac{\tilde{v}^t(\omega) + \tilde{\mu}^t(\theta;\omega)}{\tilde{\sigma}^t(\theta;\omega)}\right)^3\right)}_{\tilde{r}^t(\theta;\omega)}.
    \]
    Under \Cref{siam:assump:asymptotic-order}, $\tilde{r}^t(\theta;\omega) \to 0$ over $\theta \in \Theta^*_t$, and therefore
    \[
        \widetilde{\mathrm{PI}}^t(\theta;\omega) = 2 \cdot \mathrm{PI}^t(\theta;\omega) - 1, \quad \text{as } t \to \infty.
    \]
    Since $\widetilde{\mathrm{PI}}^t(\theta;\omega)$ is a positive linear transformation of $\mathrm{PI}^t(\theta;\omega)$, both share the same maximizer, and under \Cref{siam:assump:rootless-globopt} and \Cref{siam:assump:rootless-unique},
    \[
        \left\|\arg\max_{\theta \in \Theta} \widetilde{\mathrm{PI}}^t(\theta;\omega) - \arg\max_{\theta \in \Theta} \mathrm{PI}^t(\theta;\omega)\right\| \to 0, \quad \text{as } t \to \infty.
    \]
\end{proof}

\subsection*{Root finding Expected Improvement in Rootless Scenario}

\begin{theorem}\label{siam:thm:ei-rf-rootless}
    Under \Cref{siam:assump:rootless-reg} and \Cref{siam:assump:rootless-small}, for a fixed $\omega \in \Omega_+$, the maximizers of the root finding and standard EI acquisition functions coincide asymptotically, i.e.,
    \[
        \left\|\arg\max_{\theta \in \Theta} \widetilde{\mathrm{EI}}^t(\theta;\omega)
        - \arg\max_{\theta \in \Theta} \mathrm{EI}^t(\theta;\omega)\right\| \to 0, \quad \text{as } t \to \infty.
    \]
\end{theorem}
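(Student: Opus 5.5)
I will follow the two-regime structure used for \Cref{siam:thm:pi-rf-rootless}. Throughout, I take the standard EI to be built from the same posterior as the root finding EI, i.e. $\mu^t=\tilde\mu^t$, $\sigma^t=\tilde\sigma^t$ and $v^t=\tilde v^t$. By \Cref{siam:assump:mean-sign}, for large $t$ we have $|\tilde v^t|=\tilde v^t$ and $\tilde\mu^t>0$ on $\Theta$.

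The starting point is the integral representation from \Cref{siam:appendix:eirf}:
\[
\widetilde{\mathrm{EI}}^t=\mathbb{E}\left[\max(0,|\tilde v^t|-|\tilde\eta^t|)\right].
\]
I split the integration region at $\tilde z^t$, i.e. at the point where $\tilde\eta^t=0$. On $\{\tilde\eta^t\ge 0\}$ the integrand equals $\max(0,\tilde v^t-\tilde\eta^t)$, which is exactly the standard EI integrand restricted to that set. This gives the exact decomposition
\[
\widetilde{\mathrm{EI}}^t(\theta)=\mathrm{EI}^t(\theta)-\mathbb{E}\left[(\tilde v^t-\tilde\eta^t)\,\boldsymbol 1\{\tilde\eta^t<0\}\right]+\mathbb{E}\left[(\tilde v^t+\tilde\eta^t)^+\,\boldsymbol 1\{\tilde\eta^t<0\}\right].
\]
Equivalently, the discrepancy can be written with $\Phi,\phi$ evaluated at $\tilde z^t$ and $\ubar z^t$. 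All of the work then goes into controlling this remainder term.

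\textbf{Case 1 (large $\epsilon$).} Here $\tilde\mu^t\ge\epsilon$ and $\tilde\sigma^t\to0$, as in the PI proof. The remainder is bounded by quantities of the form
\[
(\tilde v^t+\tilde\mu^t)\,\Phi(-\tilde\mu^t/\tilde\sigma^t)+\tilde\sigma^t\phi(\tilde\mu^t/\tilde\sigma^t).
\]
By the Mills ratio bound, both terms decay like $\exp(-\epsilon^2/(2(\tilde\sigma^t)^2))$, uniformly on $\Theta$. Hence $\sup_\Theta|\widetilde{\mathrm{EI}}^t-\mathrm{EI}^t|\to0$. Together with \Cref{siam:assump:rootless-globopt} and \Cref{siam:assump:rootless-unique}, the maximizers then coincide asymptotically. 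To make this rigorous I would add a separation argument: a well-separated unique maximizer plus uniform convergence implies convergence of the argmax.

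\textbf{Case 2 (small $\epsilon$).} I would Taylor-expand the closed forms around standardized arguments that tend to zero on $\Theta_t^*$, which \Cref{siam:assump:asymptotic-order} guarantees. Write $a=\tilde v^t/\tilde\sigma^t$ and $b=\tilde\mu^t/\tilde\sigma^t$, both $o(1)$.

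For the standard EI, using $\Phi(x)=\tfrac12+x\phi(0)+O(x^3)$ and $\phi(x)=\phi(0)(1-x^2/2)+O(x^4)$:
\[
\mathrm{EI}^t=\tilde\sigma^t\left[\phi(0)+\tfrac12(a-b)+O\left((a-b)^2\right)\right].
\]
For the root finding EI, the bracketed combinations in the closed form expand to
\[
\Phi(\bar z)-\Phi(\ubar z)=2a\phi(0)+O(\cdot),\qquad 2\Phi(\tilde z)-\Phi(\bar z)-\Phi(\ubar z)=O(b\,\cdot),
\]
\[
2\phi(\tilde z)-\phi(\bar z)-\phi(\ubar z)=\phi(0)\left[a^2+b^2-b^2\right]+\dots .
\]
Collecting terms yields
\[
\widetilde{\mathrm{EI}}^t=\tilde\sigma^t\phi(0)\,a^2\,(1+o(1))=\phi(0)\,(\tilde v^t)^2/\tilde\sigma^t\,(1+o(1)).
\]
This leading term varies over $\theta$ only through $\tilde\sigma^t$, and the same is true of the leading term $\tilde\sigma^t\phi(0)$ of $\mathrm{EI}^t$. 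By \Cref{siam:assump:local-variance}, $\tilde\sigma^t$ is asymptotically constant on $\Theta_t^*$. Consequently both acquisition functions are, up to vanishing relative error, increasing functions of the same first-order corrections. In particular, the $\tilde\mu^t$-dependence enters through $-\tfrac12 b$ in $\mathrm{EI}^t$ and through a term of order $-b^2$ (or $-ab$) in $\widetilde{\mathrm{EI}}^t$. Both are decreasing in $\tilde\mu^t$ on $\Theta_t^*$ since $\tilde\mu^t>0$. I would then argue that the two functions induce the same ordering on $\Theta_t^*$, so that their maximizers coincide in the limit, again invoking \Cref{siam:assump:rootless-globopt} and \Cref{siam:assump:rootless-unique}.

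The main obstacle is the Case 2 bookkeeping. Because the leading terms in $\widetilde{\mathrm{EI}}^t$ cancel, the surviving term is second order, so the expansions must be carried to higher order than in the PI proof. Beyond that, showing that both acquisition functions rank points of $\Theta_t^*$ identically genuinely requires \Cref{siam:assump:local-variance} to suppress the $\tilde\sigma^t$-variation against the $\tilde\mu^t$-variation. I would first try to show that, after dividing by $\tilde\sigma^t$ and subtracting constants, both reduce to a common monotone function of $b$ plus $o(\sup_{\Theta_t^*}|b|)$ errors.
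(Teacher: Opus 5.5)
\textbf{Case 1.} Your Case 1 is sound and somewhat cleaner than the paper's. The paper compares ratios of the individual $\Phi$ and $\phi$ terms in the closed form to argue that only $\Phi(\bar z^t)$ and $\phi(\bar z^t)$ survive. Your exact split at $\tilde\eta^t=0$ instead bounds the remainder uniformly by Gaussian tail terms. This works under the same implicit facts the paper uses ($\tilde\mu^t\ge\epsilon$, $\sup_\Theta\tilde\sigma^t\to0$), and you rightly flag that passing to the argmax needs a separation condition.

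\textbf{The gap is the last step of Case 2, and it is not bookkeeping.} Your expansions $\mathrm{EI}^t\approx\phi(0)\tilde\sigma^t+\tfrac12(\tilde v^t-\tilde\mu^t)$ and $\widetilde{\mathrm{EI}}^t\approx\phi(0)(\tilde v^t)^2/\tilde\sigma^t$ are correct. However, they contradict your claim that both are increasing functions of ``the same first-order corrections.'' The leading $\theta$-dependence is increasing in $\tilde\sigma^t$ for $\mathrm{EI}^t$ but decreasing for $\widetilde{\mathrm{EI}}^t$, so any undominated $\tilde\sigma^t$-variation pushes the two maximizers toward opposite extremes.

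Moreover, $\widetilde{\mathrm{EI}}^t=\mathbb{E}[(\tilde v^t-|\tilde\mu^t+\tilde\sigma^t Z|)^+]$ is exactly even in $\tilde\mu^t$, since $Z$ and $-Z$ have the same law. Hence there is no $b$ or $ab$ term; the $\tilde\mu^t$-dependence is $-\phi(0)(\tilde v^t)^2(\tilde\mu^t)^2/(2(\tilde\sigma^t)^3)$ at leading order. In sensitivities: $\partial_{\tilde\sigma}\mathrm{EI}^t\approx\phi(0)$ and $\partial_{\tilde\mu}\mathrm{EI}^t\approx-\tfrac12$, whereas $\partial_{\tilde\sigma}\widetilde{\mathrm{EI}}^t\approx-\phi(0)a^2$ and $\partial_{\tilde\mu}\widetilde{\mathrm{EI}}^t\approx-\phi(0)a^2b$.

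So for $\tilde\mu^t$ to decide the ranking on $\Theta_t^*$:
\begin{itemize}
\item $\mathrm{EI}^t$ needs the oscillation of $\tilde\sigma^t$ there to be $o(\mathrm{osc}\,\tilde\mu^t)$;
\item $\widetilde{\mathrm{EI}}^t$ needs $o(b\cdot\mathrm{osc}\,\tilde\mu^t)$, a factor $b\to0$ smaller.
\end{itemize}
Your fallback target of ``a common monotone function of $b$ plus $o(\sup|b|)$ errors'' cannot work either. The entire $b$-dependence of $\widetilde{\mathrm{EI}}^t/\tilde\sigma^t$ is $O(a^2b^2)=o(b)$, so such errors could swamp it.

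\textbf{Why the stated assumption does not close it.} \Cref{siam:assump:local-variance} only says that the oscillation of $\tilde\sigma^t$ on $\Theta_t^*$ tends to zero in absolute terms. That cannot be weighed against either requirement, because $\tilde\mu^t$, and hence its oscillation, also tends to zero in this regime. So ``I would then argue that the two functions induce the same ordering'' is precisely the missing content. It cannot be supplied from the stated assumptions without a relative strengthening of \Cref{siam:assump:local-variance}.

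\textbf{A possible fix.} If you read that assumption as freezing $\tilde\sigma^t$ on $\Theta_t^*$, you do not need the expansions at all. With $\tilde\sigma^t$ and $\tilde v^t$ fixed, $\partial_{\tilde\mu}\mathrm{EI}^t=-\Phi(\bar z^t)<0$. Also $\partial_{\tilde\mu}\widetilde{\mathrm{EI}}^t=-[\Phi(\bar z^t)-2\Phi(\tilde z^t)+\Phi(\ubar{z}^t)]<0$ whenever $\tilde\mu^t>0$, because the bracket equals $\mathbb{P}(0<\tilde\eta^t<\tilde v^t)-\mathbb{P}(-\tilde v^t<\tilde\eta^t<0)$. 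Hence both are maximized at the minimizer of $\tilde\mu^t$ over $\Theta_t^*$.

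\textbf{Comparison with the paper.} The paper's Case 2 drops $\tilde\mu^t$ from all standardized arguments and writes $\widetilde{\mathrm{EI}}^t\approx2\,\mathrm{EI}^t-2\phi(0)\tilde\sigma^t$ up to terms it deems negligible against $\tilde v^t$. It then uses \Cref{siam:assump:local-variance} to treat $-2\phi(0)\tilde\sigma^t$ as a constant shift and concludes a positive affine relation. It does not weigh the dropped terms against the $\theta$-variation of $\mathrm{EI}^t$ on $\Theta_t^*$. Your finer expansion shows those dropped $\tilde\mu^t$-terms are exactly the ones that cancel the first-order $\tilde\mu^t$-dependence of $2\,\mathrm{EI}^t$. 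Importing that shortcut would therefore not close your gap.
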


\begin{proof}[Proof of \Cref{siam:thm:ei-rf-rootless}]
                For EI, we follow the same logical flow of \Cref{siam:thm:pi-rf-rootless}, exploring the large-$\epsilon$ and small-$\epsilon$ regimes.

        \paragraph{Case 1. Large-$\epsilon$}

        Recall the closed-form expression of $\widetilde{\mathrm{EI}}^t(\theta;\omega)$ is expressed as
        \begin{align*}
            \widetilde{\mathrm{EI}}^t(\theta;\omega)
            &= \tilde{v}^t(\omega) \left[ \Phi\left( \frac{\tilde{v}^t(\omega) - \tilde{\mu}^t(\theta;\omega)}{\tilde{\sigma}^t(\theta;\omega)} \right)
            - \Phi\left( \frac{-\tilde{v}^t(\omega) - \tilde{\mu}^t(\theta;\omega)}{\tilde{\sigma}^t(\theta;\omega)} \right) \right] \\
            &\quad + \tilde{\mu}^t(\theta;\omega) \left[ 2\Phi\left( \frac{-\tilde{\mu}^t(\theta;\omega)}{\tilde{\sigma}^t(\theta;\omega)} \right)
            - \Phi\left( \frac{\tilde{v}^t(\omega) - \tilde{\mu}^t(\theta;\omega)}{\tilde{\sigma}^t(\theta;\omega)} \right)
            - \Phi\left( \frac{-\tilde{v}^t(\omega) - \tilde{\mu}^t(\theta;\omega)}{\tilde{\sigma}^t(\theta;\omega)} \right) \right] \\
            &\quad - \tilde{\sigma}^t(\theta;\omega) \left[ 2\phi\left( \frac{-\tilde{\mu}^t(\theta;\omega)}{\tilde{\sigma}^t(\theta;\omega)} \right)
            - \phi\left( \frac{\tilde{v}^t(\omega) - \tilde{\mu}^t(\theta;\omega)}{\tilde{\sigma}^t(\theta;\omega)} \right)
            - \phi\left( \frac{-\tilde{v}^t(\omega) - \tilde{\mu}^t(\theta;\omega)}{\tilde{\sigma}^t(\theta;\omega)} \right) \right].
        \end{align*}
        Again using Mills' ratio, we have
        \begin{align*}
            \Phi\left(\frac{\tilde{v}^t(\omega)-\tilde{\mu}^t(\theta;\omega)}{\tilde{\sigma}^t(\theta;\omega)}\right)
            &\leq \frac{1}{\sqrt{2\pi}} \frac{\tilde{\sigma}^t(\theta;\omega)}{\tilde{\mu}^t(\theta;\omega)-\tilde{v}^t(\omega)}
            \exp\left(-\frac{(\tilde{\mu}^t(\theta;\omega)-\tilde{v}^t(\omega))^{2}}{2\tilde{\sigma}^t(\theta;\omega)^{2}}\right), \\
            \Phi\left(\frac{-\tilde{\mu}^t(\theta;\omega)}{\tilde{\sigma}^t(\theta;\omega)}\right)
            &\leq \frac{1}{\sqrt{2\pi}} \frac{\tilde{\sigma}^t(\theta;\omega)}{\tilde{\mu}^t(\theta;\omega)} \exp\left(-\frac{\tilde{\mu}^t(\theta;\omega)^2}{2\tilde{\sigma}^t(\theta;\omega)^{2}}\right).
        \end{align*}
        As $t \to \infty$ and $\tilde{\sigma}^t(\theta;\omega) \to 0$, we can show that
        \[
            \frac{\Phi\left( \frac{\tilde{v}^t(\omega) - \tilde{\mu}^t(\theta;\omega)}{\tilde{\sigma}^t(\theta;\omega)} \right)}{\Phi\left( \frac{-\tilde{\mu}^t(\theta;\omega)}{\tilde{\sigma}^t(\theta;\omega)} \right)} \leq \frac{\epsilon(\omega)}{\tilde{\mu}^t(\theta;\omega) - \tilde{v}^t(\omega)}\exp\left(\frac{\epsilon(\omega)^2 - (\tilde{\mu}^t(\theta;\omega) - \tilde{v}^t(\omega))^2}{2\tilde{\sigma}^t(\theta;\omega)^2}\right) \to \infty,
        \]
        and thus $\Phi\left( \frac{\tilde{v}^t(\omega) - \tilde{\mu}^t(\theta;\omega)}{\tilde{\sigma}^t(\theta;\omega)} \right)$ becomes the dominant term among the CDF terms. For the PDF terms, we get
        \begin{align*}
            \frac{\phi\left(\frac{\tilde{v}^t(\omega)-\tilde{\mu}^t(\theta;\omega)}{\tilde{\sigma}^t(\theta;\omega)}\right)}{\phi\left(-\frac{\tilde{\mu}^t(\theta;\omega)}{\tilde{\sigma}^t(\theta;\omega)}\right)}
            &= \exp\left(\frac{\tilde{v}^t(\omega)(2\tilde{\mu}^t(\theta;\omega)-\tilde{v}^t(\omega))}{2\tilde{\sigma}^t(\theta;\omega)^2}\right) \to \infty, \\
            \frac{\phi\left(\frac{\tilde{v}^t(\omega)-\tilde{\mu}^t(\theta;\omega)}{\tilde{\sigma}^t(\theta;\omega)}\right)}{\phi\left(-\frac{\tilde{v}^t(\omega)+\tilde{\mu}^t(\theta;\omega)}{\tilde{\sigma}^t(\theta;\omega)}\right)}
            &= \exp\left(\frac{2\tilde{v}^t(\omega)\tilde{\mu}^t(\theta;\omega)}{\tilde{\sigma}^t(\theta;\omega)^2}\right) \to \infty.
        \end{align*}
        Thus, among the PDF terms, only $\phi\left(\frac{\tilde{v}^t(\omega)-\tilde{\mu}^t(\theta;\omega)}{\tilde{\sigma}^t(\theta;\omega)}\right)$ is the dominant contribution, and $\widetilde{\mathrm{EI}}^t(\theta;\omega)$ reduces in the limit to
        \begin{align*}
            \widetilde{\mathrm{EI}}^t(\theta;\omega) 
            &\to 
            (v^t(\omega) - \mu^t(\theta;\omega)) \cdot \Phi\left( \frac{v^t(\omega) - \mu^t(\theta;\omega)}{\sigma^t(\theta;\omega)} \right) + \sigma^t(\theta;\omega) \cdot \phi\left( \frac{v^t(\omega) - \mu^t(\theta;\omega)}{\sigma^t(\theta;\omega)} \right) \\
            &= \mathrm{EI}^t(\theta;\omega).
        \end{align*}
        Then, under \Cref{siam:assump:rootless-globopt} and \Cref{siam:assump:rootless-unique},
        \[
            \left\|\arg\max_{\theta\in\Theta}\widetilde{\mathrm{EI}}^t(\theta;\omega) - \arg\max_{\theta\in\Theta}\mathrm{EI}^t(\theta;\omega)\right\| \to 0, \quad \text{as } t \to \infty.
        \]
        
        \paragraph{Case 2. Small-$\epsilon$}
        
        We now consider the small-$\epsilon(\omega)$ case. Let $\Theta^*_t \subseteq \Theta$ be a compact neighborhood containing the maximizer of the acquisition function. From \Cref{siam:assump:asymptotic-order}, we can neglect the contribution of $\mu^t(\theta;\omega)$, i.e., $v^t(\omega) \pm \mu^t(\theta;\omega) = v^t(\omega)$, where we can write
        \begin{align*}
            \widetilde{\mathrm{EI}}^t(\theta;\omega)
            &= \tilde{v}^t(\omega) \left[
                \Phi\left( \frac{\tilde{v}^t(\omega)}{\tilde{\sigma}^t(\theta;\omega)} \right)
              - \Phi\left( \frac{-\tilde{v}^t(\omega)}{\tilde{\sigma}^t(\theta;\omega)} \right)
            \right] 
            \\
            &+ \tilde{\mu}^t(\theta;\omega) \left[
                1
              - \Phi\left( \frac{\tilde{v}^t(\omega)}{\tilde{\sigma}^t(\theta;\omega)} \right)
              - \Phi\left( \frac{-\tilde{v}^t(\omega)}{\tilde{\sigma}^t(\theta;\omega)} \right)
            \right] \\
            &\quad - \tilde{\sigma}^t(\theta;\omega) \left[
                2\phi(0)
              - \phi\left( \frac{\tilde{v}^t(\omega)}{\tilde{\sigma}^t(\theta;\omega)} \right)
              - \phi\left( \frac{-\tilde{v}^t(\omega)}{\tilde{\sigma}^t(\theta;\omega)} \right)
            \right] \\
            &= 2\tilde{v}^t(\omega) \cdot \Phi\left( \frac{\tilde{v}^t(\omega)}{\tilde{\sigma}^t(\theta;\omega)} \right)
              + \tilde{\mu}^t(\theta;\omega) \cdot \left[ 1 - 2\Phi\left( \frac{\tilde{v}^t(\omega)}{\tilde{\sigma}^t(\theta;\omega)} \right) \right]
            \\
            &- 2\tilde{\sigma}^t(\theta;\omega) \cdot \left[ \phi(0) - \phi\left( \frac{\tilde{v}^t(\omega)}{\tilde{\sigma}^t(\theta;\omega)} \right) \right].
        \end{align*}
        Similarly, the standard EI can be expressed as
        \[
            \mathrm{EI}^t(\theta;\omega) = \tilde{v}^t(\omega) \cdot \Phi\left( \frac{\tilde{v}^t(\omega)}{\tilde{\sigma}^t(\theta;\omega)} \right) + \tilde{\sigma}^t(\theta;\omega) \cdot \phi\left( \frac{\tilde{v}^t(\omega)}{\tilde{\sigma}^t(\theta;\omega)} \right),
        \]
        so that
        \[
            \widetilde{\mathrm{EI}}^t(\theta;\omega) = 2 \cdot \mathrm{EI}^t(\theta;\omega) + \tilde{\mu}^t(\theta;\omega) \cdot \left[ 1 - 2\Phi\left( \frac{\tilde{v}^t(\omega)}{\tilde{\sigma}^t(\theta;\omega)} \right) \right] - 2\tilde{\sigma}^t(\theta;\omega) \cdot \phi(0).
        \]
        The contribution of $\tilde{\mu}^t(\theta;\omega) \cdot \left[ 1 - 2\Phi\left( \frac{\tilde{v}^t(\omega)}{\tilde{\sigma}^t(\theta;\omega)} \right) \right]$ is negligible under \Cref{siam:assump:asymptotic-order} as $t \to \infty$, since $|\tilde{v}^t(\omega)| \gg \tilde{\mu}^t(\theta;\omega)$. Under \Cref{siam:assump:local-variance}, the term $-2\tilde{\sigma}^t(\theta;\omega)\phi(0)$ is asymptotically constant over $\Theta^*_t$, so that $\widetilde{\mathrm{EI}}^t(\theta;\omega)$ is a positive linear transformation of $\mathrm{EI}^t(\theta;\omega)$ in the limit. Both then share the same maximizer, and under \Cref{siam:assump:rootless-globopt} and \Cref{siam:assump:rootless-unique},
            \[        \left\|\arg\max_{\theta\in\Theta}\widetilde{\mathrm{EI}}^t(\theta;\omega) - \arg\max_{\theta\in\Theta}\mathrm{EI}^t(\theta;\omega)\right\| \to 0, \quad \text{as } t \to \infty.
        \]
\end{proof}

\subsection{Numerical Experiment}

In this section, we have analyzed the asymptotic behavior of the root finding acquisition functions under the rootless regime. As a follow-up experiment, we empirically validate the established results through a simple one-dimensional example in which the signed discrepancy remains strictly positive over the parameter space. We define a noise-free signed discrepancy as 
\begin{equation*}
    \tilde{f}(\theta) := \theta^2 + \epsilon, \quad \theta \in [-1,1],
\end{equation*}
where $\epsilon > 0$ denotes the irreducible discrepancy. We consider two distinct regimes: a small-$\epsilon$ case with $\epsilon = 0.1$ and a large-$\epsilon$ case with $\epsilon = 10$. In both cases, the global optimizer is $\theta^* = 0$. For each $\theta \in \Theta$, noisy observations are generated as
\begin{equation*}
    \tilde{f}_{(j)}(\theta) = \tilde{f}(\theta) + \varepsilon_{(j)}, \quad \varepsilon_{(j)} \sim \mathcal{N}(0, 0.01^2),
\end{equation*}
where $\varepsilon_{(i)}$ denotes the $i$-th replication. At each $\theta$, we perform 10 independent replications. 

\begin{figure}[htbp]
    \centering
    \includegraphics[width=0.8\linewidth]{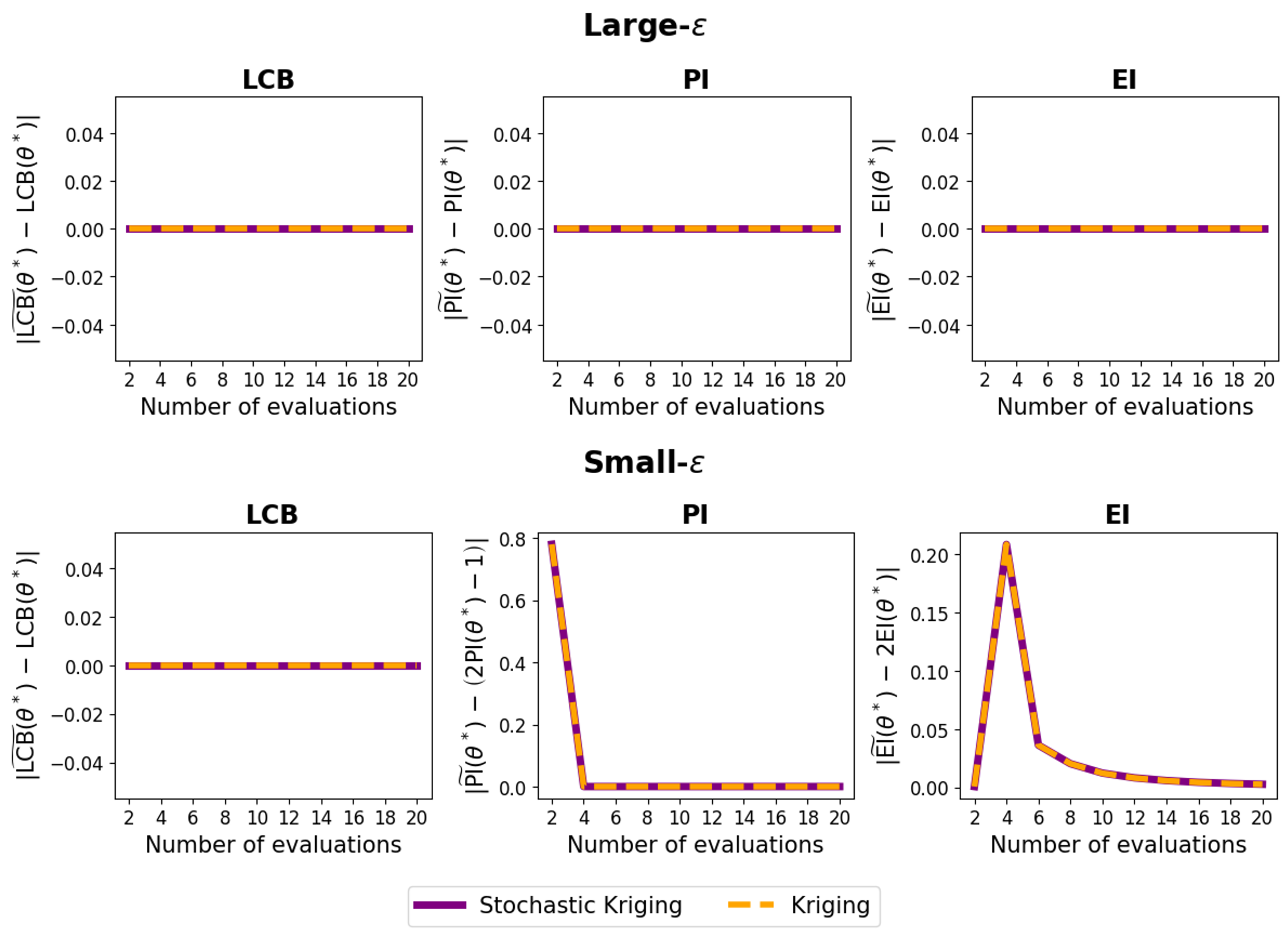}
        \caption{Acquisition function difference between the standard and root finding acquisition functions for large-$\epsilon$ and small-$\epsilon$ cases, averaged over 100 replications.}
    \label{siam:fig:rootless_result}
\end{figure}

In this experiment setting, we progressively increase the number of uniformly spaced design points, fit the metamodel at each stage $t$, and evaluate both the minimization and root finding acquisition functions at the known solution $\theta^* = 0$. Then, we compute the difference of each root finding acquisition function from the asymptotic form derived in \Cref{siam:sec:rootless}, and repeat this process across 100 macro replications.

\Cref{siam:fig:rootless_result} illustrates the averaged differences as the number of design points increases. In the large-$\epsilon$ case, the difference is negligible across all three acquisition functions regardless of the number of design points. This is because large $\epsilon$ makes the changes from the root finding modifications (e.g., additional CDF term in $\widetilde{\mathrm{PI}}$) effectively negligible. This result is somewhat natural since when the function is far away from zero, finding the root and finding the minimum become nearly equivalent problems. In the small-$\epsilon$ case, the LCB difference vanishes immediately, as the predictive mean becomes strictly positive with fewer observations in this simple example. For PI, the difference vanishes as more design points are added. For EI, however, the difference between the asymptotic relationship reduces more slowly, since it is further governed by the additional term $-2\tilde{\sigma}^t(\theta;\omega)\phi(0)$, which stabilizes only as the predictive uncertainty itself uniform within the region with many design points. This is also consistent with the discussion following \Cref{siam:assump:local-variance}, where the condition is expected to hold only at a relatively later stage of the search.



\section{Derivation of Stochastic Kriging Predictive Distribution}\label{siam:sec:SK-pred-derivation}
Consider a linear predictor of the form
\begin{equation*}
    \hat{\mu}(\theta) = w(\theta)^\top \bar{\boldsymbol{f}}^t_n,
\end{equation*}
where $w(\theta) \in \mathbb{R}^{p+t}$ is a weight vector. Then, the mean squared prediction error is given by
\begin{equation*}
    \mathbb{E}\left[(\bar{f}(\theta) - \hat{\mu}(\theta))^2\right]
    =
    k(\theta,\theta;l)
    -
    2w(\theta)^\top \mathbf{k}(\theta,\Theta^t;l)
    +
    w(\theta)^\top \left[\mathbf{k}(\Theta^t, \Theta^t;l) + \Sigma^{(t)}\right] w(\theta).
\end{equation*}
Minimizing this with respect to $w(\theta)$ yields the optimal weight
\begin{equation*}
    w^*(\theta) = \left[\mathbf{k}(\Theta^t, \Theta^t;l) + \Sigma^{(t)}\right]^{-1} \mathbf{k}(\theta,\Theta^t;l),
\end{equation*}
and the corresponding optimal predictor
\begin{equation*}
    \mu^t_n(\theta) = {w^*(\theta)}^\top \bar{\boldsymbol{f}}^t_n 
    = 
    \mathbf{k}(\theta, \Theta^t;l) \left[\mathbf{k}(\Theta^t,\Theta^t;l) + \Sigma^{(t)}\right]^{-1} \bar{\boldsymbol{f}}^t_n.
\end{equation*}
The minimum achievable prediction error is then
\begin{equation*}
    (\sigma^t_n)^2(\theta) = k(\theta, \theta;l) 
    - \mathbf{k}(\theta,\Theta^t;l) \left[\mathbf{k}(\Theta^t,\Theta^t;l) + \Sigma^{(t)}\right]^{-1} 
    \mathbf{k}(\theta,\Theta^t;l)^\top.
\end{equation*}
Under the GRF assumption, the predictive distribution at $\theta$ is given by
\begin{equation*}
    \eta^t_n(\theta) \sim \mathcal{N}\left(\mu^t_n(\theta), (\sigma^t_n)^2(\theta)\right).
\end{equation*}

\section{Derivation of Probability of Root Existence}\label{siam:sec:prob-rootexist}

Let $(\Omega, \mathcal{F}, \mathbb{P})$ be the probability space introduced in \Cref{siam:sec:rf-metamodeling}. The sign-change event for a pair $(\theta_a, \theta_b)$ is defined as
\begin{equation*}
    \mathcal{I}^{\pm}(\theta_a, \theta_b) 
    := \{\omega \in \Omega : \tilde{f}_n(\theta_a;\omega)\cdot\tilde{f}_n(\theta_b;\omega) < 0\},
\end{equation*}
under which, by \Cref{siam:thm:bolzano}, a root of $\tilde{f}_n(\theta)$ is guaranteed within $[\theta_a, \theta_b]$. Then, the probability of this event is
\begin{align*}
    \mathbb{P}(\mathcal{I}^{\pm}(\theta_a,\theta_b)) 
    &= \mathbb{P}(\tilde{f}_n(\theta_a)<0,\tilde{f}_n(\theta_b)>0) 
    + \mathbb{P}(\tilde{f}_n(\theta_a)>0,\tilde{f}_n(\theta_b)<0),
\end{align*}
since the sign-change event $\mathcal{I}^{\pm}(\theta_a, \theta_b)$ arises from the two opposite cases. As $\tilde{f}_{n(\theta)}(\theta)$ is a replication average, its exact distribution is generally intractable, and hence we approximate it using the stochastic kriging posterior $\tilde{\eta}^t_n(\theta) \sim \mathcal{N}(\tilde{\mu}^t_n(\theta), (\tilde{\sigma}^t_n)^2(\theta))$ and assume independent replications between configurations. Then, we can write the closed form expression as
\begin{align*}
    \mathbb{P}(\mathcal{I}^{\pm}(\theta_a,\theta_b)) 
    &\approx 
    \mathbb{P}(\tilde{\eta}^t_n(\theta_a)<0)\mathbb{P}(\tilde{\eta}^t_n(\theta_b)>0) 
    + \mathbb{P}(\tilde{\eta}^t_n(\theta_a)>0)\mathbb{P}(\tilde{\eta}^t_n(\theta_b)<0)
    \\
    &= 
    \Phi\left(-\frac{\tilde{\mu}^t_n(\theta_a)}{\tilde{\sigma}^t_n(\theta_a)}\right) 
    + \Phi\left(-\frac{\tilde{\mu}^t_n(\theta_b)}{\tilde{\sigma}^t_n(\theta_b)}\right) 
    - 2\Phi\left(-\frac{\tilde{\mu}^t_n(\theta_a)}{\tilde{\sigma}^t_n(\theta_a)}\right)
    \Phi\left(-\frac{\tilde{\mu}^t_n(\theta_b)}{\tilde{\sigma}^t_n(\theta_b)}\right).
\end{align*}

\section{Figures for Numerical Experiments}
This section presents the supplementary figures for \Cref{siam:sec:experiment}.

\begin{figure}[!htbp]
    \centering
    \includegraphics[width=0.40\linewidth]{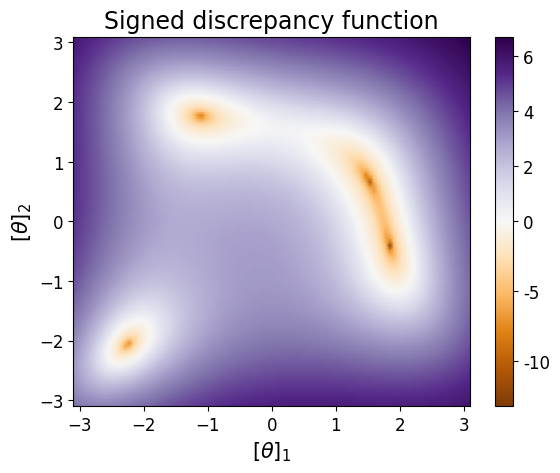}
    \caption{Noise-free signed discrepancy surface of the modified Himmelblau function.}
    \label{siam:fig:2d_func}
\end{figure}

\begin{figure}[htbp]
    \centering
    \includegraphics[width=0.70\linewidth]{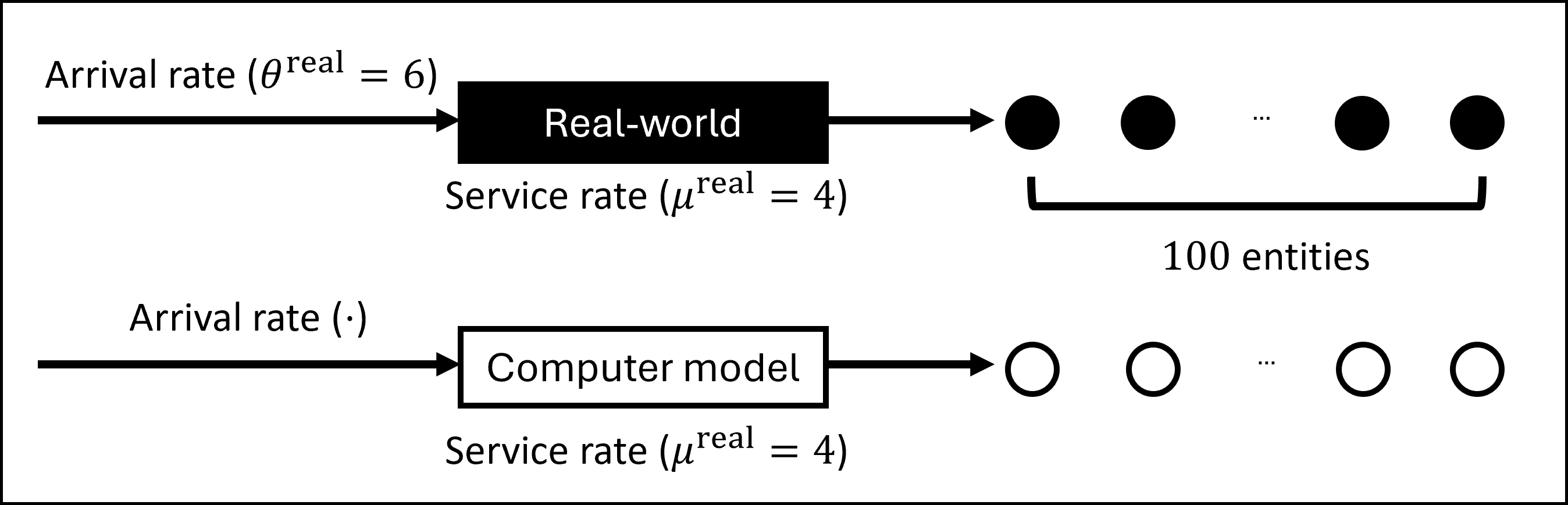}
    \caption{Overview of the M/M/1 queueing system calibration setup.}
    \label{siam:fig:mm1}
\end{figure}

\begin{figure}[htbp]
    \centering
    \includegraphics[width=0.77\linewidth]{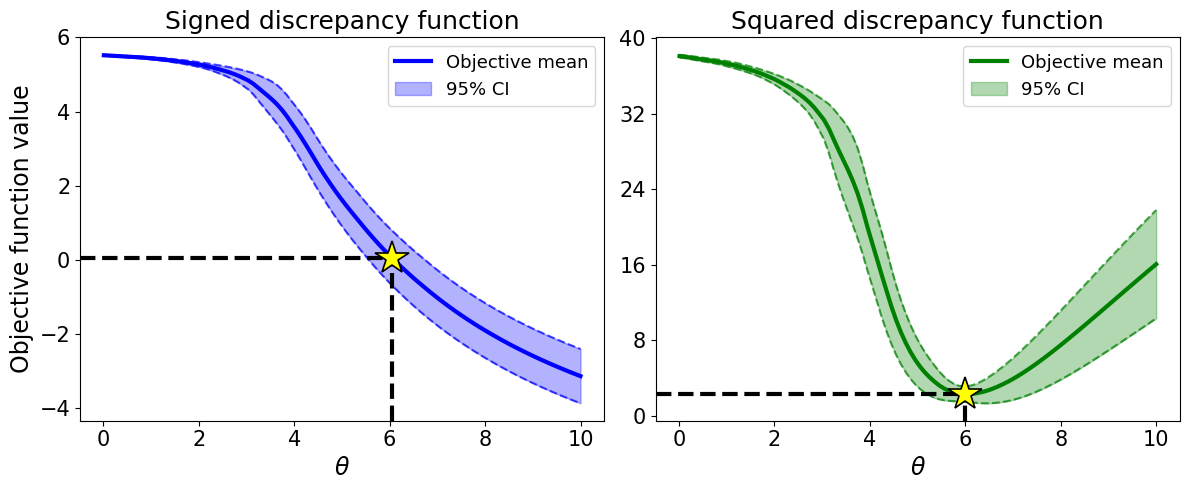}
    \caption{Estimated discrepancy functions for the M/M/1 calibration problem under root finding and minimization. For each $\theta$, the 95\% confidence interval is constructed from 100 macro replications.} 
    \label{siam:fig:mm1_func}
\end{figure}

\begin{figure}[htbp]
    \centering
    \includegraphics[width=0.45\linewidth]{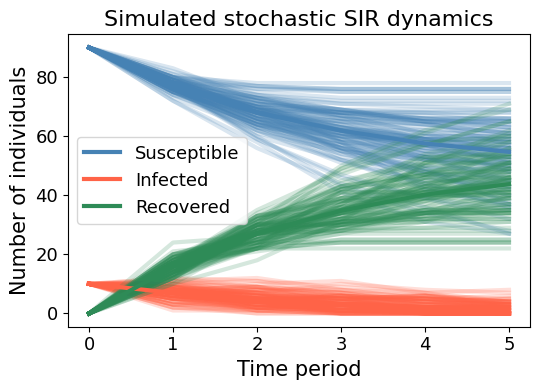}
    \caption{100 simulated realizations of the stochastic SIR model under the described experimental setting.}
    \label{siam:fig:epi_example}
\end{figure}

\begin{figure}[htbp]
    \centering
    \includegraphics[width=0.75\linewidth]{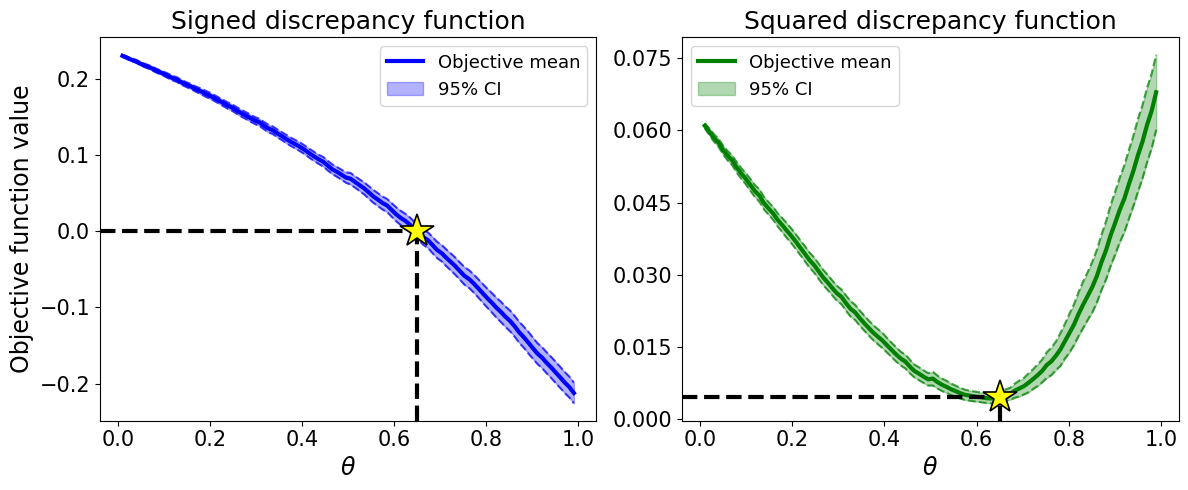}
        \caption{Estimated discrepancy functions for stochastic SIR calibration task under root finding and minimization. For each $\theta$, the 95\% confidence is constructed from 100 macro replications.}
    \label{siam:fig:epi_func}
\end{figure}


\bibliographystyle{unsrt}  
\bibliography{references}

\end{document}